\newcommand{\vol}{\mathrm{vol}}
\newcommand{\E}{\mathbb E}
\newcommand{\var}{\mathrm{var}}
\newcommand{\Normal}{\mathcal N}
\newcommand{\Beta}{\mathrm{Beta}}
\newcommand{\bbR}{\mathbb R}
\newcommand{\bbE}{\mathbb E}
\newcommand{\bbN}{\mathbb N}
\newcommand{\calY}{\mathcal Y}
\newcommand{\calM}{\mathcal M}
\newcommand{\calF}{\mathcal F}
\newcommand{\OneFOne}{{}_1F_1}
\newcommand{\naturalset}{\Omega}
\newcommand{\Gammadist}{\mathrm{Gamma}}
\newcommand{\Binomial}{\mathrm{Binomial}}
\newcommand{\Poisson}{\mathrm{Poisson}}
\newcommand{\CR}{C_\alpha} % Confidence region of the mean
\newcommand{\CRnp}{C_\alpha^{\mathrm{np}}} % confidence region of the natural parameter in natural exponential families
\newcommand{\AR}{A_\alpha} % Acceptance region of the mean
\newcommand{\ARnp}{A_\alpha^{\mathrm{np}}} % Acceptance region of the natural parameter
\newcommand{\loA}{\underline{A}_\alpha}
\newcommand{\upA}{\overline{A}_\alpha}
\newcommand{\FAB}{\mathrm{FAB}}
\newcommand{\MAP}{\mathrm{MAP}}
\newcommand{\statfunc}{\calF}
\newcommand{\dHaus}{d_{\mathrm{H}}}% Hausdorff distance between closed sets
\newcommand{\bX}{\mathbf{X}}
\newcommand{\bx}{\mathbf{x}}
\newcommand{\bbeta}{\boldsymbol{\beta}}
\newcommand{\bY}{\mathbf{Y}}
\DeclareMathOperator*{\argmax}{arg\,max}
\DeclareMathOperator*{\argmin}{arg\,min}
\newtheorem{theorem}{Theorem}[section]
\newtheorem{lemma}{Lemma}[section]
\newtheorem{corollary}{Corollary}[section]
\newtheorem{proposition}{Proposition}[section]
\newtheorem{definition}{Definition}[section]
\newtheorem{remark}{Remark}[section]
\newtheorem{example}{Example}[section]
\newtheorem{assumption}{Assumption}[section]
\newlist{enuminline}{enumerate*}{1}
\setlist[enuminline]{label=(\roman*)}
\Crefname{assumption}{Assumption}{Assumptions}
\title{Bayes-assisted Confidence Regions:\\Focal Point Estimator and Bounded-influence Priors}
\date{}
\author{
  Stefano Cortinovis\\
	Department of Statistics\\
	University of Oxford\\
	\texttt{cortinovis@stats.ox.ac.uk}\\
	\And
  Fran\c cois Caron\\
	Department of Statistics\\
	University of Oxford\\
	\texttt{caron@stats.ox.ac.uk}
}
\begin{document}
\maketitle

\begin{abstract}
	The Frequentist, Assisted by Bayes (FAB) framework constructs confidence regions that leverage prior information about parameter values. FAB confidence regions (FAB-CRs) have smaller volume for values of the parameter that are likely under the prior while maintaining exact frequentist coverage. This work introduces several methodological and theoretical contributions to the FAB framework. For Gaussian likelihoods, we show that the posterior mean of the mean parameter is contained in the FAB-CR. More generally, this result extends to the posterior mean of the natural parameter for likelihoods in the natural exponential family. These results provide a natural Bayes-assisted estimator to be reported alongside the FAB-CR. Furthermore, for Gaussian likelihoods, we show that power-law tail conditions on the marginal likelihood induce robust FAB-CRs that are uniformly bounded and revert to standard frequentist confidence intervals for extreme observations. We translate this result into practice by proposing a class of shrinkage priors for the FAB framework that satisfy this condition without sacrificing analytic tractability. The resulting FAB estimators equal prominent Bayesian shrinkage estimators, including the horseshoe estimator, thereby establishing insightful connections between robust FAB-CRs and Bayesian shrinkage methods.
\end{abstract}

\keywords{Frequentist Coverage \and Robust Bayes \and Shrinkage \and Heavy-tailed priors \and Horseshoe}

%% Introduction
\section{Introduction}
\label{sec:introduction}

Modern large-scale studies, such as RNA-seq screens or web-scale A/B tests, often involve estimating thousands of effects, most of which are expected to be near zero, with only a sparse minority plausibly large.
In such settings, using a classical confidence region (CR) for the effect of interest may be inefficient.
Indeed, the expected volume of such classical CR is the same for small effects and implausibly large effects.
This inefficiency is especially costly when thousands of CRs are reported and interpreted side-by-side.

Pratt's \emph{Frequentist, Assisted by Bayes} (FAB) paradigm \citep{Pratt1961,Pratt1963,Yu2018} addresses this mismatch by finding the CR procedure $C_\alpha(\cdot)$ that minimizes the expected volume
\begin{align}
  R(\CR) &= \int_{\Theta} \E[\vol(\CR(Y))|\theta] \pi(d\theta)
  \label{eq:Bayesexpectedvol}
\end{align}
under some prior $\pi$, subject to the usual frequentist coverage constraint.
Intuitively, the working prior $\pi$ acts as an importance weight, allowing the optimal FAB confidence regions (FAB-CRs) to be narrower where parameters are deemed plausible and wider elsewhere.
This has inspired a range of methodological developments and applications \citep{Brown1995, Puza2006, Farchione2008, Hoff2019, Kabaila2022, Woody2022,Hoff2023}.
An illustration is given in \cref{fig:fab_gaussian_gaussian} for a Gaussian likelihood and prior.
\begin{figure}
  \centering
  \includegraphics[width=\textwidth]{./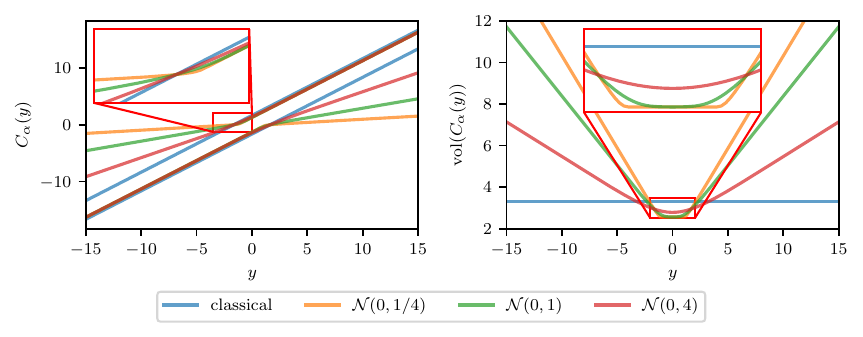}
  \caption{
    Comparison of the $z$-interval and the FAB-CR procedure under the Gaussian prior $\Normal(0, \tau^2)$ for $\tau^2 \in \{1/4,1,4\}$, when $Y\mid \theta\sim \Normal(\theta, 1)$ for $\alpha = 0.1$.
    The left and right panels show the extrema of each CR and their widths as functions of the observed data $y$.
  }
  \label{fig:fab_gaussian_gaussian}
\end{figure}

Existing FAB constructions (see \cref{sec:fabbackground}) still leave two important issues open.
First, the literature does not provide a canonical estimator to report alongside a FAB confidence region; we call this estimator the \textit{focal point}.
Second, while FAB-CRs always satisfy the coverage requirement, their volume may be arbitrarily large in the presence of disagreement between the prior $\pi$ and the data.
For instance, this is the case for the FAB-CR under a Gaussian prior applied to a Gaussian likelihood with known variance, as shown in \cref{fig:fab_gaussian_gaussian} (right).
Although fixes have been proposed in the literature \citep[e.g.,][]{Farchione2008,Hoff2019}, they either rely on employing specific improper priors or use ad hoc modifications of the standard FAB-CR procedure, and a general framework for obtaining uniformly bounded confidence regions is currently missing. We provide a unified treatment that addresses these shortcomings.
\begin{enumerate}[label=(\roman*),itemsep=2pt,topsep=2pt]
  \item \textbf{Uniformly bounded FAB regions.}
    For Gaussian likelihoods, we show that priors with tails heavier than Gaussian guarantee uniformly bounded FAB-CR volume; polynomial tails further make the regions revert to classical $z$-intervals for extreme observations.
  \item \textbf{Focal points and shrinkage estimators.}
    We formalize the \emph{focal point} of nested FAB-CRs and prove it equals the posterior mean under $\pi$ for Gaussian likelihoods;
    a closed-form extension covers the entire natural exponential family.
  \item \textbf{Shrinkage-prior blueprint.}
    We provide a scale-mixture class of priors that yields FAB regions which (a) promote sparsity, (b) are uniformly bounded, and (c) revert to classical intervals when prior--data conflict occurs.
\end{enumerate}

Our approach builds on the existing rich literature on Bayesian robustness \citep{DeFinetti1961,Lindley1968,Strawderman1971,Berger1980,Ohagan1981,Park2008,Caron2008,Carvalho2010,Griffin2010,Griffin2011,Armagan2011,Polson2012,OHagan2012}, and in particular the work of \citet{Dawid1973,Pericchi1992} and \citet{Pericchi1995} on priors with bounded or vanishing influence.

The remainder of this article is organized as follows.
\Cref{sec:fabbackground} provides background on the FAB approach.
In \cref{sec:gaussianlikelihood}, we consider the Gaussian likelihood case, for which we identify the focal point $\widehat\theta^\FAB(y)$, as well as sufficient conditions to obtain uniformly bounded FAB-CRs.
In \cref{sec:shrinkagepriors}, we describe a class of priors that lead to a tractable, robust FAB-CR procedure, and draw connections with the Bayesian shrinkage literature.
In \cref{sec:exponentialfamily}, we consider the more general case of likelihoods in the natural exponential family, and provide an expression for the corresponding focal point.
An application to linear regression is presented in \cref{sec:linearregression}.
Finally, \cref{sec:discussion} briefly discusses related literature and potential avenues for future work.
The Supplementary Material contains the proofs and additional derivations and figures. All sections, figures, and equations in the supplementary material are prefixed with `S' for clarity.

%% FAB background
\section{FAB decision-theoretic framework}
\label{sec:fabbackground}
Let $Y\in\mathcal Y$ have density $f_{\theta}$ w.r.t.~the Lebesgue measure, indexed by a parameter of interest $\theta\in\Theta\subseteq\mathbb R^{d}$.
For a fixed $\alpha\in(0,1)$ we seek a procedure $C_\alpha:\;y\mapsto C_\alpha(y)\subseteq\Theta$ satisfying the \emph{exact coverage} requirement
\begin{align}
  \Pr(\theta \in \CR(Y)\mid \theta=\theta_0) = 1 - \alpha
  \label{eq:coverage}
\end{align}
for any $\theta_0\in\Theta$.
In this context, a natural measure of the efficiency of a confidence region (CR) procedure $\CR$ is its expected volume
\[
  \E[\vol(\CR(Y))|\theta] = \int_\calY \vol(\CR(y)) f_{\theta}(y) dy,
\]
where $\vol(\CR(Y))=\int_{\theta_0\in \CR(Y)} d\theta_0$.
In practice, when one has knowledge about plausible values for the unknown parameter $\theta$, it may be desirable to construct a valid confidence region $\CR$ that attains smaller volumes for values of $\theta$ that are thought to be likely, at the expense of larger volumes for values of $\theta$ that are considered less likely.
Following \citet{Pratt1961,Pratt1963}, this may be achieved through a distribution $\pi$ on $\theta$, treated as a random variable, for which one aims to find the confidence region procedure $\CR$ that minimizes the (Bayes) expected volume \eqref{eq:Bayesexpectedvol} under the (frequentist) constraint \eqref{eq:coverage}.
This approach, termed ``Frequentist, Assisted by Bayes'' by \citet{Yu2018}, has been employed to derive smaller valid confidence regions for parameters of interest in the presence of prior information. Using the Ghosh--Pratt identity \citep{Ghosh1961,Pratt1961}, the Bayes expected volume \eqref{eq:Bayesexpectedvol} can be written as
\[
  R(\CR) = \int_\Theta \Pr(\theta_0 \in \CR(Y))  d\theta_0,
\]
where $\Pr(\theta_0 \in \CR(Y))$ is computed w.r.t.~the marginal density of $Y$ under the prior $\pi$, i.e., $f(y) = \int_\Theta f_\theta(y) \pi(d\theta)$.
By defining the acceptance regions $\AR(\theta_0) = \{y \mid \theta_0 \in \CR(y)\}$, the CR that minimizes $R(\CR)$ may be obtained by choosing, for all $\theta_0 \in \Theta$, the acceptance regions that maximize $\Pr(Y \notin \AR(\theta_0))$ under the coverage constraint \eqref{eq:coverage} and inverting them, i.e.,
\begin{equation}
  \CR(y) = \{\theta_0 \mid y \in \AR(\theta_0)\}.
  \label{eq:confidenceregionCgeneral}
\end{equation}
By the Neyman--Pearson lemma, the optimal acceptance regions are given by
\begin{equation}
  \AR(\theta_0)=\left\{y\mid \log\left(\frac{f(y)}{f_{\theta_0}(y)}\right)\leq k_{\alpha}(\theta_0)\right\},
  \label{eq:acceptanceregion_neymanpearson}
\end{equation}
where $k_{\alpha}(\theta_0)$ is the smallest value such that $\Pr(Y\in \AR(\theta)\mid \theta=\theta_0)=\Pr(\theta\in \CR(Y)\mid \theta=\theta_0)=1-\alpha$.
A self-contained derivation of the steps outlined above is provided in \cref{sec:derivationCgeneral}.

To illustrate the application of the FAB approach, consider the Gaussian likelihood case $Y \mid \theta \sim \Normal(\theta, \sigma^2)$, where $\sigma^2$ is known.
If one believes that $\theta$ is likely to be close to zero, they may wish to apply FAB under a Gaussian prior $\pi = \Normal(0, \tau^2)$, for some $\tau > 0$.
\Cref{fig:fab_gaussian_gaussian} compares the resulting FAB-CRs with the standard $z$-interval for the mean of a Gaussian likelihood with known variance, which is given by $y \pm \sigma \Phi^{-1}(1-\alpha/2)$, where $\Phi^{-1}$ is the quantile function of a standard normal random variable.

As desired, the FAB-CRs achieve smaller volume than the standard interval when $y$ is close to zero, and the gain in efficiency around the origin is controlled by the choice of $\tau$.
However, while FAB-CRs have correct frequentist coverage, their volume under the Gaussian prior grows without bound as the disagreement between the prior and observed data increases, i.e., $\vol(\CR(y))\to \infty$ as $|y|\to\infty$.
That is, in general, FAB may result in a CR whose volume is not uniformly bounded, which may be undesirable in practice, as noted in \cref{sec:introduction}.

\paragraph*{Focal point estimator.}
The construction of valid confidence regions for parameters of interest is deeply rooted in the statistical estimation of the same quantities.
For instance, many standard CLT-based confidence intervals follow from the asymptotic normality of the corresponding maximum likelihood estimator, which is then contained in the resulting interval at all confidence levels $(1 - \alpha)$, thereby representing its \textit{focal point}.
More formally, assume that the nested regions $\CR(y)$ concentrate around a single point as the error level $\alpha$ increases; for $d=1$, a sufficient condition is $\lim_{\alpha\to 1}\vol(\CR(y))=0$.
When the latter holds for a FAB-CR, we denote its \textit{focal point} as the intersection of the nested closed confidence regions
\begin{equation}
  \widehat\theta^\FAB(y) = \bigcap_{\alpha\in(0,1)} \CR(y),
  \label{eq:focalpointFABCR}
\end{equation}
which is well-defined; see \citep{Molchanov2017,Taraldsen1996}. As we will see in \cref{sec:exponentialfamily}, alternative definitions of $\widehat\theta^\FAB(y)$ may be considered when $\bigcap_{\alpha\in(0,1)} \CR(y)$ is a non-singleton set.
In this context, the random variable $\widehat\theta^\FAB(Y)$, which always lies within the FAB-CR $\CR(Y)$, may naturally be used as a Bayes-assisted estimator for the parameter of interest $\theta$ to be reported alongside the FAB-CR.

\paragraph*{Notations.}
For two real-valued functions $f$ and $g$ defined on $\bbR$, we write $f(x)\sim g(x)$ as $x\to\infty$ for $\lim_{x\to\infty} \frac{f(x)}{g(x)}=1$.
For a subset $C$ of $\bbR$ and $y\in\bbR$, $C+y=\{x+y \mid x \in C\}$.
For two closed subsets $C_1$ and $C_2$ of $\bbR$, let $\dHaus$ be the Hausdorff distance defined by $\dHaus(C_1,C_2)=\max\{\sup_{x\in C_1}\inf_{y\in C_2} |x-y|, \sup_{y\in C_2}\inf_{x\in C_1} |x-y| \}$.
For a collection of closed subsets $(C_1(y))_{y\in\bbR}$ and a closed subset  $C_2$ of $\bbR$, we write $\lim_{y\to\infty}C_1(y)= C_2$ for $\lim_{y\to\infty}\dHaus(C_1(y),C_2)= 0$.
When $C_2=[a,b]$ is a closed interval, for some $a<b$, $\lim_{y\to\infty}C_1(y)= C_2$ if, for all $\epsilon\in(0,\frac{b-a}{2})$, there is $y_0$ such that $[a+\epsilon,b-\epsilon]\subseteq C_1(y) \subseteq [a-\epsilon,b+\epsilon]$ for all $y>y_0$.

%% Gaussian likelihood
\section{Gaussian likelihood with known variance}
\label{sec:gaussianlikelihood}
Consider here a Gaussian likelihood,
\begin{equation}
  f_\theta(y)=\frac{1}{\sqrt{2\pi\sigma^2}}\exp\left(-\frac{(y-\theta)^2}{2\sigma^2}\right)
  \label{eq:gaussian_likelihood}
\end{equation}
where $y\in\bbR$, $\theta\in\bbR$, and $\sigma>0$ is assumed to be known.
Given that the parameter of interest $\theta$ is a scalar, the corresponding FAB-CR $C_\alpha(y) \subseteq \bbR$ is one-dimensional, and the volume $\vol(C_\alpha(y))$ is simply its length.
However, for consistency with the notation used elsewhere, we will continue to refer to $\vol(C_\alpha(y))$ as the volume of the confidence region, even in this univariate case.
\begin{assumption}
  \label{assumpt:priorpi}
  Assume that the prior distribution $\pi(d\theta)$ is a positive Radon measure on $\bbR$, with full support, and such that
  $0<\int_\bbR f_\theta(y)\pi(d\theta)<\infty$ for any $y\in\bbR$. 
\end{assumption}
\begin{remark}
  For most priors discussed in this paper, $\pi(d\theta)$ is a proper probability distribution on $\bbR$, with $\int_\bbR \pi(d\theta)=1$.
  Nonetheless, \cref{assumpt:priorpi} also allows for the use of improper priors with $\int \pi(d\theta)=\infty$, such as $\pi(d\theta)=d\theta$.
  In both cases, the finiteness of $f(y) = \int_\bbR f_\theta(y)\pi(d\theta)$ guarantees that the posterior distribution $\pi(d\theta\mid y)=f_\theta(y)\pi(d\theta) / f(y)$ is proper.
  However, note that $f(y)$ is not a proper density function if $\pi(d\theta)$ is not a probability distribution. In the sequel, even when $\pi$ is improper, we refer to the corresponding $f(y)$ as the marginal likelihood of $y$ under $\pi(d\theta)$ with a slight abuse of language.
  In the improper case, the expected volume \eqref{eq:Bayesexpectedvol} may be infinite for all confidence regions satisfying the coverage condition; one can, however, equivalently minimize the expected volume difference with a classical interval $\widetilde R(C_\alpha)=\int_{\Theta} \left\{\E[\vol(\CR(Y))|\theta]-2\sigma \Phi^{-1}(1-\alpha/2)\right\} \pi(d\theta)$ \citep{Farchione2008}.
  This leads to confidence regions of the same form \eqref{eq:confidenceregionCgeneral} and \eqref{eq:acceptanceregion_neymanpearson}.
  Finally, the degenerate case $\pi(d\theta)=\delta_{\theta_1}$ for some $\theta_1\in\bbR$ has been covered in detail by \citet{Pratt1961,Pratt1963}, and, hence, is excluded from \cref{assumpt:priorpi}.
\end{remark}
Denote the log-likelihood of $y$ and the log-marginal likelihood of $y$ under some prior $\pi$ satisfying \cref{assumpt:priorpi} by $\ell_{\theta_0}(y)=\log f_{\theta_0}(y)$ and $\ell(y)=\log f(y)$, respectively.
Let $\lambda_{\theta_0}(y) = \ell(y)-\ell_{\theta_0}(y)$ denote the log-likelihood ratio, and recall that the FAB confidence region $\CR(y)$ is obtained by inversion of the acceptance region $\AR(\theta_0)=\left \{ y\mid \lambda_{\theta_0}(y)\leq k_\alpha(\theta_0)\right \}$ through \cref{eq:confidenceregionCgeneral}.
Let
\begin{align}
  \widehat\theta(y)=\E[\theta|y]=y+\sigma^2 \ell'(y)
  \label{eq:posteriormean}
\end{align}
be the posterior mean of $\theta$ given $y$ under the prior $\pi$.
The following result states that the acceptance regions $\AR(\theta_0)$ are intervals, whose bounds are continuous and differentiable functions, and that the FAB-CR $\CR(y)$ contains the posterior mean under the prior $\pi$.
That is, for Gaussian likelihoods, the FAB focal point $\widehat\theta^\FAB(y)$ coincides with the posterior mean $\widehat\theta(y)$.
\begin{theorem}
    \label{thm:postmeaninCI}
  Assume that $\pi(d\theta)$ satisfies \cref{assumpt:priorpi}.
  Then, for any $\theta_0\in\bbR$, the log-likelihood ratio $\lambda_{\theta_0}(y) = \ell(y)-\ell_{\theta_0}(y)$ is a strictly convex function in $y$ with
  $$
    \arg\min_{y\in\bbR}\lambda_{\theta_0}(y) = \widehat\theta^{-1}(\theta_0).
  $$
  Let $\alpha\in(0,1)$. The acceptance region $\AR(\theta_0)$ is an interval $[\loA(\theta_0),\upA(\theta_0)]$ with $\loA(\theta_0)<\widehat\theta^{-1}(\theta_0)<\upA(\theta_0)$, and where $\loA(\theta_0)$ and $\upA(\theta_0)$ are the unique pair satisfying
  \begin{align}
    \lambda_{\theta_0}(\upA(\theta_0))&=\lambda_{\theta_0}(\loA(\theta_0)), \label{eq:boundarycondition_1} \\
    \Phi\left(\frac{\upA(\theta_0)-\theta_0}{\sigma}\right)-\Phi\left(\frac{\loA(\theta_0)-\theta_0}{\sigma}\right)&=1-\alpha. \nonumber
  \end{align}
  Define
  \begin{align}
    \label{eq:wfunction}
    w_\alpha(\theta_0)=\frac{1}{\alpha}\Phi\left(\frac{\loA(\theta_0)-\theta_0}{\sigma}\right)\in(0,1),
  \end{align}
  where $\Phi(z)$ is the CDF of a standard normal random variable.
  Then, for any $y\in\bbR$, the confidence region $\CR(y)$ can be expressed in terms of $w_\alpha(\theta_0)$ as
  \begin{align}
    \label{eq:confidenceregionCw}
    \CR(y) = \{\theta_0\mid \loA(\theta_0)=\theta_0-\sigma \Phi^{-1}(1-\alpha w_\alpha(\theta_0)) \leq y \leq \theta_0 + \sigma \Phi^{-1}(1-\alpha(1-w_\alpha(\theta_0))) = \upA(\theta_0) \},
  \end{align}
  where the functions $\loA(\theta_0)$, $\upA(\theta_0)$ and $w_\alpha(\theta_0)$ are all continuously differentiable on $\bbR$.
  Lastly, the focal point of the FAB-CR $\CR(y)$ is uniquely defined as in \cref{eq:focalpointFABCR} and is given by
  $$
    \widehat\theta^\FAB(y)=\widehat\theta(y)=\mathbb E[\theta\mid y].
  $$
\end{theorem}
\Cref{thm:postmeaninCI} reparameterizes the acceptance regions $\AR(\theta_0)$ of the size-$\alpha$ tests inducing the FAB-CR $\CR(y)$ in terms of the function $w_\alpha(\theta_0)$, which is defined in \cref{eq:wfunction}. This quantity, called the \textit{spending function} by \citet{Hoff2019} and the \textit{tail function} by \citet{Puza2006}, determines how the error budget $\alpha$ is split between the left and right tails of $\AR(\theta_0)$ so as to minimize the Bayes expected risk \eqref{eq:Bayesexpectedvol}.
Computationally, the FAB-CR $\CR(y)$ is obtained via \cref{eq:confidenceregionCw} by evaluating $w_\alpha(\theta_0)$ through a root-finding procedure that solves \cref{eq:boundarycondition_1} expressed in terms of $w_\alpha(\theta_0)$ as \looseness=-1
\begin{equation}
  \lambda_{\theta_0}\left(\theta_0 + \sigma \Phi^{-1}(1-\alpha(1-w_\alpha(\theta_0)))\right) = \lambda_{\theta_0}\left(\theta_0-\sigma \Phi^{-1}(1-\alpha w_\alpha(\theta_0))\right). \label{eq:boundarycondition_1_w}
\end{equation}
The main result of \cref{thm:postmeaninCI} can be nicely illustrated through the $p$-value function \citep{Fraser2019,Schweder2016} (also known as confidence curve \citep{Birnbaum1961} or significance function \citep{Fraser1991})
\begin{equation*}
  p_y(\theta_0)=\sup \{\alpha\in(0,1)\mid \theta_0\in \CR(y)\}.
\end{equation*}
As $\CR(y)=\{\theta_0 \mid p_y(\theta_0)\geq \alpha\} $, this function allows one to visualize the nested confidence regions over the whole confidence range; additionally, the focal point is a maximum of the $p$-value function, with $p_y(\widehat\theta^\FAB(y))=1$. \cref{fig:fab_gaussian_gaussian_cc} compares the $p$-value functions corresponding to the FAB-CRs obtained under the same Gaussian priors as in \cref{fig:fab_gaussian_gaussian} against the one associated with the standard $z$-interval for three observed values of $y$.
\begin{figure}
  \centering
  \includegraphics[width=.9\textwidth]{./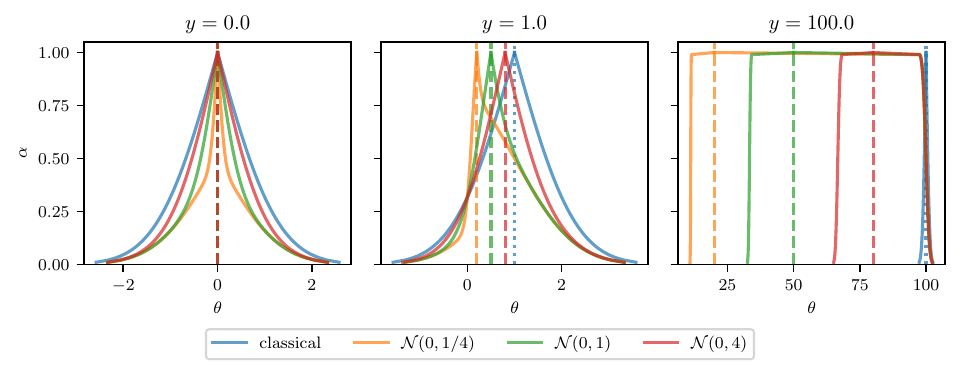}
  \caption{$p$-value functions for the classical $z$-interval and the FAB-CR procedure under the same priors and likelihood of \cref{fig:fab_gaussian_gaussian} when observing $y \in \{0, 1, 100\}$.}
  \label{fig:fab_gaussian_gaussian_cc}
\end{figure}
The dashed vertical lines indicate the posterior means under the three priors, while the dotted vertical lines indicate the MLE, $y$, which is the focal point of the $z$-interval.
The posterior means do indeed coincide with the focal points of the FAB-CRs, as stated in \cref{thm:postmeaninCI}.
This fact can be generalized to likelihoods in a natural exponential family, as shown in \cref{sec:exponentialfamily}.
Moreover, \cref{fig:fab_gaussian_gaussian_cc} shows that the Gaussian FAB-CRs lead to significantly smaller regions than the $z$-interval for values of $y$ that are likely under the prior.
However, as disagreement between the Gaussian prior and the data increases, the volume of the FAB-CRs grows unbounded.
We now consider sufficient conditions on the prior $\pi(d\theta)$ for the confidence region to be uniformly bounded.

\begin{theorem}\label{thm:robustCR}
  Assume that $\pi(d\theta)$ satisfies \cref{assumpt:priorpi}.
  Additionally, assume that the marginal likelihood $f(y)$ under $\pi(d\theta)$ has tails of the form
  \begin{align}
    f(y) &\sim \gamma|y|^{-\delta}\exp\left(-\frac{\kappa}{\sigma} |y|\right)\text{ as } y\to\pm\infty\label{eq:regvar1}
  \end{align}
  for some exponent $\delta\geq 0$, some parameter $\kappa \geq0$ and some constant $\gamma>0$, and that the derivative of $f(y)\exp\left(\kappa|y|/\sigma\right)$, defined on $\bbR\backslash\{0\}$, is ultimately monotone as $y\to \pm \infty$.
  Let $\alpha\in (0,1)$. Then,
  \begin{align*}
    \sup_{y\in\bbR} \vol(\CR(y))<\infty
  \end{align*}
  and
  \begin{align}
    \lim_{y\to\infty}  \CR(y)-y &= \left[-\sigma \Phi^{-1}(1-\alpha c_\alpha)\ ,\ \sigma \Phi^{-1}(1-\alpha (1-c_\alpha)) \right], \label{eq:lim_infCR}\\
    \lim_{y\to-\infty} \CR(y)-y &= \left[-\sigma \Phi^{-1}(1-\alpha (1-c_\alpha))\ ,\ \sigma \Phi^{-1}(1-\alpha c_\alpha) \right], \label{eq:lim_supCR}
  \end{align}
  where the convergence is with respect to the Hausdorff distance on closed subsets of $\mathbb R$, $c_\alpha=g_\alpha^{-1}(-2\kappa)\in(0,\frac{1}{2}]$, where $g_\alpha^{-1}:\bbR\to(0,1)$ is the inverse of the strictly increasing function $g_\alpha:(0,1)\to\bbR$, defined as
  \begin{align}
    g_\alpha(\omega)=\Phi^{-1}(\alpha\omega)-\Phi^{-1}(\alpha(1-\omega)) .
    \label{eq:galpha}
  \end{align}
  Moreover, the focal point $\widehat\theta^\FAB(y)$, which coincides with the posterior mean by \cref{thm:postmeaninCI}, satisfies
  \begin{equation*}
    \lim_{y\to\infty} ( y-\widehat\theta^\FAB(y))=\lim_{y\to - \infty} (\widehat\theta^\FAB(y)-y) = \kappa\sigma.
  \end{equation*}
\end{theorem}

\begin{remark}
  \cref{thm:robustCR} may be slightly generalized by replacing the constant $\gamma$ with a slowly varying function (e.g., log, power log, etc.), in which case the proof applies similarly.
\end{remark}
As described by \cref{eq:regvar1}, \cref{thm:robustCR} unveils the crucial role of the tails of the marginal likelihood $f(y)$ in ensuring that the corresponding FAB-CRs are uniformly bounded.
In particular, it indicates that the prior $\pi$ should be chosen so that $f(y)$ exhibits heavier tails than a Gaussian random variable.
This class includes exponential ($\kappa > 0$, $\delta = 0$), power-law ($\kappa=0$, $\delta > 1$), and improper ($\kappa = \delta = 0$) tails.
While all these choices successfully result in uniformly bounded FAB-CRs, a case that deserves special attention is the one with $\kappa = 0$. In this case, $c_\alpha = g_\alpha^{-1}(0) = 1 / 2$ for any $\alpha \in (0, 1)$, and we have the following corollary.
\begin{corollary}\label{cor:vanishing_influence}
  Assume that $\pi(d\theta)$ satisfies \cref{assumpt:priorpi} and that the marginal likelihood $f(y)$ under $\pi(d\theta)$ has tails of the form \eqref{eq:regvar1} with $\kappa = 0$.
  Then, as $|y|\to\infty$, the influence of the prior on the focal point $\widehat\theta^\FAB(y)$ vanishes, and the FAB confidence region $\CR(y)$ converges to the standard $z$-interval.
  Formally,
  \begin{equation*}
    \lim_{y\to\pm\infty} \widehat\theta^\FAB(y)-y = 0,
  \end{equation*}
  and, for any $\alpha\in(0,1)$,
  \begin{equation*}
    \lim_{y\to\pm\infty}  \CR(y)-y = \left[-\sigma \Phi^{-1}(1-\alpha/2)\ ,\ \sigma \Phi^{-1}(1-\alpha/2) \right],
  \end{equation*}
  where the convergence is with respect to the Hausdorff distance on closed subsets of $\mathbb R$.
\end{corollary}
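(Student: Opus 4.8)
The plan is to obtain the corollary as a direct specialization of \cref{thm:robustCR} to the case $\kappa = 0$. Under the stated hypotheses the tail condition \eqref{eq:regvar1} reads $f(y)\sim\gamma|y|^{-\delta}$ as $y\to\pm\infty$, and the accompanying regularity condition of \cref{thm:robustCR}, on the ultimate monotonicity of the derivative of $f(y)\exp(\kappa|y|/\sigma)=f(y)$, is inherited; hence that theorem applies with $\kappa = 0$, and it suffices to evaluate the constants appearing in its conclusion.

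First I would compute $c_\alpha = g_\alpha^{-1}(-2\kappa) = g_\alpha^{-1}(0)$. Since $g_\alpha\colon(0,1)\to\bbR$ defined in \eqref{eq:galpha} is a strictly increasing bijection, $c_\alpha$ is the unique $\omega\in(0,1)$ solving $g_\alpha(\omega)=\Phi^{-1}(\alpha\omega)-\Phi^{-1}(\alpha(1-\omega))=0$; by injectivity of $\Phi^{-1}$ this forces $\alpha\omega=\alpha(1-\omega)$, i.e.\ $\omega=1/2$. Thus $c_\alpha = 1/2$ for every $\alpha\in(0,1)$, irrespective of $\gamma$ and $\delta$.

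Next I would substitute $c_\alpha=1/2$ into \eqref{eq:lim_infCR} and \eqref{eq:lim_supCR}. Since $1-\alpha c_\alpha = 1-\alpha(1-c_\alpha) = 1-\alpha/2$, both limits collapse to the symmetric interval $[-\sigma\Phi^{-1}(1-\alpha/2),\ \sigma\Phi^{-1}(1-\alpha/2)]$, so $\CR(y)$ converges in Hausdorff distance to the recentred $z$-interval $y\pm\sigma\Phi^{-1}(1-\alpha/2)$ as $|y|\to\infty$. For the focal point, setting $\kappa=0$ in \eqref{eq:lim_focal_point} gives $\lim_{y\to\infty}(y-\widehat\theta^\FAB(y)) = \lim_{y\to-\infty}(\widehat\theta^\FAB(y)-y) = 0$, i.e.\ $\widehat\theta^\FAB(y)-y\to 0$ as $|y|\to\infty$, which via \cref{thm:postmeaninCI} states that the posterior mean under $\pi$ asymptotically coincides with the MLE.

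There is no substantive obstacle here; the argument is essentially a one-line deduction once \cref{thm:robustCR} is available. The only point requiring a moment's care is that $\kappa=0$ sits on the boundary of the range $c_\alpha\in(0,\Phi(-\kappa))$ recorded in \cref{thm:robustCR} (which degenerates to $(0,1/2)$), so $c_\alpha=1/2$ should be read off by direct evaluation of $g_\alpha^{-1}(0)$---equivalently, from $g_\alpha(1/2)=0$ together with the strict monotonicity of $g_\alpha$---rather than from that inclusion.
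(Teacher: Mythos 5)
Your proposal is correct and is exactly the argument the paper intends: the corollary is stated as an immediate specialization of \cref{thm:robustCR} to $\kappa=0$, with $c_\alpha=g_\alpha^{-1}(0)=1/2$ obtained from $g_\alpha(1/2)=0$ and the strict monotonicity of $g_\alpha$, after which \eqref{eq:lim_infCR}, \eqref{eq:lim_supCR} and \eqref{eq:lim_focal_point} collapse to the stated symmetric limits. Your side remark that $\kappa=0$ makes the recorded range $(0,\Phi(-\kappa))$ degenerate to $(0,1/2)$, so that $c_\alpha=1/2$ must be read off from $g_\alpha^{-1}(0)$ directly, is a fair and careful observation.
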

The effect of the tail parameter $\kappa$ on the resulting FAB-CR mirrors that on the corresponding credible interval \citep{Dawid1973,Pericchi1995}.
For this reason, in the sequel, we refer to the cases $\kappa = 0$ and $\kappa > 0$ with the terms \textit{vanishing} and \textit{bounded} influence, respectively, as in \citet{Pericchi1995}.

As highlighted by \cref{thm:wbounded} in the proof of \cref{thm:robustCR} in the supplementary material, the limiting behavior of the weight function $w_\alpha(\theta)$ as $|\theta| \to \infty$ plays a crucial role in determining the behavior of the FAB-CRs through \cref{thm:robustCR}.
\Cref{fig:fab_gaussian_w} shows the weight functions $w_\alpha(\theta)$ for the Gaussian prior, as well as for other choices of priors with vanishing and bounded influence that will be formally introduced in \cref{sec:shrinkagepriors}.
Priors for which $w_\alpha(\theta)$ is bounded away from $0$ and $1$ lead to uniformly bounded FAB-CRs, which revert to the standard $z$-interval as $|y| \to \infty$ if $w_\alpha(\theta)$ converges to $1/2$.
\begin{figure}
  \centering
  \includegraphics[width=0.8\textwidth]{./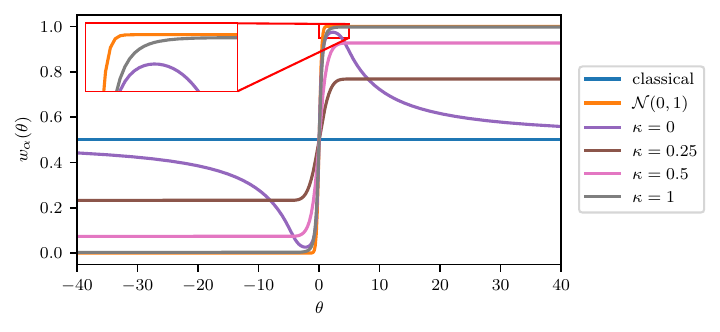}
  \caption{Weight function $w_\alpha(\theta)$ associated with the FAB-CRs under the Gaussian prior and other priors with bounded and vanishing influence. The horizontal line at $1/2$ represents the implicit $w_\alpha(\theta)$ for the $z$-interval.}
  \label{fig:fab_gaussian_w}
\end{figure}
By providing a general framework for obtaining uniformly bounded CRs, \cref{thm:robustCR} generalizes some previous attempts in this direction, such as the one recalled below.
\begin{example}
  \citet{Farchione2008} proposed to use the improper prior $\pi(d\theta)=\gamma d\theta +\delta_0$, with $\gamma>0$, within the FAB framework, showing that such a choice leads to robust FAB-CRs. Indeed, the marginal likelihood under $\pi$ is given by $f(y)= \gamma+ f_0(y)$, with $f(y)\to \gamma$ as $|y|\to\infty$, thereby satisfying assumption \eqref{eq:regvar1} for $\kappa = \delta = 0$.
\end{example}

We conclude with an interpretation of the limiting intervals in \cref{eq:lim_infCR,eq:lim_supCR} as FAB-CRs under specific improper priors; see also \cref{sec:exp_tilt_limits}.
\begin{remark}
  For $\xi\in\bbR$, the improper exponential-tilt prior
  \[
    \pi_\xi(d\theta)=\exp\left(-\frac{\xi\theta}{\sigma}\right)\,d\theta
  \]
  yields a constant-spending FAB procedure with $w_{\alpha,\xi}=g_\alpha^{-1}(2\xi)$, focal point $y-\sigma\xi$, and
  \[
    \CR^{(\xi)}(y)
    =
    \left[
    y-\sigma \Phi^{-1}(1-\alpha(1-w_{\alpha,\xi})),
    \,
    y+\sigma \Phi^{-1}(1-\alpha w_{\alpha,\xi})
    \right],
  \]
  which reduces to the standard $z$-interval when $\xi=0$ (flat improper prior). 
  The limiting intervals in \cref{thm:robustCR} are precisely these constant-spending FAB intervals: for $\kappa=0$ they reduce to the standard $z$-interval, while for $\kappa>0$,
  \[
    \lim_{y\to\infty}\CR(y)-y=\CR^{(\kappa)}(0),
    \qquad
    \lim_{y\to-\infty}\CR(y)-y=\CR^{(-\kappa)}(0).
  \]
  Thus, under prior--data conflict, the robust FAB procedure asymptotically behaves like a constant-spending one generated by the one-sided exponential tilt matching the relevant tail.
\end{remark}

%% Robust FAB with shrinkage priors
\section{Shrinkage priors for robust FAB confidence regions}
\label{sec:shrinkagepriors}
We now discuss shrinkage priors for the Gaussian likelihood model \eqref{eq:gaussian_likelihood} that yield robust and tractable FAB-CRs.
These priors shrink the FAB focal point and confidence region toward zero when the data are compatible with sparsity, while their tails prevent the unbounded-volume behavior seen under Gaussian priors.
To achieve this, we focus on priors whose marginal likelihood $f(y)$ satisfies the tail conditions of \cref{thm:robustCR} and is available in closed form, so that the weight function $w_\alpha(\theta_0)$ can be computed efficiently via \cref{eq:boundarycondition_1_w}.

We start by recording two useful invariance properties of the FAB weight function $w_\alpha(\theta_0)$.
\begin{proposition}\label{prop:w_properties}
    Assume that $\pi(d\theta)$ satisfies \cref{assumpt:priorpi}.
    Additionally, assume that $\pi(d\theta)$ depends on $\sigma$ as a scale parameter, i.e., $\pi(d\theta) = \pi(d\theta; \sigma)$ with  $\pi(A; \sigma) = \pi(A/\sigma; 1)$ for all $A \in \mathcal{B}(\mathbb{R})$, where $A/\sigma := \{x/\sigma : x \in A\}$.
    Define $w_\alpha(\theta_0; \sigma)$ as in \cref{eq:wfunction} for prior $\pi$.
    Then, for any $\theta_0 \in \mathbb{R}$,
    \begin{equation*}
        w_\alpha(\theta_0; \sigma) = w_\alpha\left(\theta_0/\sigma; 1 \right).
    \end{equation*}
    Moreover, if $\pi(d\theta; 1)$ is symmetric, i.e., $\pi(A;1) = \pi(-A;1)\ \forall A \in \mathcal{B}(\mathbb{R})$, then $w_\alpha(-\theta_0; 1) = 1 - w_\alpha(\theta_0; 1)$.
\end{proposition}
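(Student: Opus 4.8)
The plan is to reduce both identities to elementary statements about the endpoints $\loA(\theta_0)$ and $\upA(\theta_0)$ of the Neyman--Pearson acceptance intervals, whose existence, uniqueness, strict convexity of the underlying $\lambda_{\theta_0}$, and smoothness are already delivered by \cref{thm:postmeaninCI}, and then to exploit the exact scale/reflection invariance of the Gaussian location model. Throughout, I write $f^{(\sigma)}$, $\ell^{(\sigma)}=\log f^{(\sigma)}$, $\ell^{(\sigma)}_{\theta_0}$ and $\lambda^{(\sigma)}_{\theta_0}=\ell^{(\sigma)}-\ell^{(\sigma)}_{\theta_0}$ for the marginal likelihood, its logarithm, the Gaussian log-likelihood, and the log-likelihood ratio in the model with likelihood variance $\sigma^2$ and prior $\pi(\cdot;\sigma)$, and $\phi$ for the standard normal density.

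First I would record how these objects transform under the scale assumption $\pi(\theta;\sigma)=\sigma^{-1}\pi(\theta/\sigma;1)$. The substitution $u=\theta/\sigma$ gives $f^{(\sigma)}(y)=\int \sigma^{-1}\phi((y-\theta)/\sigma)\,\sigma^{-1}\pi(\theta/\sigma;1)\,d\theta=\sigma^{-1}f^{(1)}(y/\sigma)$, hence $\ell^{(\sigma)}(y)=-\log\sigma+\ell^{(1)}(y/\sigma)$; the Gaussian log-likelihood satisfies the matching identity $\ell^{(\sigma)}_{\theta_0}(\sigma z)=-\log\sigma+\ell^{(1)}_{\theta_0/\sigma}(z)$, so the logarithms $-\log\sigma$ cancel and $\lambda^{(\sigma)}_{\theta_0}(\sigma z)=\lambda^{(1)}_{\theta_0/\sigma}(z)$ for every $z\in\bbR$. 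Now I would push this through the acceptance-region construction: under $\theta=\theta_0$ we have $Y/\sigma\sim\Normal(\theta_0/\sigma,1)$, so $\sigma\,\AR^{(1)}(\theta_0/\sigma)$ has coverage probability $1-\alpha$, and by the previous display it equals $\{y:\lambda^{(\sigma)}_{\theta_0}(y)\le k_\alpha^{(1)}(\theta_0/\sigma)\}$. Since \cref{thm:postmeaninCI} identifies $\AR^{(\sigma)}(\theta_0)$ as exactly the interval cut out of the strictly convex $\lambda^{(\sigma)}_{\theta_0}$ at the smallest threshold achieving coverage $1-\alpha$, this forces $k_\alpha^{(\sigma)}(\theta_0)=k_\alpha^{(1)}(\theta_0/\sigma)$ and hence $\loA^{(\sigma)}(\theta_0)=\sigma\,\loA^{(1)}(\theta_0/\sigma)$ and $\upA^{(\sigma)}(\theta_0)=\sigma\,\upA^{(1)}(\theta_0/\sigma)$. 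Substituting the first of these into the definition \eqref{eq:wfunction} gives $w_\alpha(\theta_0;\sigma)=\alpha^{-1}\Phi\bigl((\sigma\loA^{(1)}(\theta_0/\sigma)-\theta_0)/\sigma\bigr)=\alpha^{-1}\Phi\bigl(\loA^{(1)}(\theta_0/\sigma)-\theta_0/\sigma\bigr)=w_\alpha(\theta_0/\sigma;1)$, which is the first claim.

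For the second claim I would set $\sigma=1$ and use symmetry of $\pi(\cdot;1)$ about zero. Substituting $u=-\theta$ and using that $\phi$ and $\pi(\cdot;1)$ are even shows $f^{(1)}$ is even, hence so is $\ell^{(1)}$; combined with the elementary identity $\ell^{(1)}_{-\theta_0}(-y)=\ell^{(1)}_{\theta_0}(y)$ for the Gaussian term, this yields $\lambda^{(1)}_{-\theta_0}(-y)=\lambda^{(1)}_{\theta_0}(y)$. Exactly as above, the uniqueness of the acceptance interval at the minimal coverage threshold (together with the fact that $-Y\sim\Normal(\theta_0,1)$ when $\theta=-\theta_0$) forces $\AR^{(1)}(-\theta_0)=-\AR^{(1)}(\theta_0)$, i.e. $\loA^{(1)}(-\theta_0)=-\upA^{(1)}(\theta_0)$. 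Using $\Phi(-x)=1-\Phi(x)$ and the expression $\upA^{(1)}(\theta_0)=\theta_0+\Phi^{-1}(1-\alpha(1-w_\alpha(\theta_0;1)))$ from \eqref{eq:confidenceregionCw}, I get $w_\alpha(-\theta_0;1)=\alpha^{-1}\Phi(\loA^{(1)}(-\theta_0)+\theta_0)=\alpha^{-1}\bigl(1-\Phi(\upA^{(1)}(\theta_0)-\theta_0)\bigr)=\alpha^{-1}\bigl(1-(1-\alpha(1-w_\alpha(\theta_0;1)))\bigr)=1-w_\alpha(\theta_0;1)$, as desired.

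The one delicate point—the main obstacle—is transferring the rescaling (respectively, reflection) identity from the log-likelihood ratio to the endpoints $\loA,\upA$: one must argue that the rescaled (reflected) interval is genuinely the Neyman--Pearson acceptance region of the transformed problem, which relies on the structural facts from \cref{thm:postmeaninCI} that the acceptance region is an interval, that it is the sublevel set of a strictly convex function, and that the threshold $k_\alpha$ is the minimal one attaining coverage $1-\alpha$. Once these are invoked, the remaining manipulations are just changes of variables and the reflection identity $\Phi(-x)=1-\Phi(x)$.
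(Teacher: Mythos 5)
Your proposal is correct and follows essentially the same route as the paper: establish the scaling (resp.\ reflection) identity for the marginal likelihood, transfer it to the endpoints $\loA,\upA$ of the acceptance interval via the uniqueness of the Neyman--Pearson interval, and then substitute into \eqref{eq:wfunction}. The only cosmetic difference is that the paper verifies that the candidate transformed endpoints satisfy the two defining equations \eqref{eq:ulcond1}--\eqref{eq:ulcond2}, while you derive the same endpoint identities from the invariance of $\lambda_{\theta_0}$ together with the minimal-threshold characterization — the same uniqueness fact stated in two equivalent ways.
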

Throughout this section, when $\pi(d\theta)$ is absolutely continuous w.r.t.~the Lebesgue measure, we denote its density by $\pi(\theta)$, with some abuse of notation.
In general, obtaining the FAB-CR $\CR(y)$ requires evaluating the weight function $w_\alpha(\theta_0;\sigma)$ for a grid of values of $\theta_0$.
In practice, \cref{prop:w_properties} often allow us to compute $w_\alpha(\theta_0;\sigma)$ only for $\theta_0 > 0$ and $\sigma = 1$.

We focus on priors defined as scale mixtures of normals \citep{Andrews1974},
\begin{equation}
    \pi(\theta) = \int_0^\infty \mathcal{N}(\theta; 0, \sigma^2 \tau^2) d F_{\tau^2}(\tau^2), \label{eq:gaussian_scale_mixture}
\end{equation}
where $F_{\tau^2}(\tau^2)$ is a CDF on $[0, \infty)$ that does not depend on $\sigma$.
Such priors are symmetric, satisfy the scale property in \cref{prop:w_properties}, and encompass several priors used in the literature.
By defining $\tau^2$ as a discrete random variable taking values $0$ and $\tau_0^2$, one recovers the spike and slab prior used by \citet{Hoff2019}, whose marginal likelihood $f(y)$ is tractable, but does not satisfy the assumptions of \cref{thm:robustCR}.
Instead, taking $\tau^2$ to be an inverse-gamma random variable induces a Student-$t$ prior, whose marginal likelihood $f(y)$ has power-law tails, leading to robust FAB-CRs, but must be computed numerically.
Below, we identify scale mixtures that combine shrinkage, robustness, and closed-form marginal likelihoods.
\Cref{sec:vanishing_influence} treats priors with vanishing influence ($\kappa=0$), whereas \cref{sec:bounded_influence} discusses a prior with bounded, non-vanishing influence ($\kappa>0$).
Although these priors appear in the literature on robust Bayesian inference, none of them has been previously studied in the context of FAB.

\subsection{Priors with vanishing influence}
\label{sec:vanishing_influence}

For scale mixtures of normals \eqref{eq:gaussian_scale_mixture}, power law tails of the survival function $\bar F_{\tau^2}(t) := 1 - F_{\tau^2}(t)$ transfer directly the marginal likelihood $f(y)$.
The following proposition makes this precise.

\begin{proposition}\label{prop:gcm_power_law}
    Let $\pi(\theta)$ be a scale mixture of normals \eqref{eq:gaussian_scale_mixture} with mixing CDF $F_{\tau^2}$.
    If
    \begin{equation}
        \bar F_{\tau^2}(t) = 1 - F_{\tau^2}(t) \sim \frac{C_1}{\beta}\, t^{-\beta}
        \qquad \text{as }t \to \infty,
        \label{eq:tau2_tail_rv}
    \end{equation}
    for some constants $\beta, C_1 > 0$, then the marginal likelihood $f(y)$ has power-law tails with exponent $2 \beta + 1$, i.e., for some constant $C_2 > 0$,
    \begin{equation*}
        f(y) \sim C_2 |y|^{-(2 \beta + 1)} \quad \text{as} \quad y \to \pm \infty.
    \end{equation*}
\end{proposition}
Thus, any scale mixture \eqref{eq:gaussian_scale_mixture} satisfying \cref{eq:tau2_tail_rv} satisfies \cref{eq:regvar1} with $\kappa = 0$ and $\delta = 2\beta + 1$, thereby leading to robust FAB-CRs through \cref{thm:robustCR}.
If $F_{\tau^2}$ has density $f_{\tau^2}$ w.r.t.~the Lebesgue measure, the simpler condition
\begin{equation*}
    f_{\tau^2}(t) \sim C_1 t^{-(\beta + 1)} \quad \text{as} \quad t \to \infty
\end{equation*}
implies \cref{eq:tau2_tail_rv} by Karamata's theorem \citep[Proposition 1.5.10]{Bingham1989}.

In order to take advantage of \cref{prop:gcm_power_law}, we specify a beta prime distribution with parameters $a,b > 0$ for $\tau^2$, denoted by $\tau^2 \sim \mathrm{BP}(a, b)$, whose density is given by
\begin{equation}
    f_{\tau^2}(\tau^2) = \frac{(\tau^2)^{b-1} (1 + \tau^2)^{-(a + b)}}{B(a, b)},  \label{eq:beta_prime_density}
\end{equation}
where $B(a, b)$ is the beta function.
The beta prime family is well-known in the shrinkage literature \citep{Carvalho2010,Armagan2011,Polson2012}.
It is easy to see that its density behaves as a polynomial both at infinity and at zero, i.e.,
\begin{align}
    f_{\tau^2}(\tau^2) &\sim \frac{1}{B(a, b)}\frac{1}{(\tau^2)^{a + 1}} \quad \text{as} \quad \tau^2 \to +\infty, \label{eq:betaprime_poly_infty} \\
    f_{\tau^2}(\tau^2) &\sim \frac{1}{B(a, b)}(\tau^2)^{b - 1} \quad \text {as} \quad \tau^2 \to 0. \label{eq:betaprime_poly_zero}
\end{align}
The power-law tails \eqref{eq:betaprime_poly_infty} guarantee the robustness of the resulting FAB-CR through \cref{prop:gcm_power_law}, while the polynomial behavior at zero \eqref{eq:betaprime_poly_zero}, controlled by parameter $b$, relates to the strength of shrinkage toward zero.
Contrary to common choices of $f_{\tau^2}(\tau^2)$ with power-law tails, the beta prime density \eqref{eq:beta_prime_density} induces closed-form expressions for the marginal likelihood $f(y)$ and, in turn, for the FAB focal point $\widehat{\theta}^\text{FAB}(y)$ in \cref{eq:posteriormean}.
\begin{proposition}\label{prop:betaprime_marginal}
    Let $\pi(\theta)$ be a scale mixture of normals \eqref{eq:gaussian_scale_mixture} with beta prime mixing density \eqref{eq:beta_prime_density}, for $a, b > 0$.
    Then, the marginal likelihood $f(y)$ is given by
    \begin{equation*}
        f(y) = \frac{1}{\sqrt{2 \pi \sigma^2}} \times \frac{\Gamma(a + 1/2) \Gamma(a + b)}{\Gamma(a) \Gamma(a + b + 1/2)} \times \OneFOne\left(a + \frac{1}{2}, a + b + \frac{1}{2}, -\frac{y^2}{2 \sigma^2} \right),
    \end{equation*}
    where $\OneFOne(\alpha, \beta, z)$ is Kummer's confluent hypergeometric function of the first kind.
    Furthermore, the associated focal point $\widehat{\theta}^\text{FAB}(y) = \mathbb{E}[\theta | y]$ is given by
    \begin{equation*}
        \widehat{\theta}^\FAB(y) = y \times \left(1 - \left(\frac{a + \frac{1}{2}}{a + b + \frac{1}{2}}\right)\frac{\OneFOne\left(a + \frac{3}{2}, a + b + \frac{3}{2}, -\frac{y^2}{2 \sigma^2} \right)}{\OneFOne\left(a + \frac{1}{2}, a + b + \frac{1}{2}, -\frac{y^2}{2 \sigma^2} \right)}\right).
    \end{equation*}
\end{proposition}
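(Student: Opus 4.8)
To prove \cref{prop:betaprime_marginal}, the plan is to reduce the marginal likelihood to a single Euler-type integral matching the classical integral representation of Kummer's function, and then to read off the focal point from the posterior-mean identity \eqref{eq:posteriormean}. First I would use the scale-mixture-of-normals structure \eqref{eq:gaussian_scale_mixture}: since $Y\mid\theta\sim\Normal(\theta,\sigma^2)$ and $\theta\mid\tau^2\sim\Normal(0,\sigma^2\tau^2)$, Gaussian--Gaussian conjugacy lets me integrate out $\theta$ first (the interchange justified by Tonelli's theorem, the integrand being nonnegative, with finiteness ensured by \cref{assumpt:priorpi}), leaving $f(y)=\int_0^\infty \Normal\left(y;0,\sigma^2(1+t)\right) f_{\tau^2}(t)\,dt$ with $f_{\tau^2}$ the beta prime density \eqref{eq:beta_prime_density}.

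The key computational step is the change of variables $v=1/(1+t)$, which maps $(0,\infty)$ onto $(0,1)$ and, after collecting the resulting powers of $v$ and $1-v$, collapses the integral to
\[
  f(y)=\frac{1}{B(a,b)\sqrt{2\pi\sigma^2}}\int_0^1 v^{a-1/2}(1-v)^{b-1}\exp\left(-\frac{v y^2}{2\sigma^2}\right) dv .
\]
The integral on the right equals $\tfrac{\Gamma(a+1/2)\Gamma(b)}{\Gamma(a+b+1/2)}\,\OneFOne\left(a+\tfrac12,\,a+b+\tfrac12,\,-y^2/(2\sigma^2)\right)$ by the standard Euler representation of Kummer's function (valid since $a+\tfrac12>0$ and $b>0$); combining this with $1/B(a,b)=\Gamma(a+b)/(\Gamma(a)\Gamma(b))$ and cancelling the common $\Gamma(b)$ gives the stated formula for $f(y)$. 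I expect the only delicate points here to be the choice of substitution --- it is $v=1/(1+t)$, and not the more obvious $v=t/(1+t)$, that produces an exponential of the form $e^{vz}$ as demanded by the integral representation --- together with keeping track of the Gamma prefactors.

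For the focal point I would invoke \cref{thm:postmeaninCI} together with \eqref{eq:posteriormean}, which give $\widehat\theta^\FAB(y)=\E[\theta\mid y]=y+\sigma^2(\log f)'(y)$; since $f$ is a positive constant times the entire function $\OneFOne(a+\tfrac12,a+b+\tfrac12,\cdot)$, it is smooth and strictly positive, so this identity applies directly. Differentiating the closed form via the chain rule ($z=-y^2/(2\sigma^2)$, $dz/dy=-y/\sigma^2$) and the derivative identity $\tfrac{d}{dz}\OneFOne(\alpha,\gamma,z)=\tfrac{\alpha}{\gamma}\OneFOne(\alpha+1,\gamma+1,z)$, the constant prefactor of $f$ cancels in the logarithmic derivative, so $\sigma^2(\log f)'(y)$ reduces to $-y$ times a ratio of contiguous Kummer functions with an explicit rational coefficient; rewriting that coefficient in terms of Gamma functions and adding $y$ yields the claimed expression for $\widehat\theta^\FAB(y)$. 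I do not foresee a genuine obstacle here: the proposition is in essence a recognition exercise, with the substitution and the Gamma-function bookkeeping being the only things requiring care.
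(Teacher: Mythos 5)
Your proposal is correct and follows essentially the same route as the paper's proof: the substitution $u=1/(1+\tau^2)$ reducing the mixture integral to $\frac{1}{B(a,b)\sqrt{2\pi\sigma^2}}\int_0^1 u^{a-1/2}(1-u)^{b-1}e^{-uy^2/(2\sigma^2)}\,du$, the Euler integral representation of $\OneFOne$, and then Tweedie's formula combined with $\frac{d}{dz}\OneFOne(\alpha,\gamma,z)=\frac{\alpha}{\gamma}\OneFOne(\alpha+1,\gamma+1,z)$ for the focal point. One caveat: carried out honestly, your last step produces the coefficient $\frac{a+1/2}{a+b+1/2}=\frac{\Gamma(a+3/2)\,\Gamma(a+b+1/2)}{\Gamma(a+1/2)\,\Gamma(a+b+3/2)}$ in front of the ratio of Kummer functions, which does not equal the $\frac{\Gamma(a+1/2)}{\Gamma(a+b+1/2)}$ printed in the proposition (e.g.\ for $a=b=1/2$ they are $2/3$ versus $2/\sqrt{\pi}$); this is a typo in the statement rather than an error in your argument, so do not adjust your computation to force agreement with the displayed formula.
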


Several choices of $(a, b)$ yield simpler expressions for $f(y)$ and recover known shrinkage priors.
We discuss some of these choices below, while a brief overview of the special functions involved in the expressions of $f(y)$ and $\widehat{\theta}^\text{FAB}(y)$ is provided in \cref{app:specialfunctions}.
\begin{example}\label{ex:horseshoe}
    Taking $a=b=1/2$ gives the half-Cauchy density on $\tau$ and hence the horseshoe prior $\pi(\theta)$ and estimator $\widehat{\theta}^\text{FAB}(y)$ \citep{Carvalho2010}.
    In this case,
    \begin{equation*}
        f(y)
        =
        \frac{2}{\pi^{3/2}}
        \frac{1}{|y|}
        D\left(\frac{|y|}{\sqrt{2\sigma^2}}\right),
    \end{equation*}
    where $D(z)=e^{-z^2}\int_0^z e^{t^2}dt$ is Dawson's function.
\end{example}
\begin{example}\label{ex:pareto}
    Taking $a = 1/2$ and $b = 1$ gives $f_{\tau^2}(\tau^2) = \tfrac{1}{2}(1 + \tau^2)^{-3/2}$, which corresponds to a generalized Pareto distribution on $z = \tau^2/2$ \citep{Griffin2011}.
    In this case,
    \begin{equation*}
        f(y) = \begin{cases}
            \frac{\sigma}{\sqrt{2 \pi}} \frac{1 - \exp(-y^2/(2 \sigma^2))}{y^2} & y \neq 0 \\
            \frac{1}{2\sigma \sqrt{2 \pi}} & y = 0.
        \end{cases}
    \end{equation*}
\end{example}
\begin{example}\label{ex:modified_bessel}
    Taking $a = 1$ and $b = 1/2$ gives $f_{\tau^2}(\tau^2) = \tfrac{1}{2} (\tau^2)^{-1/2} (1 + \tau^2)^{-3/2}$.
    In this case,
    \begin{equation*}
        f(y) = \frac{1}{\sqrt{2 \pi \sigma^2}} \times \frac{\pi}{4} \times \exp\left(-\frac{y^2}{4 \sigma^2}\right) \times \left[I_0\left(\frac{y^2}{4 \sigma^2}\right) - I_1\left(\frac{y^2}{4 \sigma^2}\right)\right],
    \end{equation*}
    where $I_n(z)$ is the $n$-th order modified Bessel function of the first kind.
\end{example}
\Cref{fig:fab_gaussian_vanishing} compares the FAB-CRs induced by these three beta prime mixtures.
\begin{figure}[t]
    \centering
    \includegraphics[width=.9\textwidth]{./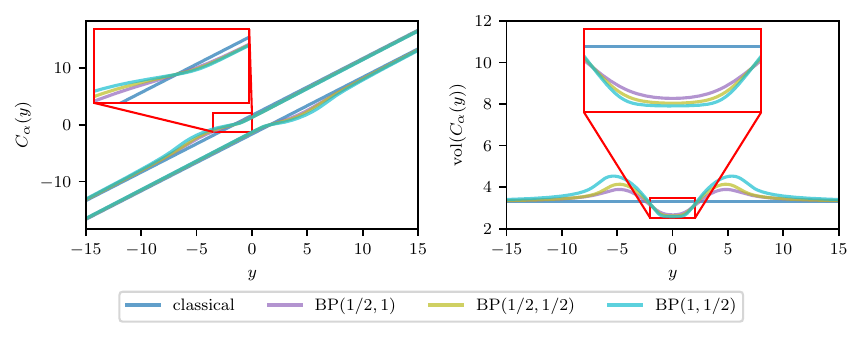}
    \caption{Comparison of standard and FAB procedures under the prior \eqref{eq:gaussian_scale_mixture} with mixing measure $\mathrm{BP}(1/2,1/2)$, $\mathrm{BP}(1/2,1)$ and $\mathrm{BP}(1,1/2)$, when $Y\mid \theta\sim \Normal(\theta, 1)$ for $\alpha = 0.1$.}
    \label{fig:fab_gaussian_vanishing}
\end{figure}
All three FAB-CRs are shorter than the standard CI when $y$ is close to zero.
Unlike the Gaussian-prior FAB-CR in \cref{fig:fab_gaussian_gaussian}, however, they are uniformly bounded and revert to the standard $z$-interval as $|y|$ grows.
The corresponding $p$-value functions, shown in \cref{fig:fab_gaussian_vanishing_cc}, highlight the same behavior and illustrate that the posterior mean coincides with the FAB focal point.

Finally, while we do not pursue this direction here, it is worth noting that spike and slab priors with a spike at $0$ and a slab component with power-law tails, such as a Cauchy-tailed slab \citep{Johnstone2004}, also satisfy the assumptions of \cref{prop:gcm_power_law}: the atom at $0$ encodes exact prior sparsity, while the heavy-tailed slab yields robust FAB-CRs. \looseness=-1

\subsection{Prior with bounded influence}\label{sec:bounded_influence}
As mixing distribution $F_{\tau^2}(\tau^2)$ for the scale mixture of normals prior \eqref{eq:gaussian_scale_mixture}, consider the CDF of an exponential random variable with rate $\kappa^2/2$ for $\kappa > 0$, denoted by $\tau^2 \sim \mathrm{Exp}(\kappa^2/2)$.
The resulting prior is the Laplace, or double-exponential, density
\begin{equation*}
    \pi(\theta;\sigma)=\frac{\kappa}{2\sigma} \exp\left(-\frac{\kappa}{\sigma}|\theta|\right),
\end{equation*}
with scale parameter $\sigma/\kappa$ \citep{Pericchi1992}.
Its marginal likelihood is
\begin{equation*}
    f(y) = \frac{\kappa}{2\sigma}\exp\left(\frac{\kappa^2}{2}\right)\left[\exp\left(-\frac{\kappa y}{\sigma}\right)\Phi\left(\frac{y-\sigma \kappa}{\sigma}\right)+\exp\left(\frac{\kappa y}{\sigma}\right)\Phi\left(-\frac{y+\sigma\kappa}{\sigma}\right)  \right],
\end{equation*}
which implies
$$
    f(y)\sim \frac{\kappa}{2\sigma}\exp\left(\frac{\kappa^2}{2}\right)\exp\left(-\frac{\kappa |y|}{\sigma}\right) \quad \text{as } y\to \pm\infty.
$$
Thus, $f(y)$ satisfies \cref{eq:regvar1} with $\kappa > 0$ and $\delta=0$.
The corresponding focal point is
\begin{equation*}
    \widehat\theta^\FAB(y)=\xi(y)(y+\sigma\kappa)+(1-\xi(y))(y-\sigma\kappa)=y\times \left( 1 + \frac{\sigma\kappa}{y}(2\xi(y)-1) \right),
\end{equation*}
where $\xi(y)=\left(1+\exp\left(-2\kappa y/\sigma\right)\Phi(\frac{y-\sigma\kappa}{\sigma})/\Phi(-\frac{y+\sigma\kappa}{\sigma})   \right)^{-1}$.
By \cref{thm:robustCR}, this prior has bounded influence.
That is, for $y\to\infty$,
\begin{align*}
    \widehat\theta^\FAB(y)-y &\to  -\sigma\kappa\qquad \text{and}\qquad
    C_\alpha(y)-y &\to \left[- \sigma \Phi^{-1}\left(1-\alpha c_\alpha\right), \sigma \Phi^{-1}\left(1-\alpha (1 - c_\alpha)\right)\right],
\end{align*}
where $c_\alpha = g_\alpha^{-1}(-2\kappa) \in (0, 1/2)$.
Therefore, under the Laplace prior, the FAB-CR remains uniformly bounded but does not revert to the standard $z$-interval, and the corresponding FAB estimator does not converge to the MLE.
This behavior is illustrated in \cref{fig:fab_gaussian_bounded}.
\begin{figure}
    \centering
    \includegraphics[width=\textwidth]{./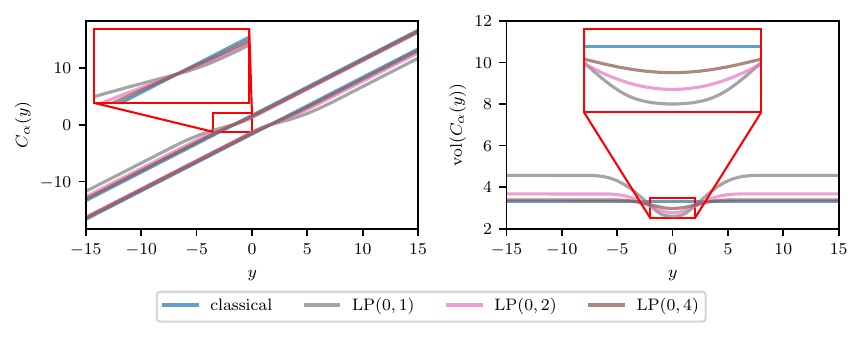}
    \caption{Comparison of the standard and FAB procedures under the Laplace prior $\mathrm{LP}(0, \sigma / \kappa)$ with scale parameter $\sigma/\kappa \in \{1, 2, 4\}$, when $Y\mid \theta\sim \Normal(\theta, 1)$ for $\alpha = 0.1$.}
    \label{fig:fab_gaussian_bounded}
\end{figure}

The Laplace prior also helps clarify the relationship between FAB-CRs, Bayesian credible regions, and the corresponding focal points, which generally differ.
The focal point of the Bayesian $(1-\alpha)$ highest posterior density (HPD) region under the Laplace prior is the maximum a posteriori (MAP) estimator, which is given by
$$
    \widehat\theta^\MAP(y) =
    \left\{
    \begin{array}{ll}
        0 & |y|\leq \sigma\kappa \\
        y- \sigma\kappa \times \text{sign}(y) & |y| > \sigma\kappa
    \end{array}
    \right.
$$
By contrast, the FAB focal point is the posterior mean.
Thus, $\widehat\theta^\FAB(y)$ and $\widehat\theta^\MAP(y)$ generally differ for finite $y$.
The corresponding $p$-value functions and focal point comparisons are reported in \cref{app:bounded_influence}.
Lastly, as shown by \citet{Pericchi1992}, we have $\theta\mid y \overset{d}{\simeq}\Normal(y-\sigma\kappa\times \text{sign}(y), \sigma^2)$ for large $|y|$.
Therefore, although $\widehat\theta^\FAB(y)$ and $\widehat\theta^\MAP(y)$ coincide asymptotically, the FAB-CR and the Bayesian HPD region remain different.

\subsection{Summary and practical guidelines}
\label{sec:practical_guidelines}

\Cref{thm:robustCR} shows the role of the tails of the marginal likelihood $f(y)$ in determining the behavior of FAB-CRs under prior--data conflict.
Gaussian priors yield unbounded FAB-CRs; bounded-influence priors yield uniformly bounded FAB-CRs with asymptotically shifted focal points; and vanishing-influence priors yield FAB-CRs that revert to the $z$-interval.
\cref{tab:priors_summary} summarizes the properties of the priors discussed so far in the context of \cref{thm:robustCR}.
\begin{table}
    \centering
    \caption{Summary of priors discussed in \cref{sec:gaussianlikelihood,sec:shrinkagepriors} in the context of \cref{thm:robustCR}.}
    \small
    \setlength{\tabcolsep}{5pt}
    \label{tab:priors_summary}
    \begin{tabular}{@{}p{1.8cm}p{3.5cm}p{2.2cm}p{3.1cm}p{4.1cm}@{}}
        \toprule
        \textbf{Prior} & \textbf{Definition} & \textbf{Parameters} & \textbf{Influence regime} & \textbf{Tractability of $f(y)$} \\
        \midrule
        Gaussian & $\theta\sim \mathcal N(0,\tau^2)$ & $\tau > 0$ & Unbounded (outside Thm~\ref{thm:robustCR}) & Closed form (elementary) \\
        Laplace & Scale mixture \eqref{eq:gaussian_scale_mixture} with $\tau^2\sim \mathrm{Exp}(\kappa^2/2)$ & $\kappa > 0$ & Bounded ($\kappa > 0$) & Closed form (elementary) \\
        Beta prime mixture & Scale mixture \eqref{eq:gaussian_scale_mixture} with $\tau^2\sim \mathrm{BP}(a,b)$ & $a, b > 0$ & Vanishing ($\kappa = 0$) &
        Closed form ($\OneFOne$ with possible simplifications) \\
        \bottomrule
    \end{tabular}
\end{table}

All three shrinkage regimes can reduce FAB-CR volume near zero, at the cost of larger regions when the data conflict with the prior.
Such a trade-off is inherent to the FAB construction and can be controlled by tuning the prior scale via prior--specific parameters.
For the Gaussian prior, $\tau$ directly controls the prior scale and, hence, the amount of shrinkage toward zero (see \cref{fig:fab_gaussian_gaussian,fig:fab_gaussian_gaussian_cc}).
For the Laplace prior, $\kappa$ controls the prior scale via $\sigma/\kappa$, but also affects the asymptotic shift of the resulting FAB-CR (see \cref{fig:fab_gaussian_bounded,fig:fab_gaussian_bounded_cc}).
For beta prime mixtures, smaller $b$ increases shrinkage near zero through \eqref{eq:betaprime_poly_zero}, while $a$ controls tail mass through \eqref{eq:betaprime_poly_infty} and hence the rate at which the FAB-CR approaches the $z$-interval (see \cref{fig:fab_gaussian_vanishing,fig:fab_gaussian_vanishing_cc}).
Alternatively, one could introduce an additional global scale parameter $s$ multiplying $\sigma$ in the mixture representation \eqref{eq:gaussian_scale_mixture} to control the prior scale more directly, but this generally forfeits the closed-form expressions for $f(y)$ and $\widehat\theta^\FAB(y)$ in \cref{prop:betaprime_marginal}.
In all cases, the prior--specific parameters can be tuned by quantile matching in the presence of domain knowledge, or chosen adaptively when statistically independent estimators are available \citep{Hoff2019}.
In the absence of such information, the horseshoe prior, corresponding to the beta prime mixture with $a=b=1/2$, is a reasonable default: it combines strong shrinkage near zero with heavy tails, and the specific parameterization \eqref{eq:gaussian_scale_mixture} using the data scale $\sigma$ as the sole scale parameter is recommended as a way to avoid under- or over-shrinkage \citep{Piironen2017}.

%% Extension to exponential family likelihood
\section{Extension to natural exponential families}
\label{sec:exponentialfamily}
We now generalize \cref{thm:postmeaninCI} to likelihoods $f_\theta(y)$ belonging to an exponential family.
For concreteness, we use the case of a Poisson random variable with mean $\theta$, denoted $Y \sim \Poisson(\theta)$, as a running example.
We begin by recalling some background on natural exponential families (NEFs); see \citet{Brown1986,Wainwright2008} for additional details.
\begin{definition}[Natural exponential family]\label{def:NEF}
  Let $\nu$ denote a sigma-finite measure on the Borel subsets of $\bbR^d$, and consider $h \colon \bbR^d\to[0,\infty)$.
  Define the set
  \begin{align*}
    \naturalset = \left\{ \eta \mid \int_{\bbR^d} h(y)e^{\eta^\top y }\nu(dy)<\infty\right\}.
  \end{align*}
  For any $\eta\in \naturalset$, let
  \begin{align*}
    \psi(\eta) = \log\left(\int_{\bbR^d} h(y)e^{\eta^\top y }\nu(dy)\right),
  \end{align*}
  and define $f_\eta(y)= h(y)e^{\eta^\top y -\psi(\eta)}$.
  The class $\{f_\eta(y):\eta\in \naturalset\}$ is a $d$-dimensional natural exponential family (NEF) of probability densities with respect to $\nu$.
  The set $\naturalset$ and the function $\psi$ are called the natural parameter space and the cumulant function, respectively.
\end{definition}
We will work with regular, minimal NEFs.
That is, $\naturalset$ is open and, writing $\calY\subseteq\bbR^d$ for the support of $h(y)\nu(dy)$ and $\calM$ for the closure of the convex hull of $\calY$, we assume $\dim\naturalset=\dim\calM=d$.
Throughout this section, $\nu$ is either Lebesgue or counting measure.
In this case, for a given $\eta \in \naturalset$, the expectation of $Y$ can be expressed in terms of the cumulant function $\psi$.

\begin{proposition}[{\citealp[Section 3.5, Propositions 2 and 3]{Wainwright2008}}]
  \label{prop:nef_mean}
  The set $\naturalset$ is convex and the function $\psi$ is strictly convex on $\naturalset$.
  For any $\eta\in\naturalset$,
  \begin{align*}
    \E[Y\mid \eta] &= \nabla\psi(\eta).
  \end{align*}
  Moreover, the gradient mapping $\nabla\psi \colon \naturalset\to\calM$ is one-to-one.
\end{proposition}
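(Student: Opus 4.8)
The plan is to establish the four assertions in order: convexity of $\naturalset$, the mean identity $\E[Y\mid\eta]=\nabla\psi(\eta)$, strict convexity of $\psi$, and injectivity of $\nabla\psi$. (This is the classical NEF computation; I spell out the steps I would carry out.) First, for convexity of $\naturalset$ and convexity of $\psi$, I would fix $\eta_0,\eta_1\in\naturalset$ and $\lambda\in(0,1)$ and apply H\"older's inequality with conjugate exponents $1/\lambda$ and $1/(1-\lambda)$ to the factorization $h(y)e^{(\lambda\eta_0+(1-\lambda)\eta_1)^\top y}=\bigl(h(y)e^{\eta_0^\top y}\bigr)^{\lambda}\bigl(h(y)e^{\eta_1^\top y}\bigr)^{1-\lambda}$. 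This yields $e^{\psi(\lambda\eta_0+(1-\lambda)\eta_1)}\le e^{\lambda\psi(\eta_0)+(1-\lambda)\psi(\eta_1)}<\infty$, which simultaneously shows $\lambda\eta_0+(1-\lambda)\eta_1\in\naturalset$ and that $\psi$ is convex on $\naturalset$.

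Next, for smoothness and the moment identities, I would show that for every $\eta$ in the interior of $\naturalset$ — which, since the family is regular, is all of $\naturalset$ — there is a ball $B(\eta,\rho)\subseteq\naturalset$ on which $\int_{\bbR^d} h(y)e^{\cdot^\top y}\nu(dy)$ may be differentiated arbitrarily often under the integral sign. The dominating functions come from the elementary bound $|y|^{k}e^{\zeta^\top y}\le c_k\sum_{j}\bigl(e^{(\eta+\rho e_j)^\top y}+e^{(\eta-\rho e_j)^\top y}\bigr)$ (sum over the standard basis vectors $e_j$), valid for $\zeta$ in a slightly smaller ball, together with the finiteness of the integrals of the right-hand side because the enclosing ball lies in $\naturalset$. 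Differentiating $e^{\psi(\eta)}=\int h(y)e^{\eta^\top y}\nu(dy)$ once and dividing by $e^{\psi(\eta)}$ gives $\nabla\psi(\eta)=\int y\,f_\eta(y)\nu(dy)=\E[Y\mid\eta]$; differentiating a second time gives $\nabla^2\psi(\eta)=\E[YY^\top\mid\eta]-\E[Y\mid\eta]\,\E[Y\mid\eta]^\top=\cov(Y\mid\eta)$.

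For strict convexity I would argue that $\cov(Y\mid\eta)$ is positive definite under minimality: if $v^\top\cov(Y\mid\eta)\,v=0$ for some $v\neq 0$, then $v^\top Y$ is $f_\eta$-almost surely equal to a constant, so $\calY$, and hence its closed convex hull $\calM$, lies in an affine hyperplane, contradicting $\dim\calM=d$. Thus $\nabla^2\psi(\eta)\succ 0$ on all of $\naturalset$, so $\psi$ is strictly convex. Injectivity of $\nabla\psi$ then follows: if $\nabla\psi(\eta_0)=\nabla\psi(\eta_1)$ with $\eta_0\neq\eta_1$, the scalar map $\varphi(t):=\psi(\eta_0+t(\eta_1-\eta_0))$ is strictly convex on $[0,1]$, so $\varphi'$ is strictly increasing; but $\varphi'(0)=(\eta_1-\eta_0)^\top\nabla\psi(\eta_0)=(\eta_1-\eta_0)^\top\nabla\psi(\eta_1)=\varphi'(1)$, a contradiction. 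That the image of $\nabla\psi$ lies in $\calM$ is immediate, since $\E[Y\mid\eta]=\int y\,f_\eta(y)\nu(dy)$ is an average of points of $\calY$ and $\calM$ is closed and convex.

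The main obstacle is purely analytic: justifying differentiation under the integral sign uniformly on a neighborhood of each $\eta$, i.e.\ exhibiting the dominating functions above and verifying the hypotheses of the Leibniz rule / dominated convergence. Everything else — the H\"older step, the hyperplane argument, and the one-variable convexity argument — is formal once this is in place. Alternatively, one can short-circuit this step by invoking the standard fact that the Laplace transform of a nonnegative measure is real-analytic on the interior of its domain of convergence, which packages exactly the required smoothness and interchange of limits.
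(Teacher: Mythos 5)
The paper does not prove this proposition; it cites it verbatim from Wainwright and Jordan (2008, Section 3.5, Propositions 2 and 3), so there is no internal proof to compare against. Your argument is the standard textbook derivation that the citation points to, and it is correct: H\"older's inequality for convexity of $\naturalset$ and of $\psi$, differentiation under the integral for $\nabla\psi(\eta)=\E[Y\mid\eta]$ and $\nabla^2\psi(\eta)=\cov(Y\mid\eta)$, minimality to rule out a degenerate direction of the covariance (if $v^\top\cov(Y\mid\eta)v=0$ then $\calY$, hence $\calM$, lies in a hyperplane), and strict convexity of the one-dimensional restriction to get injectivity of $\nabla\psi$; regularity of the family is what lets you take a full ball $B(\eta,\rho)\subseteq\naturalset$ around every $\eta$. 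The only step I would tighten is the dominating function: in dimension $d>1$ the pointwise bound on $|y|^k e^{\zeta^\top y}$ by a sum over the $2d$ axis points $\eta\pm\rho e_j$ needs an extra step (e.g.\ $\prod_j e^{\epsilon|y_j|}\le \frac{1}{d}\sum_j e^{\epsilon d|y_j|}$ by AM--GM, or a sum over the $2^d$ corners $\eta+\rho s$, $s\in\{\pm1\}^d$); this is a cosmetic fix, and your alternative of invoking real-analyticity of the Laplace transform on the interior of its convergence domain sidesteps it entirely.
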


For instance, the Poisson distribution with mean $\theta$ is a regular, minimal NEF with natural parameter $\eta = \log(\theta) \in \naturalset = \bbR$, cumulant function $\psi(\eta) = e^\eta$, $h(y) = 1/y!$, reference measure $\nu$ given by the counting measure on $\bbN$, and \cref{prop:nef_mean} implies that $\E[Y\mid \eta] = \nabla\psi(\eta) = e^\eta$.
\Cref{assumpt:priorpiNEF} is the analogue of \cref{assumpt:priorpi} for natural exponential families.

\begin{assumption}
  \label{assumpt:priorpiNEF}
  Let $\pi(d\eta)$ be a positive Radon measure on $\naturalset$, with full support, and such that $0<\int_{\naturalset} f_\eta(y)\pi(d\eta)<\infty$ for any $y\in\calY$.
  We refer to $\pi$ as the prior distribution on the natural parameter $\eta$.
  Assume that $\pi(d\eta)$ is absolutely continuous w.r.t.~a sigma-finite reference measure $\xi$ on $\bbR^d$, with density $g(\eta)$.
  Finally, assume that the following set is open: \looseness=-1
  \begin{equation*}
    \widetilde\calY = \left\{ y\in \bbR^d \mid \int_{\bbR^d} e^{\eta^\top y - \psi(\eta)} g(\eta) \xi(d\eta) <\infty \right\}.
  \end{equation*}
\end{assumption}
\begin{remark}
  As in \cref{sec:gaussianlikelihood}, $\pi$ and $\xi$ will usually be a proper probability distribution and the Lebesgue measure, respectively, but \cref{assumpt:priorpiNEF} also allows for improper priors and spike and slab priors.
  In practice, one may instead specify a prior on the mean $\nabla\psi(\eta)$, which induces a prior $\pi$ on $\eta$ via its push-forward measure w.r.t.~the inverse gradient map. \looseness=-1
\end{remark}
Note that $\calY \subseteq \widetilde\calY$, which implies $\dim(\widetilde\calY)=d$.
For any $y\in\widetilde\calY$, define
\begin{equation}
  \lambda(y)=\log \left(\int_{\bbR^d} e^{\eta^\top y - \psi(\eta)}  g(\eta)\xi(d\eta)\right) \quad\text{and}\quad \widetilde f_y(\eta) = \left[g(\eta)e^{ - \psi(\eta)}\right] e^{\eta^\top y - \lambda(y)},
  \label{eq:NEFposterior}
\end{equation}
for any $\eta\in\naturalset$.
Then, by definition, the class $\{\widetilde f_y(\eta) ; y\in\widetilde\calY \}$ is a regular, minimal, NEF of densities over $\naturalset$ w.r.t.~the reference measure $\xi$, and whose cumulant function is $\lambda(y)$.
\begin{remark}
  For $y\in\calY$, $\widetilde f_y(\eta)$ may be interpreted as the posterior density of $\eta$ given $y$ under likelihood $f_\eta(y)$ and prior $\pi$.
  Similarly, $\lambda(y)$ may be interpreted as $\log\left(\frac{f(y)}{h(y)}\right)$ where $f(y)=\int f_\eta(y)g(\eta)\xi(d\eta)$ is the marginal likelihood. The log-ratio between the marginal and the likelihood takes the form $\lambda_\eta(y) = \log \frac{f(y)}{f_\eta(y)}=\lambda(y)-\eta^\top y+ \psi(\eta)$.
  \label{rem:posterior_nef}
\end{remark}
For instance, in the Poisson example, the conjugate prior $\theta\sim \Gammadist(a, p/(1-p))$, with $a>0$ and $p\in(0,1)$, induces a log-gamma prior on $\eta$, which satisfies \cref{assumpt:priorpiNEF}, and thus a log-gamma posterior for $\eta$ given $y$.
This gives $\lambda(y) = \log \Gamma(a + y) - \log \Gamma(a) + y \log(1-p) + a \log(p)$, which is defined for any $y\in\widetilde\calY = (-a, \infty)$.

We now define the FAB-CR for the natural parameter and for any one-to-one statistical functional of a random variable in a NEF.
Compared with the general FAB construction in \cref{sec:fabbackground}, two modifications are needed:
\begin{enuminline}
  \item for discrete observations, exact coverage \eqref{eq:coverage} is replaced by a lower bound, and
  \item to derive the focal point, the log-likelihood ratio is extended to the larger (convex) set $\widetilde \calY \supseteq \calY$.
\end{enuminline}
\begin{definition}
  Let $\alpha\in(0,1)$ and $Y\sim f_\eta$ be a random variable in a regular, minimal NEF with cumulant $\psi$.
  Let $\theta=\statfunc(\eta)$ be a one-to-one statistical functional of $f_\eta$, e.g., its mean, median, or quantile.
  In the case of the mean, $\statfunc=\nabla\psi$.
  Let $g(\eta)$ be some prior density on the natural parameter $\eta$ w.r.t.~some reference measure $\xi$.
  For any $y\in\widetilde \calY$, consider the log-likelihood ratio $\lambda_\eta(y) = \lambda(y)-\eta^\top y+ \psi(\eta)$.
  The $(1-\alpha)$ FAB-CR for $\theta$ is given by
  \begin{align*}
    \CR(y)=\{\statfunc(\eta) \mid \eta\in \CRnp(y)\},
  \end{align*}
  where $\CRnp$ is the $(1-\alpha)$ FAB-CR for the natural parameter $\eta$, defined as
  $$
    \CRnp(y)=\{\eta\in\naturalset \mid y\in \ARnp(\eta)\}, \quad\text{with}\quad \ARnp(\eta)=\left\{y\in \widetilde \calY\mid \lambda_\eta(y) \leq k_\alpha(\eta)\right\},
  $$
  where $k_\alpha(\eta)$ is the smallest value such that
  $
    \int_{\ARnp(\eta)} f_\eta(y)\nu(dy)\geq 1-\alpha.
  $
  \label{def:fabcrdefnef}
\end{definition}
For discrete observations, $\lim_{\alpha\to 1}\vol(\CR(y))\neq 0$, and $\cap_{\alpha\in(0,1)}\CR$ is therefore an interval, preventing the use of \eqref{eq:focalpointFABCR} as a focal point.
Instead, we define
\begin{equation}
  \widehat\theta^\FAB(y)=\statfunc(\widehat\eta^\FAB(y)),
  \quad\text{where}\quad
  \widehat\eta^\FAB(y)=y_{\min}^{-1}(y), \label{eq:etaFAB_nef}
\end{equation}
and
$y_{\min}(\eta) = \arg\min_{y\in\widetilde\calY} \lambda_\eta(y).$
This definition reduces to \eqref{eq:focalpointFABCR} when $\lim_{\alpha\to 1}\vol(\CR(y))= 0$.

By \cref{rem:posterior_nef}, for $y\in \calY$, density \eqref{eq:NEFposterior} may be seen as the posterior density of $\eta$ under likelihood $f_\eta(y)$ and prior $\pi$.
Hence, by \cref{prop:nef_mean}, its posterior mean is given by
\begin{align}
  \widehat\eta(y) = \E[\eta \mid y] = \int \eta \widetilde f_y(\eta) \xi(d\eta)=\nabla\lambda(y), \label{eq:etapostmean_nef}
\end{align}
The next result extends \cref{thm:postmeaninCI} to NEFs: the FAB focal point \eqref{eq:etaFAB_nef} for $\eta$ coincides with the posterior mean \eqref{eq:etapostmean_nef}, which is thus always contained in the corresponding FAB-CR.
\begin{theorem}\label{thm:postmeaninCI_nef}
  Assume that \cref{assumpt:priorpiNEF} is satisfied, and let $\alpha\in(0,1)$.
  For any $\eta\in\naturalset$, $\ARnp(\eta)$ is convex, and $\widehat\eta^{-1}(\eta)\in\ARnp(\eta)$.
  Then, for any observation $y\in\calY$,
  \begin{align*}
    \widehat\eta^{\FAB}=\nabla\lambda(y)=\E[\eta\mid y]\in \CRnp(y)
  \end{align*}
  and $\widehat\theta^{\FAB}=\statfunc(\nabla\lambda(y))=\statfunc(\E[\eta\mid y])\in \CR(y).$
\end{theorem}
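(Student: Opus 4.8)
The plan is to follow the same route as the Gaussian case \cref{thm:postmeaninCI}, with the explicit formula $\widehat\theta(y)=y+\sigma^2\ell'(y)$ replaced by the mean--gradient identity of \cref{prop:nef_mean} applied to the \emph{posterior} family. The structural fact that makes this work, already anticipated in \cref{def:fabcrdefnef} by extending the log-likelihood to the open set $\widetilde\calY$, is that $\widetilde f_y(\eta)=\bigl[g(\eta)e^{-\psi(\eta)}\bigr]\,e^{\eta^\top y-\lambda(y)}$ is, as $y$ ranges over $\widetilde\calY$, a regular minimal NEF in the variable $\eta$ with cumulant function $\lambda$ and base measure $\bigl[g(\eta)e^{-\psi(\eta)}\bigr]\xi(d\eta)$ (with support $\widetilde\Omega$). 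First I would apply \cref{prop:nef_mean} to this family to obtain that $\widetilde\calY$ is convex (it is also open by \cref{assumpt:priorpiNEF}), that $\lambda$ is strictly convex and continuously differentiable on $\widetilde\calY$, that the gradient map $\nabla\lambda\colon\widetilde\calY\to\bbR^{d}$ is one-to-one, and that $\nabla\lambda(y)=\E[\eta\mid y]$ for $y\in\calY$ by \cref{rem:posterior_nef}. This plays the role of ``$\ell$ is smooth and strictly convex'' in the proof of \cref{thm:postmeaninCI}.

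Next, for fixed $\eta\in\naturalset$, I would study $\lambda_\eta(y)=\lambda(y)-\eta^\top y+\psi(\eta)$ as a function of $y$. It is strictly convex, being an affine perturbation of $\lambda$, so every sublevel set $\{y:\lambda_\eta(y)\le k\}$ is convex; intersecting with the convex open set $\widetilde\calY$ shows that $\ARnp(\eta)$ is convex. At any interior minimizer $y^\star\in\widetilde\calY$ the first-order condition reads $\nabla\lambda(y^\star)=\eta$, and by strict convexity $y^\star$ is then the unique global minimizer, so $y_{\min}(\eta)=(\nabla\lambda)^{-1}(\eta)=\widehat\eta^{-1}(\eta)$. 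Inverting the relation $y_{\min}=(\nabla\lambda)^{-1}$ gives $\widehat\eta^{\FAB}(y)=y_{\min}^{-1}(y)=\nabla\lambda(y)$, which for $y\in\calY$ equals $\E[\eta\mid y]$ by the previous step; this is the NEF analogue of $\arg\min_y\lambda_{\theta_0}(y)=\widehat\theta^{-1}(\theta_0)$.

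Then I would show $\widehat\eta^{-1}(\eta)=y_{\min}(\eta)\in\ARnp(\eta)$. Since $\alpha<1$, the defining bound $\int_{\ARnp(\eta)}f_\eta(y)\,\nu(dy)\ge 1-\alpha>0$ forces $\ARnp(\eta)\ne\emptyset$, hence $k_\alpha(\eta)\ge\min_{y\in\widetilde\calY}\lambda_\eta(y)=\lambda_\eta(y_{\min}(\eta))$; as $y_{\min}(\eta)$ realizes this minimum, it lies in $\{\,\lambda_\eta\le k_\alpha(\eta)\,\}\cap\widetilde\calY=\ARnp(\eta)$. Reading this with $\eta=\widehat\eta^{\FAB}(y)=\nabla\lambda(y)$, for which $y_{\min}(\eta)=y$, gives $y\in\ARnp\bigl(\widehat\eta^{\FAB}(y)\bigr)$, i.e.\ $\widehat\eta^{\FAB}(y)=\E[\eta\mid y]\in\CRnp(y)$. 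Finally, since $\statfunc$ is one-to-one and $\CR(y)=\{\statfunc(\eta):\eta\in\CRnp(y)\}$ by \cref{def:fabcrdefnef}, applying $\statfunc$ yields $\widehat\theta^{\FAB}(y)=\statfunc(\E[\eta\mid y])\in\CR(y)$.

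The hard part will be the attainment and boundary bookkeeping. The minimizer $y_{\min}(\eta)$ sits in the \emph{open} set $\widetilde\calY$ exactly when $\eta$ lies in the range of $\nabla\lambda$, which by minimality of the posterior NEF is the interior of the convex hull of $\widetilde\Omega\subseteq\naturalset$ — and this is also precisely where $\widehat\eta^{-1}$ is defined — so one must check that this range contains $\nabla\lambda(y)$ for every $y\in\calY$ (so that $\widehat\eta^{\FAB}$, hence the focal point, is defined on $\calY$) and observe that for $\eta$ outside it the claim ``$\widehat\eta^{-1}(\eta)\in\ARnp(\eta)$'' is vacuous. A secondary technical point is the well-definedness of $k_\alpha(\eta)$ as the smallest threshold meeting the coverage bound, which follows from the monotonicity and right-continuity in $k$ of $k\mapsto\int_{\{\lambda_\eta\le k\}}f_\eta\,d\nu$; in the discrete case this map is a step function, which is exactly why \cref{def:fabcrdefnef} relaxes exact coverage to a lower bound, and why the convexity and containment arguments above must be phrased on $\widetilde\calY$ rather than on $\calY$.
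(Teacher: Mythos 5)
Your proposal is correct and follows essentially the same route as the paper's proof: strict convexity of $\lambda_\eta$ in $y$ (inherited from the cumulant $\lambda$ of the posterior NEF $\widetilde f_y$), convexity of the sublevel set $\ARnp(\eta)$, identification of the minimizer $y_{\min}(\eta)=(\nabla\lambda)^{-1}(\eta)$ via the first-order condition, membership of that minimizer in the nonempty acceptance region, and inversion followed by applying $\statfunc$. The extra care you take with attainment, the range of $\nabla\lambda$, and the well-definedness of $k_\alpha(\eta)$ fills in details the paper leaves implicit but does not change the argument.
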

\begin{example}
  Continuing the running example, where $Y \sim \Poisson(\theta)$ with prior $\theta \sim \Gammadist(a, p/(1-p))$, we have
  $
    \widehat\eta^{\FAB}(y) = \E[\eta\mid y] = \lambda'(y) =\digamma(a+y)+\log(1-p) \in \CRnp(y),
  $
  where $\digamma(z)$ is the digamma function.
  Since $\theta = \E[Y | \eta] = \psi'(\eta) = e^\eta$, the FAB estimator for $\theta$ is given by $\widehat\theta^{\FAB}(y) = e^{\widehat\eta^{\FAB}(y)} \in \CR(y)$.
  \Cref{fig:fab_poisson} illustrates the FAB procedure for the Poisson likelihood.
  \Cref{fig:fab_poisson_cc} in \cref{app:additional_figures} reports the corresponding $p$-value functions.
\end{example}
\begin{figure}
  \centering
  \includegraphics[width=\textwidth]{./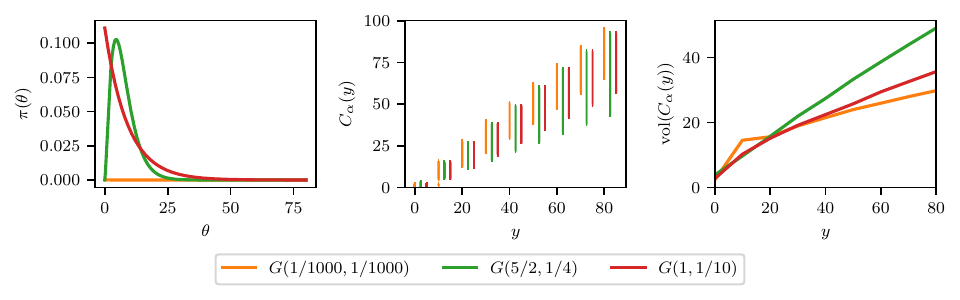}
  \caption{Comparison of the FAB procedure applied to the likelihood $Y \sim \mathrm{Poisson}(\theta)$ under a prior $\theta \sim \Gammadist(a, p / (1 - p))$ for different choices of $a$ and $p$, and where $\alpha = 0.1$.}
  \label{fig:fab_poisson}
\end{figure}

The same construction applies to other discrete NEFs.
For example, the binomial--beta case yields a FAB estimator for the logit parameter equal to the posterior mean under the corresponding logistic-beta posterior; under the uniform beta prior, the resulting FAB-CRs recover the binomial regions of \citet{Sterne1954}.
The multinomial--Dirichlet case is analogous, with the uniform Dirichlet prior recovering the regions of \citet{Chafai2009}.
Details and illustrations for these additional examples are given in \cref{app:additional_figures}.

%% Application to linear regression
\section{Application to linear regression}
\label{sec:linearregression}
To illustrate the behavior of the FAB priors proposed in \cref{sec:shrinkagepriors} in a common estimation problem, consider the linear regression model
\begin{align}
    \bY\mid \bbeta \sim \Normal(\bX \bbeta, \Sigma),
    \label{eq:likelihoodlinearregression}
\end{align}
where $\bY\in\bbR^n$,
$\bX$ is a fixed $n$--by--$p$ design matrix with full rank, $\bbeta=(\beta_1,\ldots,\beta_p)^\top\in\bbR^p$ is the vector of unknown coefficients,
and $\Sigma$ is a known positive definite matrix.
In many applications, $\Sigma$ describes i.i.d.~observation noise, i.e., $\Sigma = \sigma_Y^2 \mathbf{I}$ with $\sigma_Y > 0$.
For a given $\bx\in\bbR^p$, we aim to construct a valid $(1-\alpha)$ confidence region $C_{\alpha}(\bY;\bx)$ for $\bx^\top\bbeta\in\bbR$, that is
\begin{align*}
\Pr(\bx^\top\bbeta\in C_{\alpha}(\bY;\bx)\mid \bbeta) = 1-\alpha
\end{align*}
for any $\bbeta\in\bbR^p$.
This includes, as a special case, confidence regions for individual coefficients $\beta_j$, $j=1,\ldots,p$, obtained by choosing $\bx$ as a one-hot vector.

Let $\widehat\bbeta= (\bX^\top\Sigma^{-1}\bX)^{-1}\bX^\top \Sigma^{-1} \bY$ be the MLE of $\bbeta$.
Under the Gaussian model \eqref{eq:likelihoodlinearregression},
$\widehat\bbeta \mid \bbeta\sim \Normal(\bbeta, \widetilde \Sigma)$, where $\widetilde \Sigma=(\bX^\top\Sigma^{-1}\bX)^{-1}$.
Therefore, for any $\bx\in\bbR^p$,
\begin{equation*}
    \bx^\top \widehat\bbeta \mid \bbeta \sim \Normal(\bx^\top \bbeta, \bx^\top \widetilde \Sigma\bx ),
\end{equation*}
which corresponds to the setting of \cref{sec:gaussianlikelihood} with $\theta=\bx^\top \bbeta$ and $\sigma^2= \bx^\top \widetilde \Sigma\bx$.
As a result of this, the standard $z$-interval, given here by $(\bx^\top \widehat\bbeta\pm z_{1-\alpha/2}\sqrt{\bx^\top \widetilde \Sigma\bx})$, is a valid CI for the problem at hand, that always contains the MLE $\bx^\top \widehat\bbeta$.
Alternatively, in the presence of prior information on the likely location of $\bbeta$, one may wish to apply FAB under a prior $\pi(d\theta)$.
By \cref{thm:postmeaninCI}, if $\pi$  satisfies \cref{assumpt:priorpi},
then $\widehat\theta^\FAB(\bx^\top\widehat\bbeta) = \bbE[\theta \mid \bx^\top\widehat\bbeta]$,
is the focal point of this confidence region.
Additionally, by using the shrinkage priors discussed in \cref{sec:shrinkagepriors}, sparsity in $\bbeta$ may be encouraged, while also ensuring that FAB-CRs for $\theta$ remain uniformly bounded via \cref{thm:robustCR}.
In particular, if $\pi$ is of the scale mixture form \eqref{eq:gaussian_scale_mixture} with power-law tails, the volume of the confidence region converges to $2 z_{1-\alpha/2}\sqrt{\bx^\top \widetilde \Sigma\bx}$, which corresponds to the width of the standard confidence interval, as $|\bx^\top \widehat\bbeta|\to\infty$.
These properties are highlighted in the following simulation study. \looseness=-1

\textbf{Synthetic data.}
We sample the entries of $\bbeta \in \mathbb{R}^p$ i.i.d.~from a Gaussian distribution with mean zero and variance $\sigma_\beta^2$, and each row of $\bX \in \mathbb{R}^{n \times p}$ i.i.d.~from a multivariate normal distribution with mean zero, unit variance, and correlations $\mathrm{corr}(x_{ij}, x_{ik}) = 0.5^{|j-k|}$.
Then, for a given value of $\bbeta$ and $\bX$, we sample $\bY \in \mathbb{R}^n$ from the model \eqref{eq:likelihoodlinearregression} with $\Sigma = \sigma_Y^2 \mathbf{I}$, where $\sigma_Y = 10$.
In this setup, we are interested in obtaining marginal confidence regions for each of the scalar parameters $\beta_j$, $j=1,\ldots,p$.
In order to evaluate the performance of different FAB priors in the presence of disagreement with the data, for each $j$, we fix the prior location and scale parameters to zero and $\sqrt{\widetilde \Sigma_{jj}}$, respectively, and gradually increase $\sigma_\beta^2$.
\Cref{fig:fab_ols} compares the FAB-CRs obtained under the Gaussian ($\mathrm{G}$), horseshoe ($\mathrm{HS}$), and Laplace ($\mathrm{LP}$) priors against the standard $z$-interval in terms of average volume and coverage as a function of $\log(\sigma_\beta)$.
\begin{figure}
    \centering
    \includegraphics[width=\textwidth]{./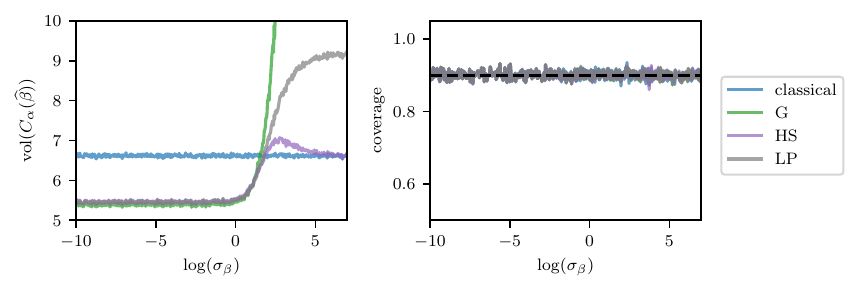}
    \caption{Linear regression simulation study. Comparison of the $z$-interval and FAB-CRs under different priors for $n=50$, $p=10$, $\alpha = 0.1$. Results are averaged over $100$ repetitions.}
    \label{fig:fab_ols}
\end{figure}
As expected, for small values of $\sigma_\beta^2$, FAB results in volume gains compared to the standard method.
However, as $\sigma_\beta^2$ grows, the zero-centered FAB priors become increasingly misspecified, and the FAB-CRs become larger than the classical CI.
Nonetheless, for very large values of $\sigma_\beta^2$, the behavior of the three priors differs: the width of the Gaussian CR diverges, while the Laplace and horseshoe priors yield uniformly bounded regions, with the latter eventually reverting to the standard $z$-interval.
In all cases, the coverage remains exact.
\Cref{fig:fab_ols_baselines} in \cref{app:additional_figures} reports the same experiment with additional baselines: Bayesian credible regions under the same priors, and desparsified Lasso confidence intervals \citep{Zhang2013}, with the true $\sigma_Y$ supplied to all methods for comparability.
The credible regions are narrower than the $z$-interval over the range of $\sigma_\beta^2$ considered: the Gaussian credible region has constant width because its posterior variance does not depend on the observed data $\mathbf{y}$, whereas the Laplace and horseshoe credible regions widen with $\sigma_\beta^2$ and approach the volume of the classical interval.
The desparsified Lasso interval is also shorter than the $z$-interval and has width largely insensitive to $\sigma_\beta^2$, since its width is determined mainly by the design and the supplied observation standard deviation rather than by the signal magnitude.
Crucially, none of these baselines maintains nominal coverage uniformly over $\sigma_\beta^2$: the credible regions over-cover when coefficients are near zero and can under-cover under substantial prior--data conflict, and the desparsified Lasso interval tends to under-cover when sparsity/regularity assumptions are violated.
Among the methods considered here, FAB-CRs are the only ones that simultaneously exploit shrinkage when present while maintaining nominal coverage by construction.

\textbf{Real-world data.} As a real-world example, we consider the breast cancer dataset analyzed by \citet{Shimomura2016} and publicly available on the GEO database \citep[GEO accession GSE73002]{Edgar2002}.
For $n = 4113$ human subjects tested for breast cancer, the dataset contains normalized microarray expression values for $2540$ microRNAs as covariates, and a binary label for breast cancer status as the response.
The covariate matrix contains a very small number of missing values ($\simeq 0.03\%$), which we impute by the median before standardizing each column to have mean zero and unit variance.
We aim to construct marginal confidence regions for the parameters $\bbeta$ of a linear regression model \eqref{eq:likelihoodlinearregression}, where the design matrix $\bX \in \mathbb{R}^{n \times p}$ contains the microRNA expression profiles and a column of ones for the intercept ($p = 2541$), the response variable $\bY \in \mathbb{R}^n$ contains the binary labels for breast cancer status, and $\Sigma = \sigma_Y^2 \mathbf{I}$.
As above, we encode sparsity into the FAB-CR procedure by specifying zero-mean independent priors on the entries of $\bbeta$, with scale parameter set to $\sqrt{\widetilde \Sigma_{jj}}$.
In this case, the observation variance $\sigma_Y^2$ required to compute $\widetilde \Sigma_{jj}$ is unknown, and we estimate it using the estimator
$\widehat\sigma_Y^2 = \sum_{i=1}^n (y_i - \mathbf{x}_i^\top \widehat\bbeta)^2/(n-p)$, where $\widehat\bbeta$ is the MLE of $\bbeta$.
Because the binary response $\bY$ is analyzed under the continuous working model \eqref{eq:likelihoodlinearregression} and $\sigma_Y^2$ is estimated rather than known, the nominal coverage of both the FAB-CR discussed above and the standard $z$-interval should be interpreted as approximate rather than exact.
This example is therefore intended primarily to illustrate the relative shrinkage and robustness behavior of the different FAB priors under a common working model.
\Cref{fig:fab_ols_real} compares the FAB-CRs obtained under the Gaussian, horseshoe, and Laplace priors against the standard $z$-interval in terms of volume.
\begin{figure}
    \centering
    \includegraphics[width=\textwidth]{./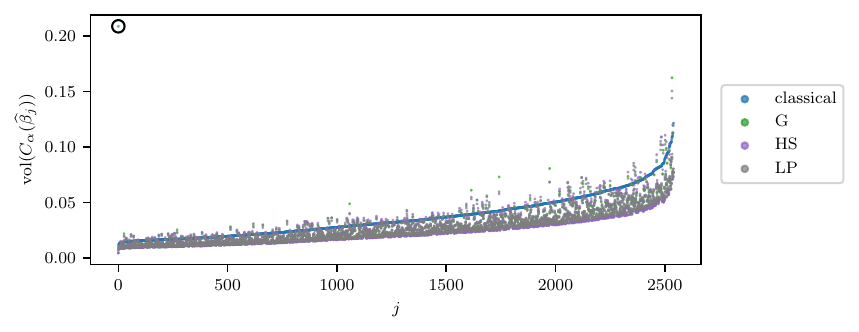}
    \caption{Linear regression real-world study. Comparison of the $z$-interval and FAB-CRs under different priors on the breast cancer dataset for $\alpha = 0.3$. Coefficients are ordered by width of the $z$-interval.}
    \label{fig:fab_ols_real}
\end{figure}
For this dataset, sparsity is very pronounced, and the FAB-CRs achieve smaller volume than the standard $z$-interval for most of the coefficients.
In particular, the Gaussian and Laplace priors outperform the standard $z$-interval for $\approx 93\%$ of the coefficients, while the horseshoe FAB-CRs are slightly more conservative, performing better for $\approx 90\%$ of the coefficients.
However, in cases where the priors are clearly misspecified, the heavier tails of the Laplace and horseshoe priors allow them to outperform the Gaussian prior.
This is particularly evident for the intercept coefficient, for which the Gaussian FAB-CR, indicated by the circle in the top left corner of \cref{fig:fab_ols_real}, attains a volume $\approx 48$ times larger than the corresponding $z$-interval.
In contrast, the Laplace CR remains bounded to a much smaller volume of $\approx 1.5$ times the $z$-interval, while the horseshoe CR reverts to the latter, highlighting its robustness.
For this experiment, realizations of the FAB-CRs and $z$-intervals for four representative coefficients are shown in \cref{fig:fab_ols_real_examples}.

%% Discussion
\section{Discussion}
\label{sec:discussion}
Pratt's \textit{Frequentist, Assisted by Bayes} approach is a principled method for obtaining efficient confidence regions with frequentist guarantees in the presence of prior information.
This article developed a number of properties of this approach.
First, we established simple, heavy-tailed prior conditions that guarantee \emph{uniformly bounded} FAB-CRs -- eliminating the pathological blow-up that can plague the classical Gaussian setup and ensuring the usefulness of the approach even under severe prior–data conflict.
Second, we introduced the notion of a \emph{focal point} for nested FAB procedures and showed that, in Gaussian and, more generally, exponential-family models, the focal point coincides with a transformation of a posterior mean shrinkage estimator.
This puts point estimation and region estimation on the same footing and offers practitioners an immediately interpretable summary to accompany every confidence region.
Third, we proposed a shrinkage-prior construction, implemented through scale mixtures of normals, that simultaneously
\begin{enuminline}
    \item encourages sparsity,
    \item leads to uniformly bounded FAB-CRs, and
    \item reverts to classical intervals in the extremal, data-dominated regime.
\end{enuminline}
\looseness=-1

Pratt's approach minimizes the Bayes expected volume \eqref{eq:Bayesexpectedvol}, under some prior $\pi$, subject to the pointwise frequentist coverage constraint \eqref{eq:coverage}, which is a strong guarantee. In some applications, weaker or different guarantees may be preferred. In the presence of prior information $\pi$, two common alternatives are Bayesian HPD regions and marginal coverage intervals. The former minimize, for each fixed data $y$, the volume $\vol(C_{\alpha}(y))$ under the posterior constraint $\Pr(\theta\in C_{\alpha}(y)\mid y)\geq 1-\alpha$. Alternatively, the latter minimize the Bayes expected volume \eqref{eq:Bayesexpectedvol} under the (weaker) marginal coverage constraint \citep{Joseph1995}
$$
    \Pr(\theta\in C_\alpha(Y))=\int_\Theta\Pr(\theta\in C_\alpha(Y)\mid \theta)\pi(\theta)d\theta \geq 1-\alpha.
$$
In this case, the regions take the form $C_\alpha(y)=\left\{\theta \mid \pi(\theta\mid y)\geq c    \right \}$ for some constant cut-off $c$.
That typically means that the posterior mass $\Pr(\theta\in C_{\alpha}(y)\mid y)$ now varies with $y$.

All numerical experiments concerning the Gaussian likelihood in this work assume \emph{known} variance.
In practice, the variance is usually unknown, and more conservative procedures may therefore be preferable.
One promising direction is to adapt the FAB framework for Student-$t$ confidence intervals, as proposed by \citet{Yu2018}.
Moreover, our results in \cref{sec:gaussianlikelihood,sec:shrinkagepriors} are presented for univariate Gaussian likelihoods.
Extending them to the multivariate case is straightforward under product priors when one is interested in marginal FAB-CRs for each dimension of $\theta \in \mathbb{R}^d$.
In this case, the resulting FAB-CRs are obtained by applying the construction described in \cref{sec:fabbackground} separately to each component of $\theta$, and our results apply dimension-wise.
Indeed, our experimental results in \cref{sec:linearregression} already illustrate this setup.
Going forward, extensions to joint FAB-CRs for multivariate parameters under non-product priors represent a valuable yet more challenging research direction. 
Lastly, our robustness results were derived only for Gaussian models; extending them to other likelihood families remains an open and worthwhile avenue for future work.

% Acknowledgements
\section*{Acknowledgements}
The authors would like to thank Gunnar Taraldsen for valuable discussion and feedback on an earlier draft of this work.
Stefano Cortinovis is supported by the EPSRC Centre for Doctoral Training in Modern Statistics and Statistical Machine Learning (EP/S023151/1).

\bibliographystyle{plainnat}
\bibliography{references}

@book{Abramowitz1968,
	title        = {Handbook of mathematical functions with formulas, graphs, and mathematical tables},
	author       = {Abramowitz, M. and Stegun, I. A.},
	year         = 1968,
	publisher    = {US Government Printing Office},
	volume       = 55
}

@article{Andrews1974,
	title        = {Scale mixtures of normal distributions},
	author       = {Andrews, D. and Mallows, C.},
	year         = 1974,
	journal      = {Journal of the Royal Statistical Society: Series B (Methodological)},
	publisher    = {Wiley Online Library},
	volume       = 36,
	number       = 1,
	pages        = {99--102}
}

@article{Armagan2011,
	title        = {Generalized {Beta} {Mixtures} of {Gaussians}},
	author       = {Armagan, A. and Dunson, D. B. and Clyde, M.},
	year         = 2011,
	journal      = {Advances in Neural Information Processing Systems},
	volume       = 24,
	pages        = 523,
	urldate      = {2024-10-26},
	language     = {en},
	pmid         = 25364213
}

@article{Berger1980,
	title        = {A robust generalized {B}ayes estimator and confidence region for a multivariate normal mean},
	author       = {Berger, J.},
	year         = 1980,
	journal      = {The Annals of Statistics},
	publisher    = {JSTOR},
	pages        = {716--761}
}

@book{Bingham1989,
	title        = {Regular variation},
	author       = {Bingham, N. H. and Goldie, C. M. and Teugels, J. L.},
	year         = 1989,
	publisher    = {Cambridge University Press},
	volume       = 27
}

@article{Birnbaum1961,
	title        = {Confidence curves: An omnibus technique for estimation and testing statistical hypotheses},
	author       = {Birnbaum, A.},
	year         = 1961,
	journal      = {Journal of the American Statistical Association},
	publisher    = {Taylor \& Francis},
	volume       = 56,
	number       = 294,
	pages        = {246--249}
}

@article{Brown1986,
	title        = {Fundamentals of {Statistical} {Exponential} {Families} with {Applications} in {Statistical} {Decision} {Theory}},
	author       = {Brown, L. D.},
	year         = 1986,
	journal      = {Lecture Notes-Monograph Series},
	volume       = 9,
	pages        = {i--279},
	issn         = {0749-2170},
	urldate      = {2024-10-24},
	note         = {Publisher: Institute of Mathematical Statistics}
}

@article{Brown1995,
	title        = {Optimal confidence sets, bioequivalence, and the limacon of {P}ascal},
	author       = {Brown, L. D. and Casella, G. and Hwang, J. T. G.},
	year         = 1995,
	journal      = {Journal of the American Statistical Association},
	publisher    = {Taylor \& Francis},
	volume       = 90,
	number       = 431,
	pages        = {880--889}
}

@inproceedings{Caron2008,
	title        = {Sparse {B}ayesian nonparametric regression},
	author       = {Caron, F. and Doucet, A.},
	year         = 2008,
	booktitle    = {Proceedings of the 25th International Conference on Machine Learning},
	pages        = {88--95}
}

@article{Carvalho2010,
	title        = {The horseshoe estimator for sparse signals},
	author       = {Carvalho, C. M. and Polson, N. G. and Scott, J. G.},
	year         = 2010,
	journal      = {Biometrika},
	publisher    = {Oxford University Press},
	volume       = 97,
	number       = 2,
	pages        = {465--480}
}

@article{Chafai2009,
	title        = {Confidence regions for the multinomial parameter with small sample size},
	author       = {Chafai, D. and Concordet, D.},
	year         = 2009,
	journal      = {Journal of the American Statistical Association},
	publisher    = {Taylor \& Francis},
	volume       = 104,
	number       = 487,
	pages        = {1071--1079}
}

@article{Crow1956,
	title        = {Confidence intervals for a proportion},
	author       = {Crow, E. L.},
	year         = 1956,
	journal      = {Biometrika},
	publisher    = {JSTOR},
	volume       = 43,
	number       = {3/4},
	pages        = {423--435}
}

@Article{Dawid1973,
  author    = {Dawid, A. P.},
  journal   = {Biometrika},
  title     = {Posterior expectations for large observations},
  year      = {1973},
  number    = {3},
  pages     = {664--667},
  volume    = {60},
  publisher = {Oxford University Press},
}

@inproceedings{DeFinetti1961,
	title        = {The {B}ayesian approach to the rejection of outliers},
	author       = {De Finetti, B.},
	year         = 1961,
	booktitle    = {Proceedings of the 4th Berkeley Symposium on Mathematical Statistics and Probability},
	volume       = 1,
	pages        = {99--210},
	organization = {University of California Press Berkeley, CA, USA}
}

@article{Dezeure2015,
	title        = {High-Dimensional Inference: Confidence Intervals, p-Values and {R}-Software hdi},
	author       = {Dezeure, R. and B{\"u}hlmann, P. and Meier, L. and Meinshausen, N.},
	year         = 2015,
	journal      = {Statistical Science},
	publisher    = {Institute of Mathematical Statistics},
	volume       = 30,
	number       = 4,
	pages        = {533--558},
	issn         = {08834237, 21688745},
	url          = {http://www.jstor.org/stable/24780819},
	urldate      = {2026-02-24}
}

@article{Edgar2002,
	title        = {{G}ene {E}xpression {O}mnibus: {NCBI} gene expression and hybridization array data repository},
	author       = {Edgar, R. and Domrachev, M. and Lash, A. E.},
	year         = 2002,
	journal      = {Nucleic Acids Research},
	publisher    = {Oxford University Press},
	volume       = 30,
	number       = 1,
	pages        = {207--210}
}

@article{Efron2011,
	title        = {Tweedie's formula and selection bias},
	author       = {Efron, B.},
	year         = 2011,
	journal      = {Journal of the American Statistical Association},
	publisher    = {Taylor \& Francis},
	volume       = 106,
	number       = 496,
	pages        = {1602--1614}
}

@article{Farchione2008,
	title        = {Confidence intervals for the normal mean utilizing prior information},
	author       = {Farchione, D. and Kabaila, P.},
	year         = 2008,
	journal      = {Statistics \& Probability Letters},
	publisher    = {Elsevier},
	volume       = 78,
	number       = 9,
	pages        = {1094--1100}
}

@book{Feller1971,
	title        = {An Introduction to Probability Theory and Its Applications},
	author       = {Feller, W.},
	year         = 1971,
	publisher    = {New York: Wiley},
	volume       = 2,
	edition      = 2,
	chapter      = {XIII}
}

@article{Fraser1991,
	title        = {Statistical inference: Likelihood to significance},
	author       = {Fraser, D. A. S.},
	year         = 1991,
	journal      = {Journal of the American Statistical Association},
	publisher    = {Taylor \& Francis},
	volume       = 86,
	number       = 414,
	pages        = {258--265}
}

@article{Fraser2019,
	title        = {The p-value function and statistical inference},
	author       = {Fraser, D. A. S.},
	year         = 2019,
	journal      = {The American Statistician},
	publisher    = {Taylor \& Francis},
	volume       = 73,
	number       = {sup1},
	pages        = {135--147}
}

@article{Ghosh1961,
	title        = {On the relation among shortest confidence intervals of different types},
	author       = {Ghosh, J. K.},
	year         = 1961,
	journal      = {Calcutta Statistical Association Bulletin},
	publisher    = {SAGE Publications Sage India: New Delhi, India},
	volume       = 10,
	number       = 4,
	pages        = {147--152}
}

@article{Griffin2010,
	title        = {Inference with normal-gamma prior distributions in regression problems},
	author       = {Griffin, J. E. and Brown, P. J.},
	year         = 2010,
	journal      = {Bayesian Analysis},
	volume       = 5,
	number       = 1,
	pages        = {171--188}
}

@article{Griffin2011,
	title        = {Bayesian hyper-lassos with non-convex penalization},
	author       = {Griffin, J. E. and Brown, P. J.},
	year         = 2011,
	journal      = {Australian \& New Zealand Journal of Statistics},
	publisher    = {Wiley Online Library},
	volume       = 53,
	number       = 4,
	pages        = {423--442}
}

@article{Hoff2019,
	title        = {Exact adaptive confidence intervals for linear regression coefficients},
	author       = {Hoff, P. and Yu, C.},
	year         = 2019,
	journal      = {Electronic Journal of Statistics},
	volume       = 13,
	pages        = {94--119}
}

@article{Hoff2023,
	title        = {Bayes-optimal prediction with frequentist coverage control},
	author       = {Hoff, P.},
	year         = 2023,
	journal      = {Bernoulli},
	publisher    = {Bernoulli Society for Mathematical Statistics and Probability},
	volume       = 29,
	number       = 2,
	pages        = {901--928}
}

@article{Johnstone2004,
	title        = {Needles and Straw in Haystacks: Empirical {B}ayes Estimates of Possibly Sparse Sequences},
	author       = {Johnstone, I. M. and Silverman, B. W.},
	year         = 2004,
	journal      = {The Annals of Statistics},
	publisher    = {Institute of Mathematical Statistics},
	volume       = 32,
	number       = 4,
	pages        = {1594--1649}
}

@article{Joseph1995,
	title        = {Sample size calculations for binomial proportions via highest posterior density intervals},
	author       = {Joseph, L. and Wolfson, D. B. and Berger, R.},
	year         = 1995,
	journal      = {Journal of the Royal Statistical Society Series D: The Statistician},
	publisher    = {Oxford University Press},
	volume       = 44,
	number       = 2,
	pages        = {143--154}
}

@article{Kabaila2022,
	title        = {Confidence intervals that utilize sparsity},
	author       = {Kabaila, P. and Farchione, D.},
	year         = 2022,
	journal      = {Stat},
	publisher    = {Wiley Online Library},
	volume       = 11,
	number       = 1,
	pages        = {e434}
}

@article{Lindley1968,
	title        = {The choice of variables in multiple regression},
	author       = {Lindley, D. V.},
	year         = 1968,
	journal      = {Journal of the Royal Statistical Society: Series B (Methodological)},
	publisher    = {Wiley Online Library},
	volume       = 30,
	number       = 1,
	pages        = {31--53}
}

@article{Malloy2020,
	title        = {Optimal confidence regions for the multinomial parameter},
	author       = {Malloy, M. L. and Tripathy, A. and Nowak, R. D.},
	year         = 2020,
	journal      = {arXiv preprint arXiv:2002.01044}
}

@book{Molchanov2017,
	title        = {Theory of random sets},
	author       = {Molchanov, I.},
	year         = 2017,
	publisher    = {Springer-Verlag London},
	series       = {Probability Theory and Stochastic Modelling, volume 87}
}

@article{OHagan1981,
	title        = {A moment of indecision},
	author       = {O'Hagan, A.},
	year         = 1981,
	journal      = {Biometrika},
	publisher    = {Oxford University Press},
	volume       = 68,
	number       = 1,
	pages        = {329--330}
}

@article{OHagan2012,
	title        = {Bayesian heavy-tailed models and conflict resolution: A review},
	author       = {O'Hagan, A. and Pericchi, L.},
	year         = 2012,
	journal      = {Brazilian Journal of Probability and Statistics},
	volume       = 26,
	number       = 4,
	pages        = {372--401}
}

@book{Olver2010,
	title        = {NIST Handbook of Mathematical Functions},
	author       = {Olver, F. W. and Lozier, D. W. and Boisvert, R. F. and Clark, C. W.},
	year         = 2010,
	publisher    = {Cambridge University Press},
	address      = {USA},
	isbn         = {0521140633},
	edition      = {1st}
}

@article{Park2008,
	title        = {The {B}ayesian lasso},
	author       = {Park, T. and Casella, G.},
	year         = 2008,
	journal      = {Journal of the American Statistical Association},
	publisher    = {Taylor \& Francis},
	volume       = 103,
	number       = 482,
	pages        = {681--686}
}

@article{Pericchi1992,
	title        = {Exact and approximate posterior moments for a normal location parameter},
	author       = {Pericchi, L. R. and Smith, A. F. M.},
	year         = 1992,
	journal      = {Journal of the Royal Statistical Society Series B: Statistical Methodology},
	publisher    = {Oxford University Press},
	volume       = 54,
	number       = 3,
	pages        = {793--804}
}

@article{Pericchi1995,
	title        = {A note on bounded influence in {B}ayesian analysis},
	author       = {Pericchi, L. R. and Sans{\'o}, B.},
	year         = 1995,
	journal      = {Biometrika},
	publisher    = {Oxford University Press},
	volume       = 82,
	number       = 1,
	pages        = {223--225}
}

@inproceedings{Piironen2017,
	title        = {On the Hyperprior Choice for the Global Shrinkage Parameter in the Horseshoe Prior},
	author       = {Piironen, J. and Vehtari, A.},
	year         = 2017,
	booktitle    = {Proceedings of the 20th International Conference on Artificial Intelligence and Statistics},
	pages        = {905--913}
}

@article{Polson2012,
	title        = {On the half-{C}auchy prior for a global scale parameter},
	author       = {Polson, N. G. and Scott, J. G.},
	year         = 2012,
	journal      = {Bayesian Analysis},
	publisher    = {International Society for Bayesian Analysis},
	volume       = 7,
	number       = 4,
	pages        = {887--902}
}

@article{Pratt1961,
	title        = {Length of confidence intervals},
	author       = {Pratt, J. W.},
	year         = 1961,
	journal      = {Journal of the American Statistical Association},
	publisher    = {Taylor \& Francis},
	volume       = 56,
	number       = 295,
	pages        = {549--567}
}

@article{Pratt1963,
	title        = {Shorter confidence intervals for the mean of a normal distribution with known variance},
	author       = {Pratt, J. W.},
	year         = 1963,
	journal      = {The Annals of Mathematical Statistics},
	publisher    = {JSTOR},
	pages        = {574--586}
}

@article{Puza2006,
	title        = {Interval estimation via tail functions},
	author       = {Puza, B. and O'Neill, T.},
	year         = 2006,
	journal      = {Canadian Journal of Statistics},
	publisher    = {Wiley Online Library},
	volume       = 34,
	number       = 2,
	pages        = {299--310}
}

@article{Robbins1944,
	title        = {On the measure of a random set},
	author       = {Robbins, H. E.},
	year         = 1944,
	journal      = {The Annals of Mathematical Statistics},
	publisher    = {JSTOR},
	volume       = 15,
	number       = 1,
	pages        = {70--74}
}

@article{Robbins1945,
	title        = {On the measure of a random set. {II}},
	author       = {Robbins, H. E.},
	year         = 1945,
	journal      = {The Annals of Mathematical Statistics},
	publisher    = {JSTOR},
	volume       = 16,
	number       = 4,
	pages        = {342--347}
}

@book{Schweder2016,
	title        = {Confidence, likelihood, probability},
	author       = {Schweder, T. and Hjort, N. L.},
	year         = 2016,
	publisher    = {Cambridge University Press},
	volume       = 41
}

@article{Shimomura2016,
	title        = {Novel combination of serum micro{RNA} for detecting breast cancer in the early stage},
	author       = {Shimomura, A. and Shiino, S. and Kawauchi, J. and Takizawa, S. and Sakamoto, H. and Matsuzaki, J. and Ono, M. and Takeshita, F. and Niida, S. and Shimizu, C. and others},
	year         = 2016,
	journal      = {Cancer science},
	publisher    = {Wiley Online Library},
	volume       = 107,
	number       = 3,
	pages        = {326--334}
}

@article{Sterne1954,
	title        = {Some remarks on confidence or fiducial limits},
	author       = {Sterne, T. E.},
	year         = 1954,
	journal      = {Biometrika},
	publisher    = {JSTOR},
	volume       = 41,
	number       = {1/2},
	pages        = {275--278}
}

@article{Strawderman1971,
	title        = {Proper {B}ayes minimax estimators of the multivariate normal mean},
	author       = {Strawderman, W. E.},
	year         = 1971,
	journal      = {The Annals of Mathematical Statistics},
	publisher    = {JSTOR},
	volume       = 42,
	number       = 1,
	pages        = {385--388}
}

@article{Taraldsen1996,
	title        = {Measurability of intersections of measurable multifunctions},
	author       = {Taraldsen, G.},
	year         = 1996,
	journal      = {Rendiconti del Circolo Matematico di Palermo},
	publisher    = {Springer},
	volume       = 45,
	pages        = {459--472}
}

@article{Virtanen2020,
	title        = {{SciPy} 1.0: Fundamental Algorithms for Scientific Computing in {P}ython},
	author       = {Virtanen, P. and Gommers, R. and Oliphant, T. E. and Haberland, M. and Reddy, T. and Cournapeau, D. and Burovski, E. and Peterson, P. and Weckesser, W. and Bright, J. and {van der Walt}, S. J. and Brett, M. and Wilson, J. and Millman, K. J. and Mayorov, N. and Nelson, A. R. J. and Jones, E. and Kern, R. and Larson, E. and Carey, C. J. and Polat, {\.I}. and Feng, Y. and Moore, E. W. and {VanderPlas}, J. and Laxalde, D. and Perktold, J. and Cimrman, R. and Henriksen, I. and Quintero, E. A. and Harris, C. R. and Archibald, A. M. and Ribeiro, A. H. and Pedregosa, F. and {van Mulbregt}, P. and {SciPy 1.0 Contributors}},
	year         = 2020,
	journal      = {Nature Methods},
	volume       = 17,
	pages        = {261--272},
	doi          = {10.1038/s41592-019-0686-2},
	adsurl       = {https://rdcu.be/b08Wh}
}

@article{Wainwright2008,
	title        = {Graphical Models, Exponential Families, and Variational Inference},
	author       = {Wainwright, M. and Jordan, M.},
	year         = 2008,
	month        = jan,
	journal      = {Foundations and Trends in Machine Learning},
	volume       = 1,
	pages        = {1--305},
	doi          = {10.1561/2200000001}
}

@article{Woody2022,
	title        = {Optimal post-selection inference for sparse signals: a nonparametric empirical {B}ayes approach},
	author       = {Woody, S. and Padilla, O. H. M. and Scott, J. G.},
	year         = 2022,
	journal      = {Biometrika},
	publisher    = {Oxford University Press},
	volume       = 109,
	number       = 1,
	pages        = {1--16}
}

@article{Yu2018,
	title        = {Adaptive multigroup confidence intervals with constant coverage},
	author       = {Yu, C. and Hoff, P. D.},
	year         = 2018,
	journal      = {Biometrika},
	publisher    = {Oxford University Press},
	volume       = 105,
	number       = 2,
	pages        = {319--335}
}

@article{Zhang2013,
	title        = {Confidence Intervals for Low Dimensional Parameters in High Dimensional Linear Models},
	author       = {Zhang, C.-H. and Zhang, S. S.},
	year         = 2013,
	journal      = {Journal of the Royal Statistical Society Series B: Statistical Methodology},
	volume       = 76,
	number       = 1,
	pages        = {217--242}
}

\newpage
\appendix

\makeatletter
\renewcommand \thesection{S\@arabic\c@section}
\renewcommand\thetable{S\@arabic\c@table}
\renewcommand \thefigure{S\@arabic\c@figure}
\renewcommand{\theequation}{S\arabic{equation}}
\makeatother

\section{Derivation of \texorpdfstring{\Cref{eq:confidenceregionCgeneral}}{Equation (3)}}
\label{sec:derivationCgeneral}

For completeness, we provide a self-contained derivation of the general form of the FAB confidence region $C_\alpha(y)$ given in \cref{eq:confidenceregionCgeneral}.
As in \cref{sec:fabbackground}, let $\calY$ have density $f_\theta$ indexed by a parameter of interest $\theta \in \Theta \subseteq \mathbb{R}^d$.
For a fixed $\alpha \in (0, 1)$ and a prior $\pi$ on $\Theta$, the FAB confidence region $C_\alpha : \calY \mapsto C_\alpha(y) \subseteq \Theta$ is defined as the solution to the constrained optimization problem
\begin{equation}
    \underset{C_\alpha}{\argmin}\ R(C_\alpha) := \int_\Theta \E\left[\vol(C_\alpha(Y)) \mid \theta\right] \pi(d\theta) \text{ s.t. } \Pr(\theta \in C_\alpha(Y) \mid \theta = \theta_0) = 1 - \alpha
    \label{eq:fab_opt}
\end{equation}
for all $\theta_0 \in \Theta$, where $\vol(\cdot)$ denotes the Lebesgue measure on $\Theta$.

By applying Fubini-Tonelli's theorem, $R(C_\alpha)$ can be rewritten as
\begin{align*}
    R(C_\alpha) &= \int_\Theta \int_\calY \vol(C_\alpha(y)) f_\theta(y) dy \pi(d\theta) \\
    &= \int_\calY \vol(C_\alpha(y)) \left( \int_\Theta f_\theta(y) \pi(d\theta) \right) dy \\
    &= \int_\calY \vol(C_\alpha(y)) f(y) dy,
\end{align*}
where $f(y) = \int_\Theta f_\theta(y) \pi(d\theta)$ is the marginal density of $Y$ under $\pi$.
By expanding $\vol(C_\alpha(y))$ and applying Fubini-Tonelli's theorem again, we have that
\begin{equation*}
    R(C_\alpha) = \int_\Theta \int_\calY \mathbf{1}\{\theta_0 \in C_\alpha(y)\} f(y) dy d\theta_0.
\end{equation*}
Define the acceptance region $A_\alpha(\theta_0)=\{y \mid \theta_0\in C_\alpha(y)\}$.
It follows that $\mathbf 1\{\theta_0\in C_\alpha(y)\} = \mathbf 1\{y\in A_\alpha(\theta_0)\}$, and thus
\begin{equation}
    R(C_\alpha)=\int_\Theta \Pr\left(Y\in A_\alpha(\theta_0)\right) d\theta_0, \label{eq:RCAcceptanceRegion}
\end{equation}
where the probability inside the integral is computed with respect to the marginal density $f(y)$.
The equivalence between the expressions for $R(C_\alpha)$ in \cref{eq:fab_opt} and \cref{eq:RCAcceptanceRegion} is referred to as the Ghosh--Pratt identity \citep{Ghosh1961,Pratt1961}, and is a special case of Robbins' theorem \citep{Robbins1944,Robbins1945} on the expected measure of random sets \citep[Theorem 1.5.16]{Molchanov2017}.

With this, the optimization problem \eqref{eq:fab_opt} is equivalent to
\begin{equation*}
    \underset{A_\alpha}{\argmax} \int_\Theta \Pr\left(Y \notin A_\alpha(\theta_0)\right) d\theta_0 \text{ s.t. } \Pr(Y \in A_\alpha(\theta_0) \mid \theta = \theta_0) = 1 - \alpha
\end{equation*}
for all $\theta_0 \in \Theta$.
That is, for each $\theta_0 \in \Theta$, $A_\alpha(\theta_0)$ is the acceptance region of a Neyman--Pearson most powerful test for $H_0: \theta = \theta_0$ against $H_1: \theta \sim \pi$.
By the Neyman--Pearson lemma, $A_\alpha(\theta_0)$ can be obtained as a likelihood-ratio threshold, i.e.,
\begin{equation*}
    A_\alpha(\theta_0) = \left\{y \mid \log \frac{f(y)}{f_{\theta_0}(y)} \le k_\alpha(\theta_0) \right\},
\end{equation*}
where $k_{\alpha}(\theta_0)$ is the smallest value such that $\Pr(Y\in \AR(\theta)\mid \theta=\theta_0)=1-\alpha$.
Finally, the FAB confidence region $C_\alpha(y)$ is obtained by inverting the definition of $A_\alpha(\theta_0)$, i.e.,
\begin{equation*}
    C_\alpha(y) = \{\theta_0 \in \Theta \mid y \in A_\alpha(\theta_0)\}.
\end{equation*}

\section{Proofs of \texorpdfstring{\cref{sec:gaussianlikelihood}}{Section 3}}
\subsection{Proof of \texorpdfstring{\cref{thm:postmeaninCI}}{Theorem 3.1}}

For any $\theta_0\in\bbR$, the function
\begin{align*}
    \lambda_{\theta_0}(y) = \lambda_0(y) -\frac{y\theta_0}{\sigma^2} + \frac{{\theta_0}^2}{2\sigma^2}
\end{align*}
is differentiable, with derivative
\begin{align*}
    \lambda'_{\theta_0}(y) = \lambda'_0(y) - \frac{\theta_0}{\sigma^2}.
\end{align*}
From Tweedie's formula \citep{Efron2011}, we have that
\begin{align*}
    \lambda'_0(y) &= \ell'(y) + \frac{y}{\sigma^2} = \frac{1}{\sigma^2}\E[\theta\mid y] = \frac{1}{\sigma^2}\widehat\theta(y), \\
    \lambda''_0(y) &= \ell''(y) + \frac{1}{\sigma^2} = \frac{1}{\sigma^4}\var(\theta\mid y).
\end{align*}
Moreover, since the prior $\pi$ is non-degenerate, $\var(\theta\mid y) > 0$.
As a result, $\lambda_0$ (as well as $\lambda_{\theta_0}$, for any $\theta_0\in\bbR$) is strictly convex, and $\lambda'_0(y)$ and $\widehat\theta(y)$ are both strictly increasing and bijective functions.

Solving $\lambda'_{\theta_0}(y_0) = 0$ yields $y_0 = {\widehat\theta}^{-1}(\theta_0)\in \AR(\theta_0)$.
Therefore, by inversion, $\widehat{\theta}(y)\in \CR(y)$ for all $\alpha\in(0,1)$.
Furthermore, in this case, $\vol(C_\alpha(y))\to 0$ as $\alpha\to 1$.
Hence, the focal point $\widehat\theta^\FAB(y)$ is defined as in \cref{eq:focalpointFABCR}, and $\widehat\theta^\FAB(y) = \widehat\theta(y)$, as desired.

Additionally, the strict convexity of $\lambda_{\theta_0}(y)$ implies that $\AR(\theta_0)$ is an interval, written $[\loA(\theta_0), \upA(\theta_0)]$, with $\loA(\theta_0)<{\widehat\theta}^{-1}(\theta_0)<\upA(\theta_0)$. Due to the Neyman--Pearson construction \eqref{eq:acceptanceregion_neymanpearson}, the functions $\loA(\theta_0)$ and $\upA(\theta_0)$ satisfy the identities
\begin{align}
    \lambda_{\theta_0}(\upA(\theta_0))&=\lambda_{\theta_0}(\loA(\theta_0)), \label{eq:ulcond1}\\
    \Phi\left(\frac{\upA(\theta_0)-\theta_0}{\sigma}\right)-\Phi\left(\frac{\loA(\theta_0)-\theta_0}{\sigma}\right)&=1-\alpha.\label{eq:ulcond2}
\end{align}
Defining $w_\alpha(\theta_0)$ as in \cref{eq:wfunction} and solving it for $\loA(\theta_0)$, we can write $\loA(\theta_0)$ as
\begin{align*}
    \loA(\theta_0) = \theta_0 - \sigma \Phi^{-1}(1 - \alpha w_\alpha(\theta_0)).
\end{align*}
Similarly, by plugging this expression for $\loA(\theta_0)$ into \cref{eq:ulcond2} and solving it for $\upA(\theta_0)$, we can write $\upA(\theta_0)$ as
\begin{align}
    \upA(\theta_0) &= \theta_0 + \sigma \Phi^{-1}\left(\Phi\left(\frac{\loA(\theta_0) - \theta_0}{\sigma}\right) + (1 - \alpha)\right) \label{eq:upA_loA}\\
    &=\theta_0 + \sigma \Phi^{-1}(1 - \alpha (1 - w_\alpha(\theta_0))). \nonumber
\end{align}
Then, by inverting the acceptance intervals $\AR(\theta_0) = [\loA(\theta_0), \upA(\theta_0)]$ through \cref{eq:confidenceregionCgeneral}, we obtain the expression for the confidence region $\CR(y)$ in \cref{eq:confidenceregionCw}.

Finally, as shown in \cref{eq:upA_loA}, $\upA(\theta_0)$ can be written as $\upA(\theta_0)=h_\theta(\loA(\theta_0))$, where
\begin{align*}
    h_{\theta_0}(l)=\theta_0 + \sigma\Phi^{-1}\left(\Phi\left(\frac{l-\theta_0}{\sigma}\right) + (1-\alpha)\right)
\end{align*}
That is, both $\upA(\theta_0)$ \eqref{eq:upA_loA} and $w_\alpha(\theta_0)$ \eqref{eq:wfunction} may be expressed as the composition of $\loA(\theta_0)$ by a continuous and differentiable function.
As a result of this, it is enough to show the continuity and differentiability of $\loA(\theta_0)$ to conclude that the same property also holds for $\upA(\theta_0)$ and $w_\alpha(\theta_0)$.
To prove the former, note that, by \cref{eq:ulcond1}, $\loA(\theta_0)$ is the solution to the implicit equation
$$
    G(\theta_0,\loA(\theta_0))=0,
$$
where
$$
    G(\theta_0,l)=\lambda_{\theta_0}(l)-\lambda_{\theta_0}\left(h_{\theta_0}(l)\right).
$$
Moreover, the function $h_{\theta_0}(l)$ is differentiable, with $h'_{\theta_0}(l)>0$.
Then, we have that
\begin{align*}
    \frac{\partial G(\theta_0, l)}{dl}&=\lambda'_{\theta_0}(l)-\lambda'_{\theta_0}\left(h_{\theta_0}(l)\right)\times h'_{\theta_0}(l).
\end{align*}
Additionally, $\lambda'_{\theta_0}(\loA(\theta_0))<0$ and $\lambda'_{\theta_0}(\upA(\theta_0))>0$.
Therefore,
$$
    \left.\frac{\partial G(\theta_0,l)}{dl}\right|_{(\theta_0,\loA(\theta_0))}<0.
$$

As a result of this, by the implicit function theorem, $\loA(\theta_0)$ is continuously differentiable, as desired.

\subsection{Proof of \texorpdfstring{\cref{thm:robustCR}}{Theorem 3.2}}
\subsubsection{Asymptotic behavior of \texorpdfstring{$\widehat\theta^\FAB(y)$}{theta-hat-FAB(y)}}
\label{sec:asympthetafab}

From Tweedie's formula,
\begin{equation*}
    \widehat\theta^\FAB(y)-y=\sigma^2 \ell'(y).
\end{equation*}
Let
$$
    h(y)=f(y)e^{\frac{\kappa}{\sigma}|y|}.
$$
The function $h$ is differentiable on $\bbR\backslash \{0\}$, with $h'$ ultimately monotone, and $h(y)\sim \gamma |y|^{-\delta}$.
From \citet[Theorem 1.7.2, p.39]{Bingham1989}
\begin{equation*}
    h'(y)\sim -\delta\gamma y^{-\delta-1}
\end{equation*}
as $y\to \infty$. Hence,
$$
    (\log(h(y)))'=\frac{h'(y)}{h(y)}\sim -\delta|y|^{-1}\to 0
$$
as $y\to \infty$. Similarly, $(\log(h(y)))' \to 0$ as $y\to-\infty$. It follows that
$$
    \ell'(y)\to \left \{\begin{tabular}{ll}
        $-\frac{\kappa}{\sigma}$  & \text{as }$y\to\infty $ \\
        $\frac{\kappa}{\sigma}$ & \text{as }$y\to-\infty $
    \end{tabular}\right.
$$
and
$$
    \widehat\theta^\FAB(y)-y\to \left \{\begin{tabular}{ll}
        $-\kappa\sigma$  & \text{as }$y\to\infty $ \\
        $\kappa\sigma$ & \text{as }$y\to-\infty $ .
    \end{tabular}\right.
$$

\subsubsection{Asymptotic behavior of \texorpdfstring{$C_\alpha(y)$}{C-alpha(y)}}

By \cref{thm:postmeaninCI}, $\AR(\theta_0)=[\loA(\theta_0),\upA(\theta_0)]$, where  $\loA(\theta_0)$ and $\upA(\theta_0)$ satisfy, for any fixed $\theta_0$, \cref{eq:ulcond1,eq:ulcond2}.
As per \cref{eq:confidenceregionCw}, the FAB confidence region $\CR(y)$ is given by
\begin{align*}
    \CR(y) = \{\theta_0\mid \loA(\theta_0)=\theta_0-\sigma \Phi^{-1}(1-\alpha w_\alpha(\theta_0)) \leq y \leq \theta_0 + \sigma \Phi^{-1}(1-\alpha(1-w_\alpha(\theta_0))) = \upA(\theta_0) \},
\end{align*}
where $w_\alpha$ is defined as in \cref{eq:wfunction}.
The following lemma, proved in \cref{sec:prooflemmawbounded}, describes the behavior of $w_\alpha(\theta_0)$ under the assumptions of \cref{thm:robustCR} as $|\theta_0|\to\infty$.
\begin{lemma}\label{thm:wbounded}
    Let $w_\alpha$ be defined as in \cref{eq:wfunction}.
    Under the assumptions of \cref{thm:robustCR}, $w_\alpha(\theta_0)$ is bounded away from 0 and 1, and
    \begin{align}
        \lim_{\theta_0\to -\infty} w_\alpha(\theta_0)=1-\lim_{\theta_0\to \infty} w_\alpha(\theta_0)= c_\alpha,\label{eq:limitcalpha}
    \end{align}
    where $c_\alpha=g_\alpha^{-1}(-2\kappa)\in(0,1/2]$ with $g_\alpha$ defined in \cref{eq:galpha}.
\end{lemma}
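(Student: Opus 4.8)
The plan is to pin down $w_\alpha$ by playing the two identities \cref{eq:ulcond1}--\cref{eq:ulcond2} that define the acceptance endpoints against the heavy‑tail information already extracted in \cref{sec:asympthetafab}. Throughout write $a(\theta_0)=\loA(\theta_0)-\theta_0$ and $b(\theta_0)=\upA(\theta_0)-\theta_0$; by \cref{eq:confidenceregionCw} and $\Phi^{-1}(1-x)=-\Phi^{-1}(x)$ one has $a=\sigma\Phi^{-1}(\alpha w_\alpha(\theta_0))$, $b=-\sigma\Phi^{-1}(\alpha(1-w_\alpha(\theta_0)))$, hence $a+b=\sigma g_\alpha(w_\alpha(\theta_0))$ with $g_\alpha$ as in \cref{eq:galpha}. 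From the proof of \cref{thm:postmeaninCI}, $\lambda_{\theta_0}'(y)=\sigma^{-2}(\widehat\theta(y)-\theta_0)$, so integrating and using \cref{eq:ulcond1} turns the equal‑ratio identity into
\begin{equation*}
  \int_{\loA(\theta_0)}^{\upA(\theta_0)}\bigl(\widehat\theta(y)-\theta_0\bigr)\,dy=0 .
\end{equation*}
Finally, the computation in \cref{sec:asympthetafab} — which is where assumption \cref{eq:regvar1} and the ultimate‑monotonicity hypothesis enter — gives $\widehat\theta(y)=y+\sigma^2\ell'(y)=y-\sigma\kappa\,\text{sign}(y)+r(y)$ with $r(y)\to0$ as $|y|\to\infty$, while $\ell'$ is continuous with finite limits at $\pm\infty$, so $M:=\sup_{\bbR}|\ell'|<\infty$.

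The crux is to show $a(\theta_0),b(\theta_0)$ stay bounded as $\theta_0\to+\infty$ (the case $\theta_0\to-\infty$ is symmetric). First, $b$ is bounded above: if $b\to+\infty$ along a subsequence, \cref{eq:ulcond2} forces $a\to\sigma\Phi^{-1}(\alpha)$, both endpoints diverge to $+\infty$, $r\to0$ uniformly on $[\loA,\upA]$, and dividing the integral identity by $\upA-\loA=b-a$ gives $\tfrac12(a+b)\to\sigma\kappa$, contradicting $b\to\infty$. Since $a<\widehat\theta^{-1}(\theta_0)-\theta_0=\sigma\kappa+o(1)$, $a$ is bounded above too. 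If $a\to-\infty$ along a subsequence, split the identity at $y_\ast=\widehat\theta^{-1}(\theta_0)\in[\loA,\upA]$: over $[y_\ast,\upA]$ the integrand lies in $[0,b+\sigma^2M]$ on an interval of bounded length, so that part is bounded, whereas over $[\loA,y_\ast]$, using $\theta_0-\widehat\theta(y)\ge\theta_0-y-\sigma^2M$, one gets $\int_{\loA}^{y_\ast}(\theta_0-\widehat\theta)\ge\tfrac12(\theta_0-\sigma^2M-\loA)^2-O(1)\to+\infty$, contradicting $\int_{\loA}^{y_\ast}(\theta_0-\widehat\theta)=\int_{y_\ast}^{\upA}(\widehat\theta-\theta_0)$. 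Hence $a,b$ are bounded near $\pm\infty$, i.e.\ $w_\alpha(\theta_0)$ is bounded away from $0$ and $1$ there; together with the continuity of $w_\alpha$ (\cref{thm:postmeaninCI}) and the pointwise bounds $w_\alpha(\theta_0)\in(0,1)$ — valid because $\lambda_{\theta_0}$ is strictly convex and coercive, so $\AR(\theta_0)$ is a genuine bounded interval — this yields $w_\alpha$ bounded away from $0$ and $1$ on all of $\bbR$.

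Knowing $a,b$ are bounded, $\loA(\theta_0)=\theta_0+a\to+\infty$, so $r\to0$ uniformly on $[\loA,\upA]$; substituting $\widehat\theta(y)=y-\sigma\kappa+r(y)$ into the integral identity and dividing by $b-a$ (bounded away from $0$ by \cref{eq:ulcond2}) gives $\tfrac12(a+b)\to\sigma\kappa$, i.e.\ $g_\alpha(w_\alpha(\theta_0))\to2\kappa$. Since $g_\alpha\colon(0,1)\to\bbR$ is a continuous increasing bijection with $g_\alpha(1-\omega)=-g_\alpha(\omega)$, it follows that $w_\alpha(\theta_0)\to g_\alpha^{-1}(2\kappa)=1-g_\alpha^{-1}(-2\kappa)=1-c_\alpha$, and the symmetric computation at $-\infty$ (with $\widehat\theta(y)\to y+\sigma\kappa$) gives $w_\alpha(\theta_0)\to c_\alpha$, which is \cref{eq:limitcalpha}. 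For the range: $c_\alpha=g_\alpha^{-1}(-2\kappa)>0$ because $g_\alpha(\omega)\to-\infty$ as $\omega\to0^+$, and (for $\kappa>0$) $c_\alpha<\Phi(-\kappa)$ because $g_\alpha(\Phi(-\kappa))>-2\kappa$, equivalently $\Phi^{-1}(\alpha\Phi(\kappa))-\Phi^{-1}(\alpha\Phi(-\kappa))<2\kappa$; this follows from the strict concavity on $(0,1)$ of $p\mapsto\phi(\Phi^{-1}(p))$ (its derivative equals $-\Phi^{-1}$) together with its vanishing at $p=0$, which gives $\phi(\Phi^{-1}(\alpha p))>\alpha\,\phi(\Phi^{-1}(p))$ and hence $\tfrac{d}{d\kappa}\Phi^{-1}(\alpha\Phi(\kappa))<1$, where $\phi=\Phi'$.

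The main obstacle is the boundedness step: nothing prevents a priori the acceptance interval from drifting with $\theta_0$ so that $w_\alpha(\theta_0)\to0$ or $1$ — which is exactly the light‑tailed (e.g.\ Gaussian) behaviour — and ruling this out is where the heavy‑tail hypothesis \cref{eq:regvar1} is used, via $\ell'(y)\to\mp\kappa/\sigma$ and the global boundedness of $\ell'$; everything else is bookkeeping with \cref{eq:ulcond1}--\cref{eq:ulcond2}.
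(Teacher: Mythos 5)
Your proof is correct, but it takes a genuinely different route from the paper's. Where the paper works with the equal-height condition \eqref{eq:ulcond1} in multiplicative form, $f(\theta+\sigma\mu_1)e^{\mu_1^2/2}=f(\theta-\sigma\mu_2)e^{\mu_2^2/2}$, and derives the contradiction by plugging the tail form \eqref{eq:regvar1} of $f$ into both sides (and then, for the limit, studies the monotonicity of $\omega\mapsto e^{u(\omega)}$, which requires the auxiliary \cref{th:lemmamu1mu2bound} on the signs of $\mu_1+\kappa$ and $\mu_2-\kappa$), you integrate $\lambda'_{\theta_0}(y)=\sigma^{-2}(\widehat\theta(y)-\theta_0)$ to turn \eqref{eq:ulcond1} into the balance identity $\int_{\loA}^{\upA}(\widehat\theta(y)-\theta_0)\,dy=0$ and then only need two facts already established in \cref{sec:asympthetafab}: $\ell'$ is bounded (continuity plus finite limits at $\pm\infty$) and $\ell'(y)\to\mp\kappa/\sigma$. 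Both the boundedness of $a,b$ and the limit $\tfrac12(a+b)\to\sigma\kappa$, i.e.\ $g_\alpha(w_\alpha)\to 2\kappa$, then fall out of the same identity, so you avoid the auxiliary lemma entirely; your argument is more self-contained and, to my eye, more transparent, while the paper's direct manipulation of $f$'s tails would be the one to adapt if finer rate information were wanted. Two remarks. First, your normalization $\widehat\theta(y)-y\to-\sigma\kappa$ is the one consistent with \eqref{eq:posteriormean} and with the Laplace example; the paper's appendix briefly writes $\widehat\theta^\FAB(y)-y=\sigma\ell'(y)$ and states the limit as $\kappa$ in \eqref{eq:lim_focal_point}, which appears to be a dropped factor of $\sigma$ there rather than an error on your side. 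Second, you actually supply a proof of the range claim $c_\alpha<\Phi(-\kappa)$ via the concavity of $p\mapsto\phi(\Phi^{-1}(p))$, which the paper asserts without argument; note only that at $\kappa=0$ the inclusion degenerates to the equality $c_\alpha=1/2=\Phi(0)$, so your restriction to $\kappa>0$ in that step is the correct reading of the statement.
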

By \cref{thm:wbounded}, $w_\alpha$ is bounded away from 0 and 1, and therefore
$$
    \sup_{\theta_0\in\bbR}|\Phi^{-1}(1-\alpha w_\alpha(\theta_0))|\leq M,~~~\sup_{\theta_0\in\bbR}|\Phi^{-1}(1-\alpha(1-w_\alpha(\theta_0)))| \leq M
$$
for some $M>0$. Hence, by \cref{prop:Hausdorffconvergence}, \cref{eq:limitcalpha} and the continuity of $\Phi^{-1}$,
$$
    \lim_{y\to\infty}  \CR(y)-y
    =
    \left[-\sigma \Phi^{-1}(1-\alpha c_\alpha),\ \sigma \Phi^{-1}(1-\alpha (1-c_\alpha)) \right]
$$
where the convergence is with respect to the Hausdorff distance. Similarly,
\[
    \lim_{y\to-\infty}  \CR(y)-y
    =
    \left[-\sigma \Phi^{-1}(1-\alpha (1-c_\alpha)),\ \sigma \Phi^{-1}(1-\alpha c_\alpha) \right].
\]

\subsubsection{Proof of \texorpdfstring{\cref{thm:wbounded}}{Lemma S2.1}}
\label{sec:prooflemmawbounded}

The proof is split in two parts. We first prove that $w_\alpha$ is bounded away from 0 and 1. Then we prove  that
\begin{equation*}
    \lim_{\theta\to -\infty} w_\alpha(\theta)=1-\lim_{\theta\to \infty} w_\alpha(\theta)= c_\alpha
\end{equation*}
where $c_\alpha=g_\alpha^{-1}(-2\kappa)$. We first introduce some notations.

Let $\alpha\in(0,1).$ For any $\omega\in(0,1)$, let
\begin{align*}
    \mu_1(\omega)&=z_{1-\alpha(1-\omega)},\\
    \mu_2(\omega)&=z_{1-\alpha \omega},
\end{align*}
which take values in $[z_{1-\alpha},\infty)$. Recall from \cref{eq:confidenceregionCw} that $\upA(\theta)=\theta + \sigma\mu_1(w_\alpha(\theta))$ and $\loA(\theta)=\theta-\sigma\mu_2(w_\alpha(\theta))$, where $\loA(\theta)$ and $\upA(\theta)$ are respectively the lower bound and upper bound of the acceptance interval $A_\alpha(\theta)$; see \cref{fig:CI_proof} for an illustration.

\begin{figure}
  \centering
  \includegraphics[width=0.75\textwidth]{./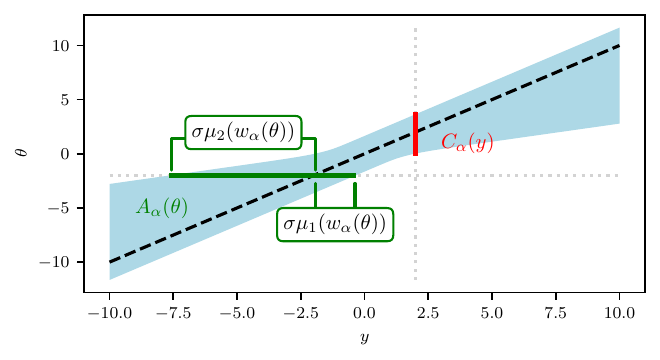}
  \caption{Illustration of the notations used in proof of \cref{thm:wbounded}. For the acceptance interval $A_\alpha(\theta)=[\loA(\theta),\upA(\theta)]$, we have $\upA(\theta)=\theta+\sigma\mu_1(w_\alpha(\theta))$ and $\loA(\theta)=\theta-\sigma\mu_2(w_\alpha(\theta))$.}
  \label{fig:CI_proof}
\end{figure}

\paragraph{Proof that $w_\alpha$ is bounded away from 0 and 1}

We prove that $w_\alpha(\theta)$ is bounded away from $0$. The proof for $1$ is analogous.

Assume for contradiction that $w_\alpha$ is not bounded away from $0$. Then there exists a sequence $\theta_n$ such that
\[
    \omega_n:=w_\alpha(\theta_n)\to0.
\]
Set
\[
    a_n=\theta_n+\sigma\mu_1(\omega_n),\qquad
    b_n=\theta_n-\sigma\mu_2(\omega_n).
\]
Since $\omega_n\to0$, we have $\mu_1(\omega_n)\to z_{1-\alpha}$ and $\mu_2(\omega_n)\to\infty$.

By the defining equation for $w_\alpha$,
\[
    f(a_n)\exp\!\left(\frac{\mu_1(\omega_n)^2}{2}\right)
    =
    f(b_n)\exp\!\left(\frac{\mu_2(\omega_n)^2}{2}\right),
\]
hence
\[
    \frac{f(a_n)}{f(b_n)}
    =
    \exp\!\left(\frac{\mu_2(\omega_n)^2-\mu_1(\omega_n)^2}{2}\right).
\]

Because $f$ is bounded on $\mathbb R$, say by $M$,
\[
    f(b_n)\le
    M\exp\!\left(\frac{\mu_1(\omega_n)^2-\mu_2(\omega_n)^2}{2}\right)\to0.
\]
Since $f$ is continuous and strictly positive on $\mathbb R$, this implies $|b_n|\to\infty$.

If $\delta=\kappa=0$, then \eqref{eq:regvar1} gives $f(y)\to\gamma>0$ as $|y|\to\infty$, contradicting $f(b_n)\to0$. Therefore, from now on, $\delta>0$ or $\kappa>0$.

Using \eqref{eq:regvar1} together with the bound above,
\[
    |b_n|^{-\delta}\exp\!\left(-\frac{\kappa}{\sigma}|b_n|\right)
    = O\!\left(\exp\!\left(-\frac{\mu_2(\omega_n)^2}{2}\right)\right).
\]
Taking logarithms shows that
\[
    \mu_2(\omega_n)=o(|b_n|).
\]
Since
\[
    a_n-b_n=\sigma\bigl(\mu_1(\omega_n)+\mu_2(\omega_n)\bigr)=O(\mu_2(\omega_n)),
\]
it follows that
\[
    a_n-b_n=o(|b_n|).
\]
Hence, $a_n$ and $b_n$ have the same sign for $n$ large enough, and
\[
    \frac{|a_n|}{|b_n|}\to1.
\]
Therefore,
\[
    \frac{f(a_n)}{f(b_n)}
    \sim
    \left(\frac{|a_n|}{|b_n|}\right)^{-\delta}
    \exp\!\left(-\frac{\kappa}{\sigma}\bigl(|a_n|-|b_n|\bigr)\right)
    =
    \exp\bigl(O(\mu_2(\omega_n))\bigr)
    =
    o\!\left(\exp\!\left(\frac{\mu_2(\omega_n)^2}{2}\right)\right).
\]
But the boundary equation gives
\[
    \frac{f(a_n)}{f(b_n)}
    =
    \exp\!\left(\frac{\mu_2(\omega_n)^2-\mu_1(\omega_n)^2}{2}\right)
    \asymp
    \exp\!\left(\frac{\mu_2(\omega_n)^2}{2}\right),
\]
because $\mu_1(\omega_n)$ is bounded. This is a contradiction.

\paragraph{Asymptotic behavior of $w_\alpha(\theta)$}

We now proceed to the proof that $\lim_{\theta\to -\infty} w_\alpha(\theta)=c_\alpha$, where $c_\alpha=g_\alpha^{-1}(-2\kappa)$.

As $w_\alpha$ is bounded away from 0 and 1, the functions $\mu_1(\omega)$ and $\mu_2(\omega)$ are bounded. Hence, $|\theta +\sigma\mu_1(w_\alpha(\theta))|\sim |\theta -\sigma\mu_2(w_\alpha(\theta))|\sim |\theta |$ as $|\theta|\to\infty$. It follows from \eqref{eq:regvar1} that, as $\theta\to -\infty$,
\[
    \frac{f\left(  \theta+\sigma\mu_1(w_\alpha(\theta)) \right)  }{f\left(  \theta-\sigma \mu_2(w_\alpha(\theta))\right)  }\sim e^{\kappa(\mu_1(w_\alpha(\theta))+\mu_2(w_\alpha(\theta)))}.
\]
It follows that, for any $\epsilon_0>0$, there exists $\theta_0\in\bbR$ such that for all $\theta<\theta_0$, $\omega=w_\alpha(\theta)$ satisfies
\[
    1-\epsilon_0 \leq\exp\left(\frac{1}{2}\mu_2(\omega)^2-\frac{1}{2}\mu_1(\omega)^2-\kappa(\mu_1(\omega)+\mu_2(\omega))   \right)\leq1+\epsilon_0.
\]

Consider the continuous and differentiable function $h(\omega)=e^{u(\omega)}$ where
$$
    u(\omega)=\frac{1}{2}\mu_2(\omega)^2-\frac{1}{2}\mu_1(\omega)^2-\kappa(\mu_1(\omega)+\mu_2(\omega)).
$$
Then
\begin{align*}
    u'(\omega)=-\alpha\left(\frac{\mu_1(\omega)+\kappa}{\phi(\mu_1(\omega))} + \frac{\mu_2(\omega)-\kappa}{\phi(\mu_2(\omega))} \right).
\end{align*}
We now use the following lemma, proved at the end of this subsection.
\begin{lemma}\label{th:lemmamu1mu2bound}
    Let $\kappa\geq 0$ be the constant appearing in \eqref{eq:regvar1}.
    Under the assumptions of \cref{thm:robustCR}, for every $\alpha\in(0,1)$,
    there exists $\theta_1\in\bbR$ such that, for all $\theta<\theta_1$,
    \begin{align*}
        \mu_1(w_\alpha(\theta))+\kappa&>0,\\
        \mu_2(w_\alpha(\theta))-\kappa&>0.
    \end{align*}
\end{lemma}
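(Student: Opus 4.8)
The plan is to deduce the lemma directly from the two facts already available at the point it is invoked in the proof of \cref{thm:wbounded}: that $w_\alpha$ is bounded away from $0$ and $1$, so that $w_\alpha(\theta)$ stays in a fixed compact subinterval $[\epsilon,1-\epsilon]\subset(0,1)$ for all $\theta$; and that $h(w_\alpha(\theta))\to 1$, i.e.\ $u(w_\alpha(\theta))\to 0$, as $\theta\to-\infty$. Recalling $\mu_1(\omega)=z_{1-\alpha(1-\omega)}$ and $\mu_2(\omega)=z_{1-\alpha\omega}$, the starting observation is the factorization $u(\omega)=\tfrac{1}{2}\mu_2(\omega)^2-\tfrac{1}{2}\mu_1(\omega)^2-\kappa\bigl(\mu_1(\omega)+\mu_2(\omega)\bigr)=\bigl(\mu_1(\omega)+\mu_2(\omega)\bigr)\bigl(\tfrac{1}{2}(\mu_2(\omega)-\mu_1(\omega))-\kappa\bigr)$.

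First I would show that $\omega\mapsto\mu_1(\omega)+\mu_2(\omega)$ is bounded below by a positive constant $m_0$ on $[\epsilon,1-\epsilon]$. By definition of $\mu_1,\mu_2$ and $\Phi^{-1}(1-p)=-\Phi^{-1}(p)$ one computes $\Phi(\mu_1(\omega))=1-\alpha+\alpha\omega$ and $\Phi(-\mu_2(\omega))=\alpha\omega$, whence $\Phi(\mu_1(\omega))-\Phi(-\mu_2(\omega))=1-\alpha>0$ and therefore $\mu_1(\omega)>-\mu_2(\omega)$; since $\mu_1+\mu_2$ is continuous and strictly positive on the compact set $[\epsilon,1-\epsilon]$, it attains there a minimum $m_0>0$. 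Dividing the factorization evaluated at $\omega=w_\alpha(\theta)$ by $\mu_1(w_\alpha(\theta))+\mu_2(w_\alpha(\theta))\ge m_0$ and using $u(w_\alpha(\theta))\to 0$ then gives $\tfrac{1}{2}\bigl(\mu_2(w_\alpha(\theta))-\mu_1(w_\alpha(\theta))\bigr)-\kappa\to 0$ as $\theta\to-\infty$.

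Finally I would conclude by writing, at $\omega=w_\alpha(\theta)$, the two identities $\mu_1(\omega)+\kappa=\tfrac{1}{2}(\mu_1(\omega)+\mu_2(\omega))-\bigl(\tfrac{1}{2}(\mu_2(\omega)-\mu_1(\omega))-\kappa\bigr)$ and $\mu_2(\omega)-\kappa=\tfrac{1}{2}(\mu_1(\omega)+\mu_2(\omega))+\bigl(\tfrac{1}{2}(\mu_2(\omega)-\mu_1(\omega))-\kappa\bigr)$: in each the leading term is at least $\tfrac{1}{2}m_0$ while the bracketed correction tends to $0$, so there is $\theta_1$ such that $\mu_1(w_\alpha(\theta))+\kappa\ge\tfrac{1}{4}m_0>0$ and $\mu_2(w_\alpha(\theta))-\kappa\ge\tfrac{1}{4}m_0>0$ for all $\theta<\theta_1$, as claimed. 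The one place needing care is the strict lower bound $m_0$ on $\mu_1+\mu_2$, which is exactly where boundedness of $w_\alpha$ away from the endpoints enters; the remainder is elementary algebra, so I do not expect a real obstacle. As a byproduct, the same estimate shows $\mu_2(w_\alpha(\theta))-\mu_1(w_\alpha(\theta))=-g_\alpha(w_\alpha(\theta))\to 2\kappa$ (with $g_\alpha$ as in \cref{eq:galpha}), which is precisely the input used afterwards to identify $\lim_{\theta\to-\infty}w_\alpha(\theta)=g_\alpha^{-1}(-2\kappa)=c_\alpha$ and complete the proof of \cref{thm:wbounded}.
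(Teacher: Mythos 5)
Your proof is correct, but it takes a genuinely different route from the paper's. The paper deduces the lemma from the focal-point asymptotics of \cref{sec:asympthetafab}: since $\widehat\theta(y)-y\to\kappa$ as $y\to-\infty$, one has $\widehat\theta^{-1}(\theta)-\theta\to-\kappa$ as $\theta\to-\infty$, and combining this with the strict sandwiching $-\mu_2(w_\alpha(\theta))<\widehat\theta^{-1}(\theta)-\theta<\mu_1(w_\alpha(\theta))$ from \cref{thm:postmeaninCI} (the minimizer of $\lambda_\theta$ lies strictly inside $[\loA(\theta),\upA(\theta)]$) yields the two inequalities. You instead stay entirely inside the second half of the proof of \cref{thm:wbounded}, exploiting the factorization $u(\omega)=\bigl(\mu_1(\omega)+\mu_2(\omega)\bigr)\bigl(\tfrac12(\mu_2(\omega)-\mu_1(\omega))-\kappa\bigr)$ together with the two facts already in hand there ($w_\alpha$ bounded away from $0$ and $1$, and $u(w_\alpha(\theta))\to 0$); your observation that $\Phi(\mu_1(\omega))-\Phi(-\mu_2(\omega))=1-\alpha$ forces $\mu_1+\mu_2\ge m_0>0$ on a compact subinterval is the key new ingredient, and the rest is algebra. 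Both arguments are legitimate and non-circular. The paper's is shorter and gives the lemma a probabilistic interpretation (the posterior-mean preimage stays inside the acceptance interval), but it passes a strict inequality through a limit, which strictly speaking only yields $\liminf(\mu_1+\kappa)\ge 0$ without a further quantitative remark; your version produces an explicit uniform lower bound $\tfrac14 m_0$ and so avoids that delicacy. Your route also yields $\mu_2(w_\alpha(\theta))-\mu_1(w_\alpha(\theta))\to 2\kappa$ as a byproduct, i.e.\ $g_\alpha(w_\alpha(\theta))\to-2\kappa$, which by continuity of $g_\alpha^{-1}$ identifies the limit $c_\alpha$ directly and would let one bypass the monotone-inverse argument for $h$ in the remainder of the paper's proof.
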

By \cref{th:lemmamu1mu2bound}, taking $\theta_0$ small enough, we have $u'(\omega)<0$ for any $\omega=w_\alpha(\theta)$ with $\theta<\theta_0$. Hence, $h$ is continuous, strictly monotone on $w_\alpha((-\infty,\theta_0))$, therefore has a continuous inverse $h^{-1}$. We are interested in $h^{-1}(1)$, which is the solution to $u(\omega)=0$, or $\mu_2(\omega)-\mu_1(\omega)=2\kappa$. Noting that $\mu_1(\omega)-\mu_2(\omega)=g_\alpha(\omega)$, where $g_\alpha$ is defined in \cref{eq:galpha}, we obtain $h^{-1}(1)=g_\alpha^{-1}(-2\kappa)=c_\alpha$. By continuity of $h^{-1}$, for any $\varepsilon>0$, there exists $\delta > 0$ such that $\left\vert h(\omega)-1\right\vert < \delta$ implies $\left\vert \omega-h^{-1}(1)\right\vert =\left\vert \omega-c_\alpha\right\vert <\varepsilon$. This completes the proof.

\begin{proof}[Proof of \cref{th:lemmamu1mu2bound}]
    By the first part of the proof of \cref{thm:wbounded}, $w_\alpha(\theta)$ is
    bounded away from $0$ and $1$ as $\theta\to-\infty$. Hence, there exist
    $\epsilon\in(0,1/2)$ and $\theta_0\in\bbR$ such that
    \[
        w_\alpha(\theta)\in[\epsilon,1-\epsilon]
        \qquad\text{for all }\theta<\theta_0.
    \]
    Let
    \[
        S(\omega)=\mu_1(\omega)+\mu_2(\omega).
    \]
    For every $\omega\in(0,1)$, $S(\omega)>0$. Indeed, writing $p=\alpha(1-\omega)$ and $q=\alpha\omega$, we have $p+q=\alpha<1$, and hence
    \[
        \Phi^{-1}(1-p)>-\Phi^{-1}(1-q).
    \]
    Thus $S(\omega)>0$. By continuity and compactness,
    \[
        m:=\inf_{\omega\in[\epsilon,1-\epsilon]} S(\omega)>0.
    \]

    From the boundary condition and the tail equivalence \eqref{eq:regvar1}, as $\theta\to-\infty$ with $\omega=w_\alpha(\theta)$,
    \[
        u(\omega):=
        \frac{1}{2}\mu_2(\omega)^2-\frac{1}{2}\mu_1(\omega)^2
        -\kappa\{\mu_1(\omega)+\mu_2(\omega)\}
        \to 0.
    \]
    Since
    \[
        u(\omega)
        =
        \frac{\mu_1(\omega)+\mu_2(\omega)}{2}
        \{\mu_2(\omega)-\mu_1(\omega)-2\kappa\},
    \]
    it follows that
    \[
        D(\omega):=\mu_2(\omega)-\mu_1(\omega)-2\kappa
        \to 0
        \qquad
        \text{along } \omega=w_\alpha(\theta),\ \theta\to-\infty.
    \]
    Choose $\theta_1\leq \theta_0$ such that
    \[
        |D(w_\alpha(\theta))|<m
        \qquad\text{for all }\theta<\theta_1.
    \]
    Then, for $\omega=w_\alpha(\theta)$,
    \[
        \mu_1(\omega)+\kappa
        =
        \frac{S(\omega)-D(\omega)}{2}
        >
        \frac{m-|D(\omega)|}{2}
        >0,
    \]
    and similarly
    \[
        \mu_2(\omega)-\kappa
        =
        \frac{S(\omega)+D(\omega)}{2}
        >
        \frac{m-|D(\omega)|}{2}
        >0.
    \]
    This proves the claim.
\end{proof}

\subsection{Limiting FAB intervals and the improper exponential-tilt prior}
\label{sec:exp_tilt_limits}

The limits obtained in \cref{thm:robustCR} can be identified with FAB-CRs associated with an improper exponential-tilt prior, as introduced by the next proposition.

\begin{proposition}
    \label{prop:exp_tilt_prior}
    Let $Y\mid \theta\sim \Normal(\theta,\sigma^2)$ and let
    \[
        \pi_\xi(d\theta)=\exp\left(-\frac{\xi\theta}{\sigma}\right)\,d\theta,
        \qquad \xi\in\bbR.
    \]
    Denote by $f_\xi(y)$ the corresponding marginal likelihood and by $\CR^{(\xi)}(y)$ the FAB-CR obtained under
    $\pi_\xi$.
    Then,
    \[
        f_\xi(y)
        =
        \int_\bbR f_\theta(y)\pi_\xi(d\theta)
        =
        \exp\left(-\frac{\xi y}{\sigma}+\frac{\xi^2}{2}\right),
    \]
    and the posterior distribution is
    \[
        \theta\mid y \sim \Normal(y-\sigma\xi,\sigma^2).
    \]
    In particular, the focal point is
    \[
        \widehat\theta_\xi^\FAB(y)=y-\sigma\xi.
    \]

    Furthermore, the spending function is constant in $\theta_0$ and equal to
    \[
        w_{\alpha,\xi}=g_\alpha^{-1}(2\xi).
    \]
    Hence, the acceptance interval is given by
    \[
        \AR^{(\xi)}(\theta_0)
        =
        \left[
        \theta_0-\sigma \Phi^{-1}(1-\alpha w_{\alpha,\xi}),
        \,
        \theta_0+\sigma \Phi^{-1}(1-\alpha(1-w_{\alpha,\xi}))
        \right],
    \]
    and the corresponding FAB-CR takes the form
    \[
        \CR^{(\xi)}(y)
        =
        \left[
        y-\sigma \Phi^{-1}(1-\alpha(1-w_{\alpha,\xi})),
        \,
        y+\sigma \Phi^{-1}(1-\alpha w_{\alpha,\xi})
        \right].
    \]
\end{proposition}

\begin{proof}
    Completing the square yields
    \[
        -\frac{(y-\theta)^2}{2\sigma^2}-\frac{\xi\theta}{\sigma}
        =
        -\frac{(\theta-(y-\sigma\xi))^2}{2\sigma^2}
        -\frac{\xi y}{\sigma}
        +\frac{\xi^2}{2},
    \]
    which proves the expression for $f_\xi(y)$ and shows that
    \[
        \theta\mid y \sim \Normal(y-\sigma\xi,\sigma^2).
    \]
    Therefore, by \cref{thm:postmeaninCI},
    \[
        \widehat\theta_\xi^\FAB(y)=y-\sigma\xi.
    \]

    For fixed $\theta_0$, write
    \[
        \Lambda_{\theta_0,\xi}(y)
        =
        \log f_\xi(y)-\log f_{\theta_0}(y)
        =
        \frac{(y-\theta_0)^2}{2\sigma^2}
        -\frac{\xi}{\sigma}(y-\theta_0)
        +K_{\theta_0,\xi},
    \]
    where $K_{\theta_0,\xi}$ does not depend on $y$. Let
    \[
        \AR^{(\xi)}(\theta_0)
        =
        [\theta_0-\sigma\mu_2,\ \theta_0+\sigma\mu_1].
    \]
    Using the size constraint,
    \[
        \mu_1=\Phi^{-1}(1-\alpha(1-w)),
        \qquad
        \mu_2=\Phi^{-1}(1-\alpha w)
    \]
    for some $w\in(0,1)$. Equality of the log-likelihood ratio at the two endpoints
    gives
    \[
        \Lambda_{\theta_0,\xi}(\theta_0+\sigma\mu_1)
        =
        \Lambda_{\theta_0,\xi}(\theta_0-\sigma\mu_2),
    \]
    that is,
    \[
        \frac{\mu_1^2}{2}-\xi\mu_1
        =
        \frac{\mu_2^2}{2}+\xi\mu_2.
    \]
    Since $\mu_1+\mu_2>0$, this is equivalent to
    \[
        \mu_1-\mu_2=2\xi.
    \]
    Now,
    \[
        \mu_1-\mu_2
        =
        \Phi^{-1}(\alpha w)-\Phi^{-1}(\alpha(1-w))
        =
        g_\alpha(w),
    \]
    so $w=w_{\alpha,\xi}=g_\alpha^{-1}(2\xi)$. The expressions for $\AR^{(\xi)}(\theta_0)$ and $\CR^{(\xi)}(y)$ follow by inversion.
\end{proof}

Because $g_\alpha(1-\omega)=-g_\alpha(\omega)$, we have
\[
    g_\alpha^{-1}(2\kappa)=1-g_\alpha^{-1}(-2\kappa)=1-c_\alpha,
\]
where $c_\alpha=g_\alpha^{-1}(-2\kappa)$ is the constant from
\cref{thm:robustCR}. Therefore, the limiting intervals in
\cref{thm:robustCR} can be rewritten as
\begin{align*}
    \lim_{y\to\infty} \CR(y)-y
    &=
    \CR^{(\kappa)}(0)
    =
    \left[
    -\sigma \Phi^{-1}(1-\alpha c_\alpha),
    \,
    \sigma \Phi^{-1}(1-\alpha(1-c_\alpha))
    \right],\\
    \lim_{y\to-\infty} \CR(y)-y
    &=
    \CR^{(-\kappa)}(0)
    =
    \left[
    -\sigma \Phi^{-1}(1-\alpha(1-c_\alpha)),
    \,
    \sigma \Phi^{-1}(1-\alpha c_\alpha)
    \right].
\end{align*}
In particular, when $\kappa=0$, both limits reduce to the standard
$z$-interval, corresponding to the flat improper prior
$\pi_0(d\theta)=d\theta$. When $\kappa>0$, the limit as $y\to\infty$
is the FAB interval associated with $\pi_\kappa(d\theta)$, whereas the
limit as $y\to-\infty$ is the FAB interval associated with
$\pi_{-\kappa}(d\theta)$.

\section{Proofs of \texorpdfstring{\cref{sec:shrinkagepriors}}{Section 4}}
\subsection{Proof of \texorpdfstring{\cref{prop:w_properties}}{Proposition 4.1}}
\label{app:proofpropwproperties}
    
The scale property in \cref{prop:w_properties} is equivalent to saying that $\pi(d\theta; \sigma)$ is the pushforward of $\pi(d\theta; 1)$ under the map $\theta \mapsto \sigma \theta$. Hence, for any nonnegative measurable function $g: \mathbb{R} \to \mathbb{R}$,
\begin{equation}
    \int_\mathbb{R} g(\theta) \pi(d\theta; \sigma) = \int_\mathbb{R} g(\sigma \theta') \pi(d\theta'; 1)
    \label{eq:pushforward}.
\end{equation}
Then, the marginal likelihood $f(y;\sigma)$ under the prior $\pi(d\theta;\sigma)$ is given by
\begin{align}
    f(y;\sigma)
    &= \int_{\mathbb R} f_\theta(y;\sigma)\,\pi(d\theta;\sigma) \nonumber \\
    &= \int_{\mathbb R} f_{(\sigma \theta')}(y;\sigma)\,\pi(d\theta';1) \nonumber \\
    &= \frac{1}{\sigma}\int_{\mathbb R} f_{\theta'}\left(\frac{y}{\sigma};1\right)\,\pi(d\theta';1) \nonumber \\
    &= \frac{1}{\sigma} f\!\left(\frac{y}{\sigma};1\right),
    \label{eq:ml_locscale}
\end{align}
where the second equality follows from \cref{eq:pushforward}.
By \cref{eq:ulcond1}, $\loA(\theta; \sigma)$ and $\upA(\theta; \sigma)$ under the prior $\pi(d\theta; \sigma)$ satisfy
\begin{equation}
    f(\loA(\theta; \sigma); \sigma) \exp\left(\frac{(\loA(\theta; \sigma) - \theta)^2}{2 \sigma^2}\right) = f(\upA(\theta; \sigma); \sigma) \exp\left(\frac{(\upA(\theta; \sigma)- \theta)^2}{2 \sigma^2}\right)
    \label{eq:ulcond1_locscale}
\end{equation}
under the constraint in \cref{eq:ulcond2}. Furthermore, by \cref{eq:ml_locscale}, \cref{eq:ulcond1_locscale} is equivalent to
\begin{equation}
    f\left(\frac{\loA(\theta; \sigma)}{\sigma}; 1\right) \exp\left(\frac{(\loA(\theta; \sigma)/\sigma - \theta/\sigma)^2}{2}\right) = f\left(\frac{\upA(\theta; \sigma)}{\sigma}; 1\right) \exp\left(\frac{(\upA(\theta; \sigma)/\sigma - \theta/\sigma)^2}{2}\right). \label{eq:ulcond1_locscale2}
\end{equation}
We have that $\loA(\theta; \sigma) = \sigma \loA(\theta/\sigma; 1)$ and $\upA(\theta; \sigma) = \sigma \upA(\theta/\sigma; 1)$.
To see this, plug these expressions into \cref{eq:ulcond1_locscale2}, which becomes
\begin{equation*}
    f\left(\loA(\theta / \sigma; 1); 1\right) \exp\left(\frac{(\loA(\theta/\sigma; 1) - \theta/\sigma)^2}{2}\right) = f\left(\upA(\theta/\sigma; 1); 1\right) \exp\left(\frac{(\upA(\theta/\sigma; 1) - \theta/\sigma)^2}{2}\right).
\end{equation*}
Similarly, \cref{eq:ulcond2} takes the form
\begin{equation*}
    \Phi\left(\upA(\theta/\sigma; 1) - \frac{\theta}{\sigma}\right)-\Phi\left(\loA(\theta/\sigma; 1)- \frac{\theta}{\sigma}\right)=1-\alpha.
\end{equation*}
Since the pair satisfying \cref{eq:ulcond1} and \cref{eq:ulcond2} is unique, the claimed identities for $\loA(\theta;\sigma)$ and $\upA(\theta;\sigma)$ follow.
Finally, by plugging in the expressions for $\loA(\theta; \sigma)$ into \cref{eq:wfunction}, we obtain
\begin{align*}
    w_\alpha(\theta; \sigma) &= \frac{1}{\alpha}\Phi\left(\loA(\theta/\sigma; 1) - \frac{\theta}{\sigma}\right) = w_\alpha(\theta/\sigma; 1),
\end{align*}
as desired. From now on, we suppress the scale parameter $\sigma$ in the notation to indicate $\sigma = 1$.
If $\pi(d\theta)$ is symmetric, i.e., $\pi(A)=\pi(-A)$ for all $A \in \mathcal{B}(\mathbb{R})$, then
\begin{equation*}
    \int_\mathbb{R} g(\theta)\pi(d\theta)=\int_\mathbb{R} g(-\theta)\pi(d\theta)
\end{equation*}
for any nonnegative measurable $g$.
Hence,
\begin{equation*}
    f(-y)=\int_\mathbb{R} f_\theta(-y)\pi(d\theta)
    =\int_\mathbb{R} f_{-\theta}(y)\pi(d\theta)
    =\int_\mathbb{R} f_{\theta}(y)\pi(d\theta) = f(y).
\end{equation*}
This implies that $\loA(-\theta) = -\upA(\theta)$ and $\upA(-\theta) = -\loA(\theta)$. To see this, notice that these expressions for $\loA(-\theta)$ and $\upA(-\theta)$ satisfy \cref{eq:ulcond1} and, moreover, \cref{eq:ulcond2} becomes
\begin{align*}
    \Phi\left(\upA(-\theta) + \theta\right)-\Phi\left(\loA(-\theta)+ \theta\right) &= \Phi\left(-\loA(\theta) + \theta\right) - \Phi\left(-\upA(\theta) + \theta\right) = 1 - \alpha
\end{align*}
because $\Phi(-x) = 1 - \Phi(x)$.
Since the pair satisfying \cref{eq:ulcond1} and \cref{eq:ulcond2} is unique, the identities $\loA(-\theta) = -\upA(\theta)$ and $\upA(-\theta) = -\loA(\theta)$ follow.
By plugging in the expression for $\loA(-\theta)$ into \cref{eq:wfunction} and using the expression for $\upA(\theta)$ in \cref{eq:confidenceregionCw}, this implies that
\begin{equation*}
    w_\alpha(-\theta) = \frac{1}{\alpha}\Phi\left(-\upA(\theta)+\theta\right) = \frac{1}{\alpha}\Phi\left(-\Phi^{-1}\left(1 - \alpha (1 - w_\alpha(\theta))\right)\right) = 1 - w_\alpha(\theta),
\end{equation*}
as desired.

\subsection{Proof of \texorpdfstring{\cref{prop:gcm_power_law}}{Proposition 4.2}}
\label{app:proofgcmpowerlaw}

Let $\Pi_{\tau^2}$ denote the law of $\tau^2$ on $[0,\infty)$, i.e., its Lebesgue--Stieltjes measure $\Pi_{\tau^2}([0,t]) = F_{\tau^2}(t)$.
The marginal likelihood $f(y)$ can be written as
\begin{align*}
    f(y)
    &= \int_{\mathbb{R}} f_\theta(y)\,\pi(d\theta) \\
    &= \int_{\mathbb{R}} \mathcal{N}(y;\theta,\sigma^2)\,
        \left(\int_{[0,\infty)} \mathcal{N}(\theta;0,\sigma^2 t)\,\Pi_{\tau^2}(dt)\right) d\theta \\
    &= \int_{[0,\infty)} \mathcal{N}(y;0,\sigma^2(1+t))\,\Pi_{\tau^2}(dt) \\
    &= \frac{1}{\sqrt{2\pi\sigma^2}}
        \int_{[0,\infty)} (1+t)^{-1/2}\exp\!\left(-\frac{y^2}{2\sigma^2(1+t)}\right)\Pi_{\tau^2}(dt).
\end{align*}

Define $U := 1/(1+\tau^2)\in(0,1]$ and let $\nu$ be the law of $U$ on $(0,1]$
(the pushforward of $\Pi_{\tau^2}$ under $t\mapsto 1/(1+t)$).
Then,
\begin{align}
    f(y)
    &= \frac{1}{\sqrt{2\pi\sigma^2}} \int_{(0,1]} u^{1/2} e^{-u y^2/(2\sigma^2)}\,\nu(du)
    \label{eq:betaprime_marginal_cov} \\
    &= g\left(\frac{y^2}{2\sigma^2}\right), \nonumber
\end{align}
where
\[
    g(x) = \frac{1}{\sqrt{2\pi\sigma^2}} \int_{(0,1]} u^{1/2} e^{-ux}\,\nu(du).
\]

Next note that, for $u\to 0$,
\begin{align*}
    \nu((0,u])
    &= \Pr(U \le u)
        = \Pr\!\left(\tau^2 \ge \frac{1-u}{u}\right)
        = \bar F_{\tau^2}\!\left(\frac{1-u}{u}\right)
    \sim \frac{C_1}{\beta}\,u^{\beta},
\end{align*}
where the last step uses \eqref{eq:tau2_tail_rv} and $(1-u)/u \sim 1/u$.

Define the increasing function
\[
    H(u) := \int_{(0,u]} s^{1/2}\,\nu(ds), \qquad u\in(0,1].
\]
By Stieltjes integration by parts,
\[
    H(u) = u^{1/2}\nu((0,u]) - \frac{1}{2}\int_0^u s^{-1/2}\nu((0,s])\,ds.
\]
Using $\nu((0,s]) \sim (C_1/\beta)s^\beta$ and \cref{prop:rvint} (Karamata's theorem at $0$),
\[
    \int_0^u s^{-1/2}\nu((0,s])\,ds
    \sim \frac{C_1}{\beta}\int_0^u s^{\beta-1/2}\,ds
    = \frac{C_1}{\beta}\cdot \frac{u^{\beta+1/2}}{\beta+1/2},
\]
and therefore
\[
    H(u) \sim \frac{C_1}{\beta+1/2}\,u^{\beta+1/2}
    \qquad \text{as }u\to 0.
\]

Finally, since
\[
    \int_{(0,1]} u^{1/2} e^{-ux}\,\nu(du) = \int_{(0,1]} e^{-ux}\,dH(u),
\]
applying \cref{th:rvlaplace} (Laplace--Stieltjes Tauberian theorem) yields, as $x\to\infty$,
\begin{align*}
    g(x)
    &\sim \frac{1}{\sqrt{2\pi\sigma^2}}
            \Gamma\!\left(\beta+\frac{3}{2}\right)\frac{C_1}{\beta+1/2}\,x^{-(\beta+1/2)} \\
    &= \frac{C_1}{\sqrt{2\pi\sigma^2}} \Gamma\!\left(\beta+\frac{1}{2}\right)\,x^{-(\beta+1/2)},
\end{align*}
where we used $\Gamma(\beta+3/2) = (\beta+1/2)\Gamma(\beta+1/2)$.
Finally, from \cref{prop:rvpower}, we obtain
\[
    f(y) = g\!\left(\frac{y^2}{2\sigma^2}\right)
    \sim C_1 \frac{(2\sigma^2)^\beta}{\sqrt{\pi}} \Gamma\!\left(\beta+\frac{1}{2}\right)\,|y|^{-(2\beta+1)}
\]
as $y\to\pm\infty$, as desired.

\subsection{Proof of \texorpdfstring{\cref{prop:betaprime_marginal}}{Proposition 4.3}}
\label{app:proofbetaprimemarginal}

Starting from the same change of variables as in \cref{eq:betaprime_marginal_cov}, the marginal likelihood $f(y)$ is given by
\begin{align*}
    f(y) &= \frac{1}{\sqrt{2 \pi \sigma^2}} \int_0^1 \exp\left(-u \frac{y^2}{2 \sigma^2}\right) u^{-3/2} f_{\tau^2}\left(\frac{1 - u}{u}\right) du \\
    &= \frac{1}{B(a, b) \sqrt{2 \pi \sigma^2}} \int_0^1 \exp\left(-u \frac{y^2}{2 \sigma^2}\right) u^{-3/2} \left(\frac{1 - u}{u}\right)^{b-1} \left(\frac{1}{u}\right)^{-(a + b)} du \\
    &= \frac{1}{B(a, b) \sqrt{2 \pi \sigma^2}} \int_0^1 \exp\left(-u \frac{y^2}{2 \sigma^2}\right) u^{a - 1/2} (1 - u)^{b-1} du \\
    &= \frac{1}{B(a, b) \sqrt{2 \pi \sigma^2}} \times \frac{\Gamma(b)\Gamma(a + 1 / 2)}{\Gamma(a + b + 1/2)} \times {\OneFOne}\left(a + 1/2, a + b + 1/2, -\frac{y^2}{2 \sigma^2}\right) \\
    &= \frac{1}{\sqrt{2 \pi \sigma^2}} \times \frac{\Gamma(a + b)\Gamma(a + 1 / 2)}{\Gamma(a + b + 1/2)\Gamma(a)} \times {\OneFOne}\left(a + 1/2, a + b + 1/2, -\frac{y^2}{2 \sigma^2}\right)
\end{align*}
because $\OneFOne(\alpha, \beta, z)$ admits the integral representation \citep{Abramowitz1968}
\begin{equation*}
    \OneFOne(\alpha, \beta, z) = \frac{\Gamma(\beta)}{\Gamma(\beta - \alpha) \Gamma(\alpha)} \int_0^1 e^{zu} u^{\alpha - 1} (1 - u)^{\beta - \alpha - 1} du.
\end{equation*}
Moreover, $f'(y)$ is given by
\begin{equation*}
    f'(y) = -\frac{1}{\sqrt{2 \pi \sigma^2}} \times \frac{\Gamma(a + b)\Gamma(a + 3 / 2)}{\Gamma(a + b + 3/2)\Gamma(a)} \times \frac{y}{\sigma^2} \times {\OneFOne}\left(a + 3/2, a + b + 3/2, -\frac{y^2}{2 \sigma^2}\right)
\end{equation*}
because
\begin{equation*}
    \frac{\partial}{\partial z} {\OneFOne}(\alpha, \beta, z) = \frac{\alpha}{\beta} {\OneFOne}(\alpha + 1, \beta + 1, z).
\end{equation*}
As a result of this, by Tweedie's formula \eqref{eq:posteriormean},
\begin{equation*}
    \mathbb{E}[\theta | y] = y + \sigma^2 \frac{f'(y)}{f(y)},
\end{equation*}
from which the expression for $\widehat{\theta}^\text{FAB}(y)$ follows.

\section{Proofs of \texorpdfstring{\cref{sec:exponentialfamily}}{Section 5}}
\subsection{Proof of \texorpdfstring{\cref{thm:postmeaninCI_nef}}{Theorem 5.1}}

Note that $\lambda_\eta(y)$ is a strictly convex function.
Therefore, since $\ARnp$ is a sublevel set of $\lambda_\eta$, it is convex.
Moreover, $\lambda_\eta$ is differentiable w.r.t.~$y$, with  $\nabla\lambda_\eta(y)=\nabla\lambda(y) -\eta.$
Solving $\nabla\lambda_\eta(y)=0$ gives $y=(\nabla\lambda)^{-1}(\eta)$.
Hence, $(\nabla\lambda)^{-1}(\eta)\in \ARnp(\eta)$, which implies that $\nabla\lambda(y)\in\CRnp(y)$.
It follows that $\statfunc(\nabla\lambda(y))\in\CR(y)$.

\section{Auxiliary results}

The following result is stated in \citet[p.333]{Yu2018}.
\begin{proposition}
    For $\alpha\in(0,1]$, let $g_\alpha:(0,1)\to\bbR$ be defined as
    \[
        g_\alpha(\omega)=\Phi^{-1}(\omega \alpha)-\Phi^{-1}(\alpha(1-\omega)).
    \]
    $g_\alpha$ is a continuous and strictly increasing function from $(0,1)$ into $\bbR$, with a well-defined inverse $g_\alpha^{-1}:\bbR\to(0,1)$. In the special case $\alpha=1$, $g_1(\omega)=2\Phi^{-1}(\omega)$ and $g_1^{-1}(\theta)=\Phi(\frac{\theta}{2})$.
\end{proposition}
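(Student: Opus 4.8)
The plan is to establish, in order, continuity of $g_\alpha$, its strict monotonicity, and its limiting behaviour at the two endpoints of $(0,1)$, and then to invoke the intermediate value theorem; the explicit inverse in the case $\alpha=1$ will drop out of the symmetry relation $\Phi^{-1}(1-\omega)=-\Phi^{-1}(\omega)$.

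First I would observe that for every $\alpha\in(0,1]$ and every $\omega\in(0,1)$ both arguments $\omega\alpha$ and $\alpha(1-\omega)$ lie in $(0,\alpha)\subseteq(0,1)$, so $\Phi^{-1}$ --- continuous and strictly increasing on $(0,1)$ --- is only ever evaluated at interior points. Consequently $\omega\mapsto\Phi^{-1}(\omega\alpha)$ is continuous and strictly increasing, and $\omega\mapsto-\Phi^{-1}(\alpha(1-\omega))$ is continuous and strictly increasing as well, since $\omega\mapsto\alpha(1-\omega)$ is strictly decreasing. Being a sum of two such functions, $g_\alpha$ is continuous and strictly increasing on $(0,1)$.

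Next I would pin down the endpoint limits. For $\omega\le 1/2$ one has $\alpha(1-\omega)\ge\alpha/2$, hence $-\Phi^{-1}(\alpha(1-\omega))\le-\Phi^{-1}(\alpha/2)$, so the second summand of $g_\alpha$ is bounded above near $\omega=0$; since the first summand $\Phi^{-1}(\omega\alpha)\to-\infty$ as $\omega\downarrow 0$, we get $g_\alpha(\omega)\to-\infty$. A symmetric argument gives $g_\alpha(\omega)\to+\infty$ as $\omega\uparrow 1$. Combining continuity, strict monotonicity, and these limits, the intermediate value theorem shows that $g_\alpha$ maps $(0,1)$ bijectively onto $\bbR$, and the inverse $g_\alpha^{-1}\colon\bbR\to(0,1)$ is then well defined and, being the inverse of a continuous strictly monotone map, continuous; extending $g_\alpha$ to $[0,1]$ amounts to setting $g_\alpha(0)=-\infty$ and $g_\alpha(1)=+\infty$.

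Finally, for $\alpha=1$ I would use $\Phi(-x)=1-\Phi(x)$, i.e.\ $\Phi^{-1}(1-\omega)=-\Phi^{-1}(\omega)$, to get $g_1(\omega)=\Phi^{-1}(\omega)-\Phi^{-1}(1-\omega)=2\Phi^{-1}(\omega)$; inverting $\theta=2\Phi^{-1}(\omega)$ gives $g_1^{-1}(\theta)=\Phi(\theta/2)$. I do not anticipate a genuine obstacle here: the one point deserving a moment's care is the endpoint bookkeeping --- checking that the arguments of $\Phi^{-1}$ never reach $0$ or $1$ for $\omega\in(0,1)$, so that $g_\alpha$ is finite and well behaved on the open interval, while still diverging as $\omega\to 0^+$ and $\omega\to 1^-$.
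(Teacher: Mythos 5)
Your proof is correct, and it takes a somewhat different (and in places more complete) route than the paper's. The paper establishes strict monotonicity by computing the derivative explicitly,
\begin{equation*}
g'_\alpha(\omega)=\frac{\alpha}{\phi(\Phi^{-1}(\omega\alpha))}+\frac{\alpha}{\phi(\Phi^{-1}(\alpha(1-\omega)))}>0,
\end{equation*}
and then simply invokes the monotone inverse theorem; you instead obtain monotonicity without any calculus, by writing $g_\alpha$ as the sum of the two continuous, strictly increasing maps $\omega\mapsto\Phi^{-1}(\omega\alpha)$ and $\omega\mapsto-\Phi^{-1}(\alpha(1-\omega))$. More importantly, you explicitly verify the endpoint limits $g_\alpha(\omega)\to\mp\infty$ as $\omega\to0^+$ and $\omega\to1^-$ (by bounding one summand while the other diverges), which is exactly the surjectivity onto $\bbR$ that the claim ``$g_\alpha^{-1}:\bbR\to(0,1)$ is well defined'' requires and that the paper's proof leaves implicit in its appeal to the monotone inverse theorem. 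You also supply the short symmetry argument $\Phi^{-1}(1-\omega)=-\Phi^{-1}(\omega)$ for the $\alpha=1$ formula, which the paper states but does not prove. The only trade-off is that the paper's derivative computation is reused elsewhere (a closely related expression appears in the proof of Lemma S1.1), so it earns its keep there; for the proposition itself, your elementary argument is entirely adequate and arguably tighter.
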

\begin{proof}
    $\Phi^{-1}$ is continuous and differentiable, hence $g_\alpha$ is continuous and differentiable, with derivative
    $$
        g'_\alpha(\omega)= \frac{\alpha}{\phi(\Phi^{-1}(\omega \alpha))}+\frac{\alpha}{\phi(\Phi^{-1}(\alpha(1-\omega)))}>0
    $$
    where $\phi$ denotes the pdf of the standard normal. By the monotone inverse theorem, it has a well-defined inverse $g_\alpha^{-1}:\bbR\to(0,1)$.
\end{proof}

\begin{proposition}
    \label{prop:Hausdorffconvergence}
    Let $C(y)\subseteq \mathbb R$ be defined as
    \begin{align*}
        C(y)=\left \{\theta = y -\delta  \mid -\ell_1(y-\delta) \leq \delta \leq \ell_2(y-\delta) \right \}
    \end{align*}
    where $\ell_1$ and $\ell_2$ are bounded functions of $\mathbb R$ such that
    \begin{align*}
        \lim_{z\to\infty} \ell_1(z)=c_1,~~~\lim_{z\to\infty} \ell_2(z)=c_2
    \end{align*}
    with $c_2>-c_1$.
    Then
    $$
        \lim_{y\to \infty}C(y)-y=[-c_2,c_1]
    $$
    where the convergence is with respect to the Hausdorff distance on closed subsets of the reals.
\end{proposition}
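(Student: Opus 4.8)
The plan is to reduce the statement to the sandwiching criterion for Hausdorff convergence to a closed interval recalled in the Notations paragraph of \cref{sec:fabbackground}: since $c_2>-c_1$ the limit $[-c_2,c_1]$ is a non-degenerate closed interval, and it suffices to show that for every $\epsilon\in\bigl(0,\tfrac{c_1+c_2}{2}\bigr)$ there is $y_0$ with
\[
[-c_2+\epsilon,\ c_1-\epsilon]\ \subseteq\ C(y)-y\ \subseteq\ [-c_2-\epsilon,\ c_1+\epsilon]\qquad\text{for all }y>y_0.
\]
The first step is a change of variables. Setting $t=\theta-y=-\delta$ and $z=y+t$, the defining inequalities $-\ell_1(y-\delta)\le\delta\le\ell_2(y-\delta)$ turn into $-\ell_2(y+t)\le t\le\ell_1(y+t)$, so
\[
C(y)-y=\bigl\{\,t\in\bbR\mid -\ell_2(y+t)\le t\le\ell_1(y+t)\,\bigr\}.
\]
Since $\ell_1,\ell_2$ are bounded, say $|\ell_1|,|\ell_2|\le M$, this set is contained in $[-M,M]$ for every $y$; consequently $|c_1|,|c_2|\le M$, both $[-c_2,c_1]$ and its $\epsilon$-shrinkage $[-c_2+\epsilon,c_1-\epsilon]$ lie inside $(-M,M)$, and the latter is a genuine (nonempty) interval because $\epsilon<\tfrac{c_1+c_2}{2}$.

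Next I would fix such an $\epsilon$, use the two limit hypotheses to pick $z_0$ with $|\ell_1(z)-c_1|<\epsilon$ and $|\ell_2(z)-c_2|<\epsilon$ for all $z>z_0$, and set $y_0=z_0+M$. The argument then rests on the uniformity observation that for any $t$ with $|t|\le M$ and any $y>y_0$ one has $y+t\ge y-M>z_0$, so the $\epsilon$-estimates apply to $\ell_1(y+t)$ and $\ell_2(y+t)$ simultaneously for all such $t$; and both $C(y)-y$ and $[-c_2+\epsilon,c_1-\epsilon]$ consist of points $t$ with $|t|\le M$. The outer inclusion is then immediate: for $t\in C(y)-y$ and $y>y_0$, $t\le\ell_1(y+t)<c_1+\epsilon$ and $t\ge-\ell_2(y+t)>-c_2-\epsilon$. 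The inner inclusion is equally short: for $t\in[-c_2+\epsilon,c_1-\epsilon]$ and $y>y_0$, we have $\ell_1(y+t)>c_1-\epsilon\ge t$ and $-\ell_2(y+t)<-c_2+\epsilon\le t$, so $t$ satisfies both inequalities and hence lies in $C(y)-y$. Feeding this sandwich into the interval characterization of Hausdorff convergence finishes the proof.

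I do not expect a genuine obstacle; the one point that needs care is precisely the uniformity in $t$ of the convergence $\ell_i(y+t)\to c_i$, which is why one records the a priori bound $C(y)-y\subseteq[-M,M]$ first, so that the single threshold $y_0=z_0+M$ serves both inclusions and all admissible $t$. One minor technical remark: when $\ell_1,\ell_2$ are only assumed bounded, $C(y)-y$ need not be closed, but the displayed sandwich passes verbatim to $\overline{C(y)-y}$ (the inner interval is closed and contained, the outer is closed and contains it), so the statement and its proof are unaffected; in the application inside \cref{thm:robustCR} the relevant $\ell_i$ are continuous in any case.
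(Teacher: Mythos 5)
Your proof is correct and follows essentially the same route as the paper's: establish the a priori bound $C(y)-y\subseteq[-M,M]$, use it to apply the limit estimates on $\ell_1,\ell_2$ uniformly over the bounded range of arguments, derive the two-sided sandwich $[-c_2+\epsilon,c_1-\epsilon]\subseteq C(y)-y\subseteq[-c_2-\epsilon,c_1+\epsilon]$, and invoke the interval characterization of Hausdorff convergence. The only differences are cosmetic (you work with $t=-\delta$ rather than $\delta$) plus your added remark on passing to closures, which is a fair point the paper leaves implicit.
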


\begin{proof}
    There is $M>0$ such that $|\ell_1(z)|\leq M$ and $|\ell_2(z)|\leq M$. Hence, $C(y)\subseteq [y-M,y+M]$. For all $\frac{c_1+c_2}{2}>\epsilon>0$, there is $y_0$ such that for all $y>y_0$, $|\delta|\leq M$,
    \begin{align*}
        |\ell_1(y-\delta)-c_1|<\epsilon\text{ and }|\ell_2(y-\delta)-c_2|<\epsilon.
    \end{align*}
    It follows that
    $$
        \{-c_1+\epsilon \leq \delta\leq c_2-\epsilon \}\subseteq\{\delta\mid -\ell_1(y-\delta) \leq \delta \leq \ell_2(y-\delta)\}\subseteq \{-c_1-\epsilon \leq \delta\leq c_2+\epsilon \}
    $$
    and
    $$
        [y-c_2+\epsilon,y+c_1-\epsilon]\subseteq C(y)\subseteq [y-c_2-\epsilon,y+c_1+\epsilon].
    $$
\end{proof}

\section{Background on regular variation}

This section provides background material on regularly varying functions, taken from  \citet{Bingham1989}.

\begin{definition}[Slowly varying function]
    Let $\ell:[0,\infty)\to[0,\infty)$. $\ell$ is said to be a slowly varying function if it satisfies, for any $c>0$
    $$
        \lim_{x\to\infty}\frac{\ell(cx)}{\ell(x)}=1.
    $$
    \label{def:slowvarying}
\end{definition}
Examples of slowly varying functions include logarithms, powers of logarithms, functions converging to a strictly positive constant, etc.

\begin{definition}[Regularly varying function]
    Let $U:[0,\infty)\to[0,\infty)$. $U$ is said to be regularly varying, with index $\rho\in\bbR$ if, for any $c>0$,
    $$
        \lim_{x\to\infty}\frac{U(cx)}{U(x)}=c^\rho.
    $$
    A regularly varying function can always be represented as
    $$
        U(x)=\ell(x) x^\rho
    $$
    for some slowly varying function $\ell$.
    \label{def:regvarying}
\end{definition}

\begin{proposition}
    If $U$ is a regularly varying function with
    $$
        U(x)\sim \ell(x) x^\rho
    $$
    as $x\to\infty$, for some $\rho\in\bbR$ and some slowly varying function $\ell$ then, for any $c>0$ and $a> 0$,
    $$
        U(c x^a)\sim \ell(x^a) c^\rho x^{a\rho}
    $$
    as $x\to\infty$.
    \label{prop:rvpower}
\end{proposition}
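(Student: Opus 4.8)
The plan is to obtain the result from two elementary applications of the defining limits for regularly and slowly varying functions, composed with the substitution $s=x^a$, and then to chain the resulting asymptotic equivalences.

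First, since $U(t)\sim \ell(t)\,t^\rho$ as $t\to\infty$ by hypothesis, and since $c>0$ and $a>0$ force $cx^a\to\infty$ as $x\to\infty$, I would substitute $t=cx^a$. Because the ratio $U(t)/(\ell(t)\,t^\rho)$ converges to $1$ along every divergent path, it converges to $1$ along the particular path $t=cx^a$, giving
\[
U(cx^a)\sim \ell(cx^a)\,(cx^a)^\rho = \ell(cx^a)\,c^\rho x^{a\rho}
\]
as $x\to\infty$.

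Next, I would show $\ell(cx^a)\sim \ell(x^a)$ as $x\to\infty$. Setting $s=x^a$, which diverges, the slow variation of $\ell$ (Definition \ref{def:slowvarying}) gives $\ell(cs)/\ell(s)\to 1$, i.e.\ $\ell(cx^a)\sim \ell(x^a)$. Substituting this into the previous display and using that asymptotic equivalence is transitive and compatible with multiplication by the fixed factor $c^\rho x^{a\rho}$ yields
\[
U(cx^a)\sim \ell(x^a)\,c^\rho x^{a\rho}
\]
as $x\to\infty$, which is the claim.

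The proposition is essentially bookkeeping: the only point needing a word of care is that every limit here is taken as the \emph{argument} tends to $+\infty$, so one must note that each substituted argument ($cx^a$, $x^a$, and $cs$) indeed diverges, which is immediate from $c,a>0$. I do not anticipate any genuine obstacle, and I expect the written proof to be two or three lines once the substitutions are spelled out.
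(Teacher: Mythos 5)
Your proof is correct. The paper states this proposition without proof, as background material drawn from \citet{Bingham1989}, and your argument—composing the defining limits along the divergent paths $t=cx^a$ and $s=x^a$ and chaining the equivalences—is the standard and essentially only natural way to establish it.
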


The following is a corollary of \citet[Proposition 1.5.10]{Bingham1989}.

\begin{proposition}
    Let $\ell$ be a slowly varying function and $\rho >-1$. Then $\int_0^x t^\rho \ell(1/t)dt$ converges and
    $$
        \int_0^x t^\rho \ell(1/t)dt\sim \frac{x^{\rho+1}}{\rho+1}\ell(1/x)
    $$
    as $x\to 0$.
    \label{prop:rvint}
\end{proposition}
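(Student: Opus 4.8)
The plan is to reduce the statement to the standard ``at infinity'' form of Karamata's theorem by the substitution $t\mapsto 1/t$; indeed the proposition is advertised as a corollary of \cite[Proposition~1.5.10]{Bingham1989}, so the work is essentially bookkeeping. Writing $s=1/t$, so that $t\in(0,x]$ corresponds to $s\in[1/x,\infty)$ and $dt=-s^{-2}\,ds$, I obtain
\[
  \int_0^x t^\rho \ell(1/t)\,dt \;=\; \int_{1/x}^\infty s^{-\rho-2}\,\ell(s)\,ds .
\]
The integrand $U(s)=s^{-\rho-2}\ell(s)$ is regularly varying at infinity with index $-\rho-2$ (by \cref{def:regvarying}), and since $\rho>-1$ this index is strictly less than $-1$; hence $U$ is integrable on $[A,\infty)$ for $A$ large (a regularly varying function of index $<-1$ satisfies $U(s)=O(s^{-1-\varepsilon})$ for some $\varepsilon>0$, using $\ell(s)=o(s^{\varepsilon})$). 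This already yields the claimed convergence of $\int_0^x t^\rho\ell(1/t)\,dt$ for $x>0$.

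For the asymptotic equivalence I would invoke Karamata's theorem for tail integrals: if $U$ is regularly varying with index $\mu<-1$, then $\int_A^\infty U(s)\,ds \sim \frac{A\,U(A)}{-\mu-1}$ as $A\to\infty$. Applying this with $\mu=-\rho-2$ and $A=1/x$ (so $A\to\infty$ as $x\to 0$) gives
\[
  \int_{1/x}^\infty s^{-\rho-2}\ell(s)\,ds \;\sim\; \frac{(1/x)\,(1/x)^{-\rho-2}\,\ell(1/x)}{\rho+1} \;=\; \frac{x^{\rho+1}}{\rho+1}\,\ell(1/x)
\]
as $x\to 0$, which is exactly the desired conclusion. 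Here I use that a slowly varying $\ell$ is measurable and locally bounded on a neighbourhood of $+\infty$, so the finite lower endpoint $s=1/x$ causes no difficulty once $x$ is small enough.

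I do not expect a genuine obstacle; the only care needed is to (i) match the exact hypotheses and the normalisation constant of \cite[Proposition~1.5.10]{Bingham1989} — in particular that the relevant version concerns $\int_A^\infty$ of a function integrable there, with denominator $-\mu-1$ — and (ii) track exponents through the change of variables, where $t^\rho\cdot s^{-2}=s^{-\rho}\cdot s^{-2}=s^{-\rho-2}$. Alternatively, if \cite[Proposition~1.5.10]{Bingham1989} is already phrased as a near-zero statement for $\int_0^x t^\sigma \ell_0(t)\,dt$ with $\ell_0$ slowly varying at $0$, one may skip the substitution entirely and apply it directly, noting that $t\mapsto\ell(1/t)$ is slowly varying at $0$ and taking $\sigma=\rho$; either route produces the same constant $1/(\rho+1)$.
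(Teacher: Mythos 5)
Your proposal is correct and matches the paper's (implicit) argument: the paper offers no proof beyond citing \cite[Proposition 1.5.10]{Bingham1989}, and your substitution $s=1/t$ reducing the claim to the tail-integral half of Karamata's theorem, with index $-\rho-2<-1$ and constant $1/(\rho+1)$, is exactly the bookkeeping that citation leaves implicit. The exponent tracking and the integrability argument near $s=\infty$ are both right, so nothing is missing.
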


\begin{theorem}[Laplace transforms of regularly varying functions]\cite[Theorem 1.7.1']{Bingham1989} and \cite[Chapter XIII, Section 5]{Feller1971}.
    Let $U$ be non-decreasing on $\bbR$, $U(x)=0$ for all $x<0$ and such that $\widehat U(s)=\int_{-\infty}^\infty e^{-sx} dU(x)<\infty$ for all large $s$. Let $\ell$ be a slowly varying function, $c\geq 0$, $\rho\geq 0$. The following are equivalent
    \begin{align*}
        U(x)&\sim c x^\rho \ell(1/x)/\Gamma(1+\rho)&\text{as }x\to 0\\
        \widehat U(s)&\sim c s^{-\rho} \ell(s)&\text{as }s\to\infty.
    \end{align*}
    \label{th:rvlaplace}
\end{theorem}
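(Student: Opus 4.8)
This is the Karamata Abelian--Tauberian equivalence for Laplace--Stieltjes transforms, phrased so as to relate the behaviour of $U$ near the origin to that of $\widehat U$ near infinity; the substitution $x\mapsto 1/x$ matches it with the usual ``at $\infty$'' dictionary of \citet{Bingham1989}. The plan is to prove the two implications separately. Throughout one may take $U$ right-continuous and, after rescaling, assume $\widehat U(s)<\infty$ for $s\geq 1$; writing $\widehat U(s)=U(\{0\})+\int_{(0,\infty)}e^{-sx}\,dU(x)$ and noting the integral tends to $0$ as $s\to\infty$, in the non-degenerate regime ($\rho>0$, or $\rho=0$ with $\ell\not\to\infty$) one automatically gets $U(\{0\})=0$, which kills boundary terms below.

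\textbf{Abelian direction.} Assuming $U(x)\sim c\,x^{\rho}\ell(1/x)/\Gamma(1+\rho)$ as $x\to 0$, split $\widehat U(s)=\int_0^1 e^{-sx}\,dU(x)+\int_1^\infty e^{-sx}\,dU(x)$. The second term is at most $e^{-(s-1)}\widehat U(1)=o(s^{-\rho}\ell(s))$. For the first, integrate by parts and substitute $t=sx$ to get $\int_0^1 e^{-sx}\,dU(x)=e^{-s}U(1)+\int_0^s e^{-t}U(t/s)\,dt$. Dividing by $c\,s^{-\rho}\ell(s)/\Gamma(1+\rho)$, the integrand converges pointwise to $e^{-t}t^{\rho}$ by the assumed asymptotics and the uniform convergence theorem for slowly varying functions, while Potter's bounds supply an integrable dominating function; dominated convergence then gives the limit $\int_0^\infty e^{-t}t^{\rho}\,dt=\Gamma(1+\rho)$, i.e.\ $\widehat U(s)\sim c\,s^{-\rho}\ell(s)$.

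\textbf{Tauberian direction.} Assume $\widehat U(s)\sim c\,s^{-\rho}\ell(s)$. If $c=0$, a one-sided monotonicity estimate (as in the $c>0$ case below, run with $\limsup$) yields $U(x)=o(x^{\rho}\ell(1/x))$, so suppose $c>0$. For each $s>0$ let $\mu_s$ be the nondecreasing function on $[0,\infty)$ given by $\mu_s(t)=\Gamma(1+\rho)\,U(t/s)/(c\,s^{-\rho}\ell(s))$; its Laplace--Stieltjes transform is $\widehat{\mu_s}(\lambda)=\Gamma(1+\rho)\,\widehat U(\lambda s)/(c\,s^{-\rho}\ell(s))$, which by the hypothesis and the slow variation of $\ell$ (so $\ell(\lambda s)/\ell(s)\to1$) converges, for every fixed $\lambda>0$, to $\Gamma(1+\rho)\lambda^{-\rho}=\int_0^\infty e^{-\lambda t}\,d(t^{\rho})$. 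Since this is the transform of the measure $\nu$ with distribution function $t^{\rho}$, the extended continuity theorem for Laplace--Stieltjes transforms \citep{Feller1971} gives $\mu_s\to\nu$ vaguely, hence $\mu_s(t)\to t^{\rho}$ at every $t>0$; taking $t=1$ and writing $x=1/s$ produces $U(x)\sim c\,x^{\rho}\ell(1/x)/\Gamma(1+\rho)$.

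\textbf{Main obstacle.} The Abelian half is essentially a computation; the crux is the Tauberian passage from pointwise convergence of the transforms $\widehat{\mu_s}$ to convergence of the $\mu_s$ themselves. To keep this self-contained one replaces the continuity theorem by Karamata's device: linearity gives $\int_0^\infty p(e^{-t})\,d\mu_s(t)\to\int_0^\infty p(e^{-t})\,d\nu(t)$ for every polynomial $p$, and one approximates $t\mapsto e^{t}\mathbf{1}_{[0,1]}(t)$ from above and below, uniformly outside an arbitrarily thin window around $t=1$, by polynomials in $e^{-t}$; sandwiching then pins down $\mu_s(1)\to 1$. The genuinely delicate points are controlling this approximation near the jump, excluding escape of mass to $0$ or $\infty$, and the degenerate boundary cases $\rho=0$ and $c=0$. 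Since the result is classical --- it is Theorem\,1.7.1$'$ in \citet{Bingham1989} together with \citep[Ch.\,XIII]{Feller1971} --- in the paper it is simply invoked rather than reproved.
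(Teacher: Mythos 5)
This statement is not proved in the paper at all: it appears in the ``Background on regular variation'' appendix as an imported classical result, cited to \citet{Bingham1989} (Theorem 1.7.1$'$) and \citet{Feller1971}, so there is no in-paper argument to compare against. Your sketch is a correct outline of the standard proof of Karamata's Tauberian theorem in its ``origin/infinity'' form. The Abelian half (tail cut at $x=1$, integration by parts, substitution $t=sx$, uniform convergence theorem plus Potter bounds plus dominated convergence) is the textbook computation; the one step you gloss over is that Potter's bounds control $\ell(s/t)/\ell(s)$ only for $s/t$ large, so the range $t\in[\delta s,s]$ of the integral $\int_0^s e^{-t}U(t/s)\,dt$ must be disposed of separately by the crude monotonicity bound $U(t/s)\le U(1)$, which makes that piece $O(e^{-\delta s})$ and hence negligible. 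The Tauberian half via the rescaled measures $\mu_s$ and the extended continuity theorem is exactly Feller's argument, and your fallback via Karamata's polynomial approximation of $e^{t}\mathbf{1}_{[0,1]}(t)$ is the self-contained alternative in Bingham et al. One small inaccuracy: your claim that $U(\{0\})=0$ automatically in the regime ``$\rho=0$ with $\ell\not\to\infty$'' is not quite right --- if $\rho=0$ and $\ell(s)\to L\in(0,\infty)$ then $\widehat U(s)\to cL$ and an atom $U(\{0\})=cL$ is perfectly consistent with both asymptotics (the equivalence still holds; the atom simply accounts for the limit). This does not affect the validity of either implication, only the bookkeeping of the boundary term $U(0^+)$ in the integration by parts, which in that case contributes consistently to both sides. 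Since the paper merely invokes the theorem, your decision to cite rather than reprove it would also have been acceptable; as a reconstruction of the classical proof, what you have is essentially right.
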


\section{Additional figures}\label{app:additional_figures}
\subsection{Gaussian likelihood}
\subsubsection{Priors with vanishing influence}

\Cref{fig:fab_gaussian_vanishing_cc} shows representative $p$-value functions corresponding to the FAB-CR procedures induced by the vanishing influence priors discussed in \cref{sec:vanishing_influence}. 
\begin{figure}
    \centering
    \includegraphics[width=\textwidth]{./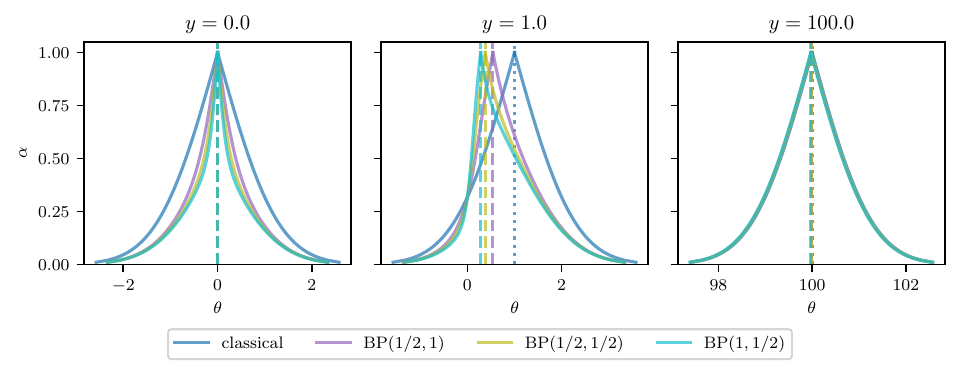}
    \caption{$p$-value functions for the classical $z$-interval and the FAB-CR procedure under the same priors and likelihood of \cref{fig:fab_gaussian_vanishing} when observing $y \in \{0, 1, 100\}$.}
    \label{fig:fab_gaussian_vanishing_cc}
\end{figure}

\subsubsection{Priors with bounded influence}
\label{app:bounded_influence}

\Cref{fig:fab_gaussian_bounded_cc} reports representative FAB-CR $p$-value functions and focal points (i.e., posterior means) for the Laplace priors used in \cref{fig:fab_gaussian_bounded}, as well as the corresponding MAP estimates, indicated with triangular markers.
\begin{figure}
    \centering
    \includegraphics[width=\textwidth]{./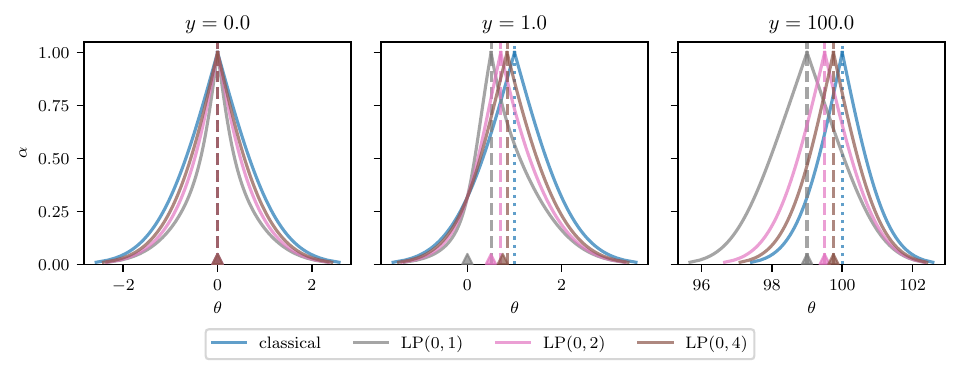}
    \caption{$p$-value functions for the classical $z$-interval and the FAB-CR procedure under the same priors and likelihood of \cref{fig:fab_gaussian_bounded} when observing $y \in \{0, 1, 100\}$.}
    \label{fig:fab_gaussian_bounded_cc}
\end{figure}
As expected, for small non-zero values of $y$ (e.g., $y=1$), FAB and MAP estimates may differ.

\subsection{Natural exponential families}
\subsubsection{Poisson likelihood}

\Cref{fig:fab_poisson_cc} shows representative $p$-value functions corresponding to the FAB-CR procedures in \cref{fig:fab_poisson}.
\begin{figure}
  \centering
  \includegraphics[width=0.8\textwidth]{./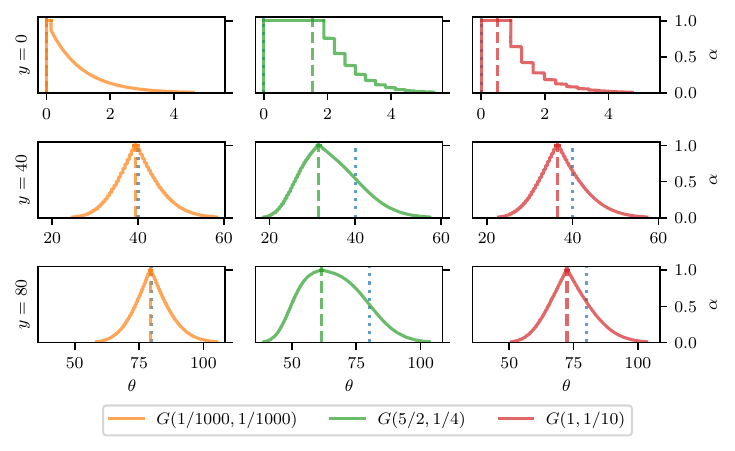}
  \caption{$p$-value functions for the FAB-CR procedure under the same priors and likelihood of \cref{fig:fab_poisson} when observing $y \in \{0, 40, 80\}$.}
  \label{fig:fab_poisson_cc}
\end{figure}
The dotted vertical lines indicate the value of the MLE for $\theta$, $y$, while the dashed vertical lines colored according to the legend indicate the FAB estimates for the same quantity under the corresponding priors.
As prescribed by \cref{thm:postmeaninCI_nef}, the FAB estimate is contained in the confidence region for all values of $\alpha$, while the same is not true for the MLE.

\subsubsection{Binomial likelihood}

Let $Y\sim \Binomial(n, \theta)$ be a Binomial random variable with success probability $\theta$.
It takes the NEF form $f_\eta(y)=h(y) e^{\eta y - \psi(\eta)}$, with $\eta=\log(\theta/(1-\theta))\in\bbR$, $\psi(\eta)=n\log(1+e^\eta)$ and $h(y) = \binom{n}{y}$.
Assuming $\theta\sim \Beta(a,b)$, $\eta = \log(\theta/(1-\theta))$ follows a logistic-beta distribution with density $g(\eta)$, and
\begin{equation*}
    \lambda(y)=\log\int e^{\eta y - \psi(\eta)}g(\eta)d\eta=\log\frac{B(a+y,b+n-y)}{B(a,b)}
\end{equation*}
which is defined for $y\in\widetilde\calY=(-a,b+n)$.
For $y\in\{0,\dots,n\}$, the FAB estimator for $\eta$ is given by
$$
    \widehat\eta^{\FAB}(y) = \E[\eta\mid y] = \lambda'(y) = \digamma(a+y)-\digamma(b+n-y) \in \CRnp(y).
$$
Given that $\theta = \E[Y | \eta] / n = \psi'(\eta)/n = (1 + \exp(-\eta))^{-1}$, the FAB estimator for $\theta$ is given by $\widehat\theta^{\FAB}(y) = (1 + \exp(-\widehat\eta^{\FAB}(y)))^{-1} \in \CR(y)$.
In terms of FAB-CRs, when using a uniform prior ($a=b=1$), one recovers the regions for binomial proportions introduced by \citet{Sterne1954} and further discussed by \citet{Crow1956,Pratt1961}.
\Cref{fig:fab_binomial} illustrates the FAB procedure applied to the binomial likelihood.
\begin{figure}
  \centering
  \includegraphics[width=\textwidth]{./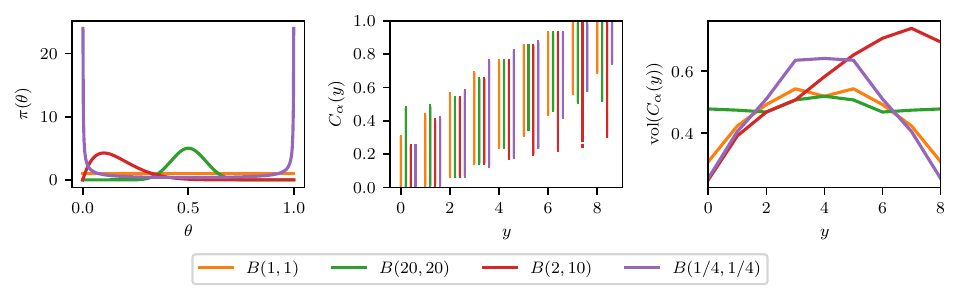}
  \caption{Comparison of the FAB procedure applied to the likelihood $Y \sim \mathrm{Binomial}(n, \theta)$ under a prior $\theta \sim \Beta(a, b)$ for different choices of $a$ and $b$, and where $n = 8$ and $\alpha = 0.1$.}
  \label{fig:fab_binomial}
\end{figure} 
\Cref{fig:fab_binomial_cc} shows representative $p$-value functions corresponding to the FAB-CR procedures in \cref{fig:fab_binomial}.
\begin{figure}
  \centering
  \includegraphics[width=0.8\textwidth]{./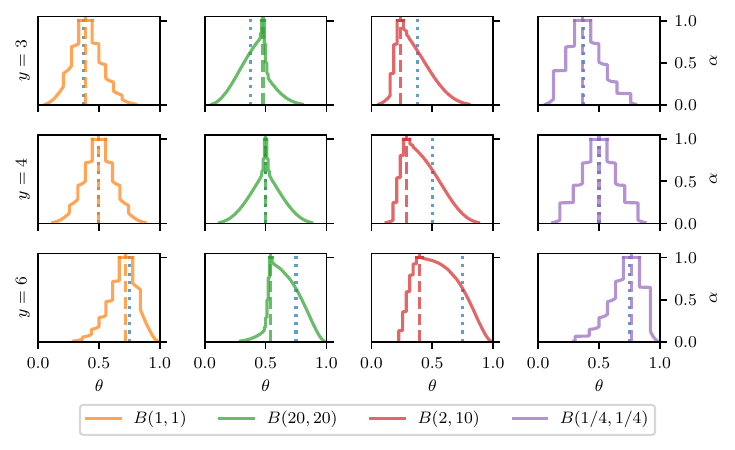}
  \caption{$p$-value functions for the FAB-CR procedure under the same priors and likelihood of \cref{fig:fab_binomial} when observing $y \in \{3, 4, 6\}$.}
  \label{fig:fab_binomial_cc}
\end{figure}
As above, the dotted vertical lines indicate the value of the MLE for $\theta$, $y / n$, while the dashed vertical lines colored according to the legend indicate the FAB estimates for the same quantity under the corresponding priors.
Also here, in accordance with \cref{thm:postmeaninCI_nef} and unlike the MLE, the FAB estimate is contained in the confidence region for all values of $\alpha$.

\subsubsection{Multinomial likelihood}

The construction for the binomial likelihood may be easily generalized to multinomial random variables, where using a uniform prior over the simplex (i.e., Dirichlet prior with parameters equal to 1) recovers the CRs of \citet{Chafai2009}; see also \citet{Malloy2020}.
\Cref{fig:fab_multinomial} illustrates the FAB procedure applied to the multinomial likelihood.
\begin{figure}
  \centering
  \includegraphics[width=0.7\textwidth]{./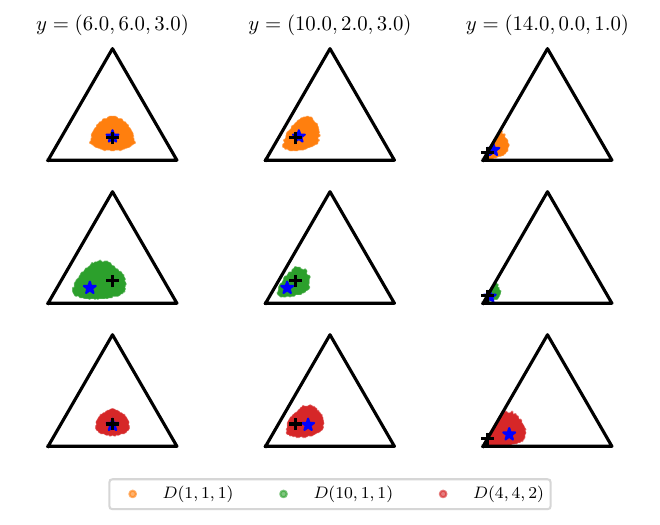}
  \caption{Comparison of the FAB procedure applied to the likelihood $Y \sim \mathrm{Multinomial}(n, k, \boldsymbol{\theta})$ under a prior $\boldsymbol{\theta} \sim \mathrm{Dirichlet}(\mathbf{a})$ for different choices of $\mathbf{a}$, and where $n = 15$, $k =3$ and $\alpha = 0.3$. Each column refers to a different value of the observed $\mathbf{y}$.}
  \label{fig:fab_multinomial}
\end{figure}
Plus (\texttt{+}) and star ($\star$) markers denote the MLE for $\theta$, $\mathbf{y}/n$ and the FAB estimate of $\boldsymbol{\theta}$, respectively.
The difference among the resulting FAB-CRs clearly shows the potential for the procedure to incorporate prior information in the more challenging multivariate setting.

\subsection{Application to linear regression}
\subsubsection{Additional baselines for the linear regression simulation}

\Cref{fig:fab_ols_baselines} adds Bayesian credible regions (highest posterior density regions) under the same priors as in \cref{fig:fab_ols}, as well as desparsified Lasso confidence intervals \citep{Zhang2013}, implemented in the \texttt{hdi} R package \citep{Dezeure2015}.
\begin{figure}
    \centering
    \includegraphics[width=\textwidth]{./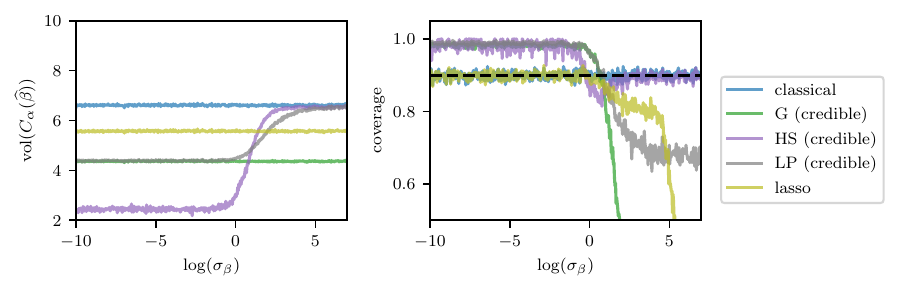}
    \caption{Linear regression simulation study with additional baselines. Comparison of Bayesian and frequentist baselines on the same experiment as in \cref{fig:fab_ols}. Results are averaged over $100$ repetitions.}
    \label{fig:fab_ols_baselines}
\end{figure}
While all methods effectively encourage sparsity, yielding smaller regions than the $z$-interval over a substantial range of $\sigma_\beta^2$ values, none consistently maintains nominal coverage uniformly over $\sigma_\beta^2$ in this experiment, whereas the FAB-CR procedures do so by construction.

\subsubsection{Selected CR realizations for the linear regression real-world study}
\Cref{fig:fab_ols_real_examples} reports selected realizations of the FAB-CRs and $z$-intervals for the linear regression real-world study of \cref{fig:fab_ols_real} in \cref{sec:linearregression}.
\begin{figure}
    \centering
    \includegraphics[width=0.9\textwidth]{./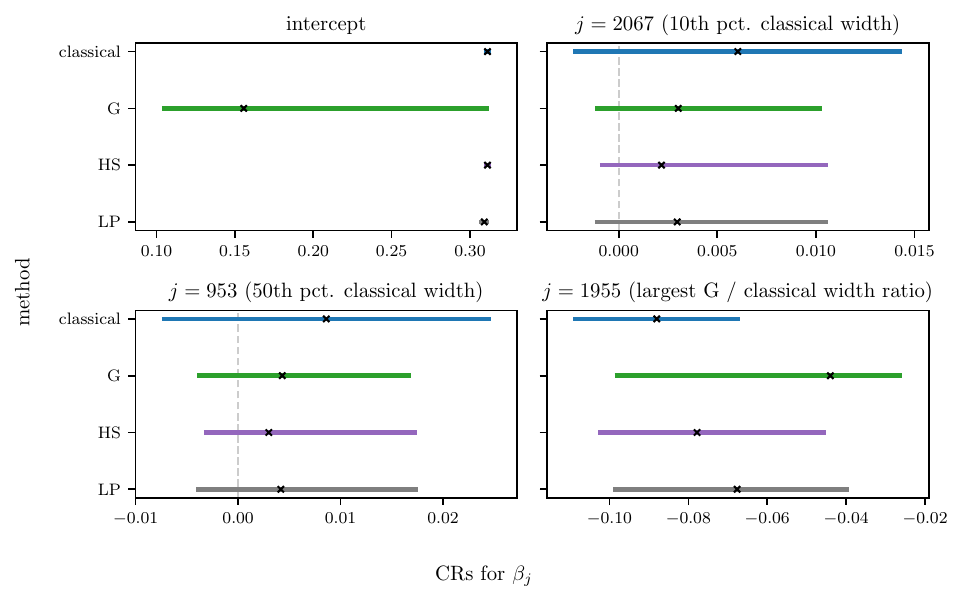}
    \caption{Selected CR realizations for the linear regression real-world study. FAB-CRs and $z$-intervals for the intercept and three representative non-intercept coefficients: the 10th- and 50th-percentile coefficients under the $z$-interval width ranking, and the non-intercept coefficient with the largest discrepancy between Gaussian FAB-CR and $z$-interval widths. Dashed vertical lines indicate $0$, and cross ($\times$) markers indicate each method's focal point, i.e., either the MLE or the FAB estimates.}
    \label{fig:fab_ols_real_examples}
\end{figure}
The plot highlights both the shrinkage induced by FAB near zero and the improved robustness of the heavier-tailed priors under prior--data conflict, as discussed in the main text.

\section{Implementation}
The Python code used to reproduce the figures in this article will be made publicly available upon publication.
All the examples presented here were run locally on an Apple M4 Pro CPU.

\subsection{Special functions and numerical evaluation in \texorpdfstring{\cref{sec:shrinkagepriors}}{Section 4}}
\label{app:specialfunctions}

As discussed in the main text, the FAB construction in \cref{sec:gaussianlikelihood} requires evaluating the marginal likelihood $f(y) = \int_\Theta f_\theta(y) \pi(d \theta)$ and, for the focal point, the posterior mean $\widehat{\theta}^{\FAB}(y) = \E[\theta \mid y]$.
Crucially, computing the FAB-CR relies on evaluating the weight function $w_{\alpha}(\theta_0)$ via a root-finding procedure over a grid of $\theta_0$ values.
This necessitates repeated evaluations of $f(y)$, making the computational cost of these functions an important consideration.
For the beta prime scale mixture priors in \cref{sec:vanishing_influence}, these quantities admit closed forms involving special functions of real arguments.

This section briefly discusses the special functions that arise \citep{Olver2010}, as well as strategies for their stable and efficient numerical evaluation.
For all simulations and experiments, we rely on the \href{https://docs.scipy.org/doc/scipy/reference/special.html}{\texttt{scipy.special}} implementation of the functions involved \citep{Virtanen2020} and perform computations on the log scale when possible.

\paragraph*{Kummer's confluent hypergeometric function $\OneFOne$ (\cref{prop:betaprime_marginal}).}

As shown in \cref{prop:betaprime_marginal}, the marginal likelihood $f(y)$ and the posterior mean $\widehat{\theta}^{\FAB}(y)$ for the beta prime scale mixture prior can be expressed in terms of Kummer's confluent hypergeometric function $\OneFOne$ as
\begin{equation}
    f(y) = \frac{1}{\sqrt{2 \pi \sigma^2}} \times \frac{\Gamma(a + 1/2) \Gamma(a + b)}{\Gamma(a) \Gamma(a + b + 1/2)} \times \OneFOne\left(a + \frac{1}{2}, a + b + \frac{1}{2}, -\frac{y^2}{2 \sigma^2} \right) \label{eq:betaprime_marginal}
\end{equation}
and
\begin{equation}
    \widehat{\theta}^\FAB(y) = y \times \left(1 - \frac{a + \frac{1}{2}}{a + b + \frac{1}{2}}\frac{\OneFOne\left(a + \frac{3}{2}, a + b + \frac{3}{2}, -\frac{y^2}{2 \sigma^2} \right)}{\OneFOne\left(a + \frac{1}{2}, a + b + \frac{1}{2}, -\frac{y^2}{2 \sigma^2} \right)}\right), \label{eq:betaprime_focal}
\end{equation}
respectively.
The function $\OneFOne(\alpha, \beta, z)$ is given by
\begin{equation*}
    \OneFOne(\alpha, \beta, z) = \sum_{k=0}^\infty \frac{(\alpha)_k}{(\beta)_k} \frac{z^k}{k!},
\end{equation*}
with $\alpha \in \mathbb{R}$, $\beta \in \mathbb{R} \setminus \{0, -1, -2, \dots\}$ and $z \in \mathbb{R}$, and where $(\alpha)_k$ denotes the rising factorial.
For the purpose of evaluating \cref{eq:betaprime_marginal,eq:betaprime_focal}, we are interested in the regime where $\alpha, \beta > 0$ and $z = -y^2 / (2 \sigma^2) \le 0$.
While $z \le 0$ avoids the exponential growth seen when $z \gg 0$, evaluating the naive power series for large negative $z$ is highly unstable due to the catastrophic cancellation of alternating signs.
In practice, we rely on \texttt{scipy.special.hyp1f1}, which is designed to evaluate ${}_1F_1$ stably across argument regimes, including large negative $z$.
However, this implementation relies on parameter-dependent algorithm switching, which can make $\OneFOne$ more computationally intensive than the simplified cases below when evaluated repeatedly within root-finding.
As discussed in \cref{sec:vanishing_influence}, several choices of $(a,b)$ lead to simplifications of \cref{eq:betaprime_marginal} in terms of more elementary functions, which help reduce the computational overhead of computing $f(y)$ repeatedly.

\paragraph*{Dawson's function (\cref{ex:horseshoe}).}
For $(a,b)= (1/2,1/2)$, the marginal likelihood $f(y)$ simplifies to
\begin{equation*}
    f(y) = \frac{2}{\pi^{3/2}} \frac{1}{|y|} D\left(\frac{|y|}{\sqrt{2 \sigma^2}}\right),
\end{equation*}
where $D(z)$ is Dawson's function, defined as
\begin{equation*}
    D(z) = e^{-z^2} \int_0^z e^{t^2} dt
\end{equation*}
for $z \in \mathbb{R}$.
The function $D(z)$ enters the expression for $f(y)$ proportionally to $D(t)/t$ with $t = |y| / \sqrt{2 \sigma^2}$.
Direct evaluation is numerically stable for most $t$, and we handle the removable singularity at the origin by directly substituting $\lim_{t \to 0} D(t)/t = 1$.

\paragraph*{Modified Bessel functions (\cref{ex:modified_bessel}).}
For $(a,b)=(1,1/2)$, the marginal likelihood $f(y)$ simplifies to
\begin{equation*}
    f(y) = \frac{1}{\sqrt{2 \pi \sigma^2}} \times \frac{\pi}{4} \times \exp\left(-\frac{y^2}{4 \sigma^2}\right) \times \left[I_0\left(\frac{y^2}{4 \sigma^2}\right) - I_1\left(\frac{y^2}{4 \sigma^2}\right)\right],
\end{equation*}
where $I_n(z)$ is the $n$-th order modified Bessel function of the first kind, given by
\begin{equation*}
    I_n(z) = \sum_{k=0}^\infty \frac{1}{k! \Gamma(k + n + 1)} \left(\frac{z}{2}\right)^{2k + n}
\end{equation*}
for $n \in \mathbb{N}$ and $z \in \mathbb{R}$.
The function $I_n(z)$ enters the expression for $f(y)$ through the term $e^{-t} (I_0(t) - I_1(t))$ with $t = y^2 / (4 \sigma^2)$.
Because $I_n(t)$ grows exponentially, direct evaluation will overflow for large $t$.
We avoid this by relying on the exponentially scaled version $e^{-t}I_n(t)$ provided by SciPy (\texttt{scipy.special.ive}) to ensure numerical stability.

\paragraph*{Elementary functions (\cref{ex:pareto}).}
For $(a,b) = (1/2, 1)$, the marginal likelihood $f(y)$ simplifies to an expression involving only elementary functions:
\begin{equation*}
    f(y) = \frac{\sigma}{\sqrt{2\pi}} \frac{1 - \exp(-y^2/(2\sigma^2))}{y^2}
\end{equation*}
for $y \neq 0$, and $f(0) = 1/(2\sigma\sqrt{2\pi})$.
While highly optimized libraries like SciPy make the empirical runtime difference between evaluating this and the special functions above small, this closed form circumvents the need for specialized routines entirely.
It represents an attractive, highly portable alternative for deploying robust FAB-CR procedures in software environments lacking comprehensive special function support.

\end{document}